\documentclass{article}

\usepackage{microtype}
\usepackage{graphicx}
\usepackage{svg}
\usepackage{booktabs, makecell, xcolor}

\usepackage{bm}
\usepackage{subcaption}
\usepackage{booktabs}
\usepackage{tabularx}

\usepackage{hyperref}

\usepackage[accepted]{icml2026}

\usepackage{amsmath}
\usepackage{amssymb}
\usepackage{mathtools}
\usepackage{amsthm}
\usepackage{bm}
\usepackage{color}
\usepackage{booktabs}

\usepackage[capitalize,noabbrev]{cleveref}

\theoremstyle{plain}
\newtheorem{theorem}{Theorem}[section]

\theoremstyle{definition}

\theoremstyle{remark}
\newtheorem{remark}[theorem]{Remark}
\newtheorem{conjecture}{Conjecture}

\usepackage[textsize=tiny]{todonotes}

\icmltitlerunning{CINOC: Cardinality-Invariant Neural Operator Policies for Scalable PDE Control}

\begin{document}

\twocolumn[
  \icmltitle{CINOC: Cardinality-Invariant Neural Operator Policies for Scalable PDE Control}



  \icmlsetsymbol{equal}{*}

  \begin{icmlauthorlist}
    \icmlauthor{Pietro Zanotta}{equal,yyy}
    \icmlauthor{Dibakar Roy Sarkar}{equal,yyy}
    \icmlauthor{Honghui Zheng}{equal,yyy}
    \icmlauthor{Somdatta Goswami}{yyy}
    \icmlauthor{Ján Drgoňa}{yyy}
  \end{icmlauthorlist}

  \icmlaffiliation{yyy}{Department of Civil and System Engineering, Johns Hopkins University, Baltimore, USA}
  \icmlcorrespondingauthor{Pietro Zanotta}{pzanott1@jh.edu}

  \icmlkeywords{Machine Learning, ICML}

  \vskip 0.3in
]



\printAffiliationsAndNotice{\icmlEqualContribution}


\begin{abstract}
    Controlling partial differential equations (PDEs) with learning-based policies remains fundamentally limited by fixed-dimensional representations: policies trained for a specific sensor, actuator, or agent configuration typically fail when the configuration changes. This limitation is particularly severe in multi-agent PDE control, where policies do not scale across population sizes without retraining. We address this challenge by introducing \textbf{C}ardinality \textbf{I}nvariant \textbf{N}eural \textbf{O}perator \textbf{C}ontrol (\textbf{CINOC}), reformulating PDE control as an operator learning problem that maps state fields to continuous control functions and trains them end-to-end through differentiable PDE solvers, yielding policies that naturally adapt to varying sensor and actuator configurations. Remarkably, CINOC policies trained on small swarms exhibit cardinality invariance, allowing for zero-shot transfer to significantly larger populations as well as robustness to partial agent failure. This scalability arises from agents sharing a common policy and coordinating through their physical environment, which produces an emergent self-normalization effect. To explain this phenomenon, we provide a theorem grounded in mean-field theory demonstrating that policy gradients computed from finite-agent systems converge to those of a continuous control limit. Empirically, we validate CINOC on tracking, stabilization, and density transport across linear, nonlinear, chaotic, and turbulent PDEs.
\end{abstract}

\begin{figure}[htb!]
    \centering
    \includegraphics[width=1.0\linewidth]{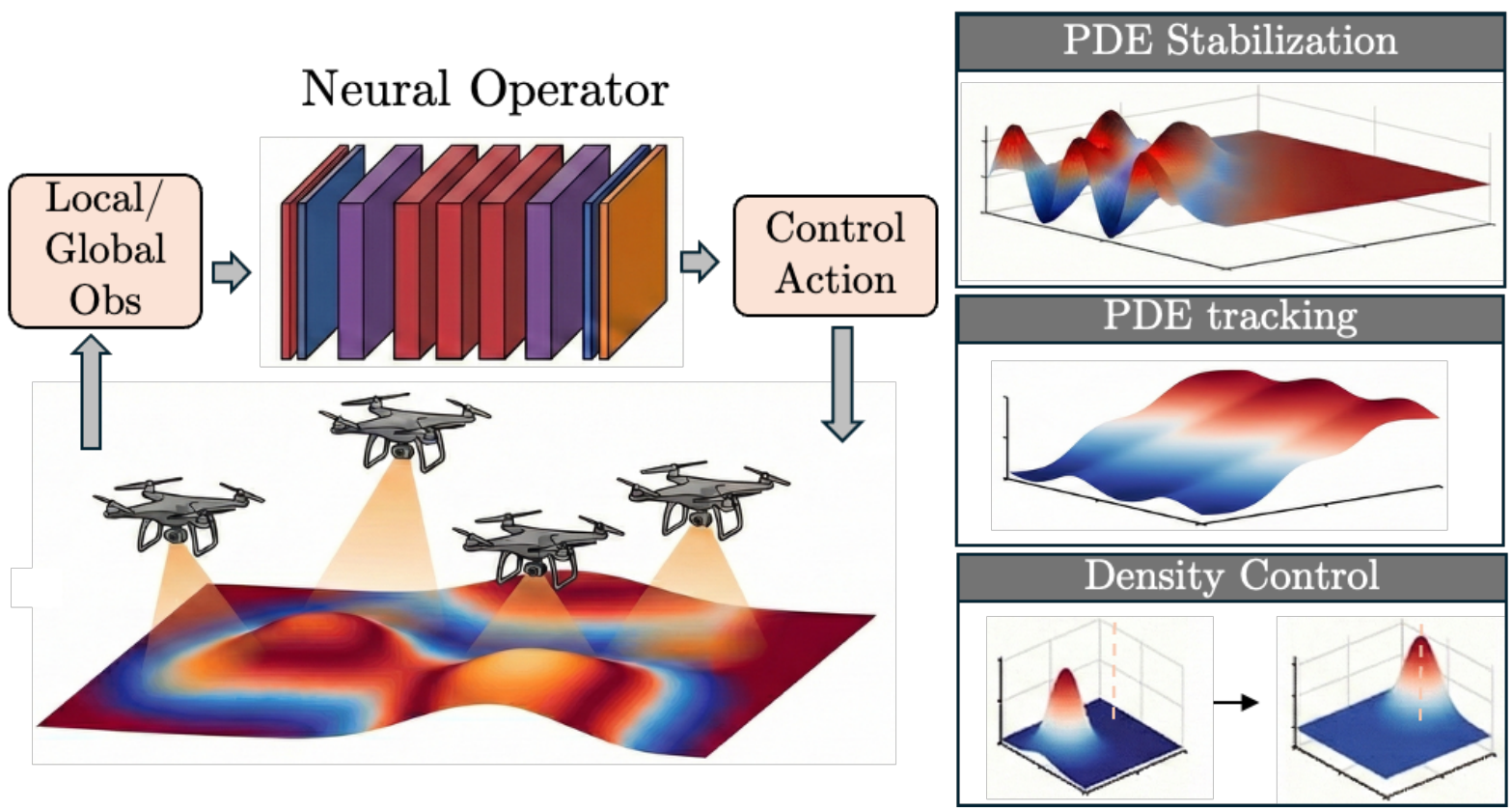}
    \caption{\textbf{Cardinality-Invariant Neural Operator Policy.} A shared neural operator maps PDE field observations to control actions for multiple agents, enabling zero-shot scalability across agent populations. Framework handles diverse tasks: (top right) chaos stabilization, (middle right) reference tracking, and (bottom right) density transport.}
    \label{fig:graphical_abstract}
\end{figure}

\section{Introduction}
Partial differential equations (PDEs) govern a vast array of physical processes, from turbulent flows~\citep{kalnay2003atmospheric, pope2001turbulent} to heat transfer and chemical reactions~\citep{crank1979mathematics, turing1990chemical}. Reliable control of these systems is central to many engineering applications, yet remains challenging because PDEs evolve in infinite-dimensional function spaces and exhibit complex spatiotemporal dynamics. Classical control theory provides rigorous analytical frameworks, but these methods typically demand accurate system models, computationally intensive online optimization, or problem-specific designs that do not transfer across domains~\citep{krstic2008boundary, kouvaritakis2016model}.

Learning-based approaches offer a promising alternative by approximating control policies from data, bypassing the need for explicit models or online solvers~\citep{farahmand2017deep, rabault2019artificial, bhan2024pde, zhang2024controlgym,becktepe-fluidgym26}. However, existing methods face a critical architectural bottleneck: they parameterize policies with discretized, fixed-dimensional representations. Consequently, any change in the number of sensors, actuators, or multi-agent swarm size generally requires the policy architecture to be redesigned and retrained from scratch.

We address these limitations by introducing \textbf{C}ardinality \textbf{I}nvariant \textbf{N}eural \textbf{O}perator \textbf{C}ontrol (\textbf{CINOC}), reformulating policy synthesis as an operator learning problem~\citep{lu2021learning, li2021fourier}. Our policy maps the current PDE state field to a continuous control function over the spatial domain, trained end-to-end through differentiable PDE solvers~\citep{holl_koltun_thuerey_2020}. Figure~\ref{fig:graphical_abstract} illustrates this framework, which accommodates fixed and mobile actuators~\citep{demetriou2010guidance}, global and local sensor fields, and centralized- or decentralized agent configurations. In the decentralized setting, agents sharing a common policy coordinate implicitly through the PDE-governed environment. This stigmergic interaction~\citep{theraulaz1999brief} enables policies trained on small swarms to transfer zero-shot to larger populations, with graceful degradation under partial agent failures.

\textbf{Contributions:}
We make the following contributions.

\textbf{(1) Neural operator framework for PDE control.}
We introduce CINOC, which parameterizes PDE control policies as neural operators, yielding configuration-invariant policies

\textbf{(2) Differentiable-physics-based policy optimization.}
We train control policies end-to-end by differentiating through PDE solvers, providing efficient and scalable policy gradients for a broad class of PDE control problems.

\textbf{(3) Cardinality-invariant multi-agent control.}
A key discovery in this work is that CINOC's shared policies can induce complex multi-agent control via environment-mediated (stigmergic) coupling. We identify an emergent self-normalization phenomenon that enables cardinality-invariant behavior. 
 
\textbf{(4) Mean-field analysis of policy gradient.}
To ground these empirical findings, we provide a theoretical analysis linking finite-population gradient learning to a mean-field limit, leading to the cardinality invariance property.

\textbf{(5) Empirical evaluation across diverse PDE regimes.}
We demonstrate stabilization, tracking, and density transport across diverse PDEs, including nonlinear reaction, diffusion, and chaotic spatiotemporal dynamics, validating scalability across sensing, actuation, and population configurations.
\section{Related Work}

\textbf{Classical PDE Control:}
Classical methods face a fundamental tension between mathematical rigor and real-time feasibility. Model Predictive Control (MPC) is effective but computationally demanding~\citep{kouvaritakis2016model}, while backstepping offers stability guarantees yet is bottlenecked by complex kernel computations~\citep{krstic2008boundary, qi2024neural}. Adjoint-based methods provide gradients but scale poorly with state dimension~\citep{gunzburger2002perspectives}. To address dimensionality, research has pursued finite-dimensional reductions~\citep{lunasin2015finite} and optimal actuator/sensor placement~\citep{morris2010linear, guo2015optimal, privat2015optimal}. For parabolic systems, explicit stabilization with finitely many actuators provides constructive guarantees~\citep{kunisch2019explicit}. Classical frameworks also address mobile actuator-sensor networks~\citep{demetriou2010guidance} and spatially scheduled actuation~\citep{iftime2009optimal}. While powerful under complete model assumptions, these methods rely on centralized synthesis and fixed actuator configurations.

\textbf{RL-based PDE Control:}
Reinforcement learning offers a model-free alternative, avoiding explicit system models or adjoint computations. \citet{farahmand2017deep} introduced Deep Fitted Q-Iteration with transfer learning capabilities. Addressing high-dimensional action spaces, \citet{pan2018reinforcement} proposed action descriptors encoding spatial regularities, inspiring our operator-based formulation where policies output continuous control functions. Applications to fluid control validated this paradigm~\citep{rabault2019artificial, pirmorad2021deep}. \citet{peitz2024distributed} exploited translational equivariance for zero-shot domain transfer, demonstrating that architectural symmetries enable generalization. Integration with neural operators has enabled adaptive control~\citep{hu2025neural}. Standardized benchmarks like PDE Control Gym~\citep{bhan2024pde} and Controlgym~\citep{zhang2024controlgym} now provide reproducible testbeds. 

Multi-agent reinforcement learning (MARL)~\citep{lowe2017multi, huttenrauch2019deep} has been applied to distributed control problems. However, these methods face fundamental limitations for our setting: they require fixed state-action dimensions tied to a specific swarm size, precluding zero-shot transfer across population sizes, and rely on high-variance gradient estimators rather than exact gradients from differentiable physics. Although preliminary MARL studies~\cite{GUO2025130716, nikolaos_bousias_stefanos_pertigkiozoglou_kostas_daniilidis_pappas_2025, wang_zhong_chang_allen-blanchette_2025} exhibit zero-shot scalability, they achieve this via architectural mechanisms such as parameter sharing, GNNs, and equivariance constraints. In fact, while Permutation-Invariant MARL (PI-MARL) methods achieve variable-size scalability via attention, graph networks, or hypernetworks that embed explicit permutation invariance and equivariance \cite{jianye2022hpn, hu2021updet, wen2022mat, park2025spectra}, a direct empirical comparison is infeasible. Furthermore, while empirical demonstrations of cardinality invariance have recently emerged in fluid dynamics, such as the use of MARL for turbulent drag reduction in channel flows \cite{guastoni2023deep} and the control of three-dimensional Rayleigh-Bénard convection \cite{vasanth2025multi}, these works treat the property primarily as an empirical consequence of the system's spatial translational invariance.

To rigorously evaluate our approach against standard continuous control paradigms, we adapted state-of-the-art single-agent and multi-agent reinforcement learning algorithms to our PDE environments. Complete architectural and algorithmic details for these baselines are deferred to Appendix~\ref{app:baselines}.

\textbf{Differentiable Physics for Control:} 
Differentiable physics engines enable backpropagation of control objectives through system dynamics, providing scalable gradient-based control across a wide range of applications~\cite {Qiao2020Scalable,NEURIPS2018_842424a1}. In the context of PDE control, \citet{holl_koltun_thuerey_2020} demonstrated end-to-end training with differentiable solvers, ensuring strict physical consistency. \citet{drgovna2024learning} used this paradigm to learn policies for parametric MPC problems. 
Recent open-source libraries have further broadened accessibility to physics-based learning by providing tools for differentiable simulation and control~\cite{murthy2021gradsim,gradu2020deluca,Neuromancer2023}. These advances have enabled scalable gradient-based control in robotics, including control of quadcopters, locomotion, and contact-rich dynamics~\cite{Schwarke2024LearningDL,song2024learning,pan2026learning}.
This paradigm forms the computational backbone of our training algorithm.

\textbf{Operator Learning for Control:}
Neural operators (NOs), such as DeepONet and FNO~\citep{lu2021learning, li2021fourier}, learn mappings between infinite-dimensional function spaces. 
NOs can accelerate model-based controllers while preserving theoretical guarantees. Neural backstepping approximates parameter-to-gain kernel mappings~\citep{bhan2023neural, vazquez2024gain, qi2024neural, lamarque2025gain}, achieving substantial speedups while retaining stability certificates~\citep{bhan2025adaptive, abdolbaghi2025backstepping}. This demonstrates operator learning can encode control-theoretic structures, motivating our investigation of policies as operators. Physics-informed approaches embed PDE residuals and Port-Hamiltonian structure~\citep{li2025napi}, while safety-critical extensions combine NOs with control barrier functions~\citep{hu2024safe, huuncertain}.

NOs can serve as differentiable surrogates within optimization loops. \citet{wang2021fast} compressed PDE-constrained optimization using self-supervised DeepONets, while \citet{de2025deep} introduced multi-step DeepONet for MPC. \citet{zhao2025physics} achieved drag reduction with zero-shot Reynolds number generalization. 
\citet{hwang2022solving} employs NOs within PDE-constrained optimization, achieving substantial speedups. \citet{fabiani2025equation} utilizes local NOs for equation-free control, while \citet{lundqvist2025residual} combines NOs for dynamics prediction with physics-informed loss terms and online learning. Similarly, \citet{guven2025learning} use DeepOnet, while
\citet{sarkar2025learning} use time-integrated DeepONet for offline learning of neural policies.
While these methods leverage NOs for dynamics representation, our approach instead places operator learning at the core of control synthesis, unifying policy parameterization and optimization within a single operator-learning framework.

The literature establishes powerful tools for PDE control. However, a critical gap remains: existing approaches rely on fixed-dimensional state-action representations, preventing scalability to varying actuator counts without complete retraining. This architectural constraint also limits decentralized coordination when agents operate with only local observations. Our work addresses this gap by reformulating policy learning as an operator learning problem, directly parameterizing policies as neural operators mapping state fields to control functions. This formulation yields policies with emergent properties including cardinality invariance across actuator counts, effective decentralized coordination despite sparse local sensing, and mean-field consistent gradient learning in the large-population limit.

\section{Problem Formulation}
We consider the optimal control of a time-dependent state field $z(\bm{x}, t) \in \mathcal{Z}$ over a bounded spatial domain $\Omega \subset \mathbb{R}^d$ and a finite time horizon $t \in [0, T]$. The system is actuated by $M$ agents, where the $i$-th agent is located at $\bm{\xi}_i(t) \in \Omega$ and applies a localized control intensity $u_i(t) \in \mathbb{R}$. For mobile agents, the position evolves according to a controlled velocity $\bm{v}_i(t) \in \mathbb{R}^d$. Our goal is to synthesize joint control trajectories for the actions $\bm{u}^M(t) = \{u_1(t), \dots, u_M(t)\}$ and velocities $\bm{v}^M(t) = \{\bm{v}_1(t), \dots, \bm{v}_M(t)\}$ that minimize a cumulative cost subject to the governing PDE dynamics and physical constraints.

Let $\bm{\psi} \sim \mathcal{D}_{\text{prob}}$ represent a problem instance drawn from a distribution of physical parameters. Each instance $\bm{\psi} = (z_0, z_{\text{target}}, \bm{c}_{\text{env}})$ defines the initial condition $z_0(\bm{x})$, the target state $z_{\text{target}}(\bm{x})$, and environment-specific constraints $\bm{c}_{\text{env}}$ (e.g., static obstacles).

The general multi-agent PDE control problem is formulated as the problem of finding the optimal control variables to minimize the expected cost:
\begin{equation}
    \begin{aligned}
        \min_{\bm{u}^M, \bm{v}^M} & \mathcal{J} = \mathbb{E}_{\bm{\psi} \sim \mathcal{D}_{\text{prob}}} \left[ \int_0^T \ell\big(z(\bm{x}, t), \bm{u}^M(t), \bm{v}^M(t), \bm{\psi}\big) dt \right] \\
        \text{s.t.} & \frac{\partial z}{\partial t} = \mathcal{F}\left(z(\bm{x}, t); \bm{\psi}\right) + \sum_{i=1}^M \mathcal{B}\big(\bm{x}, u_i(t), \bm{\xi}_i(t)\big), \\
        & \dot{\bm{\xi}}_i(t) = \bm{v}_i(t), \quad i=1, \dots, M, \\
        & \bm{c}\big(z(\bm{x}, t), \bm{u}^M(t), \bm{v}^M(t), \bm{\xi}^M(t), \bm{\psi}\big) \leq 0.
    \end{aligned}
    \label{eq:optimization_problem}
\end{equation}
Here, the loss functional $\ell$ maps the global state and control efforts to a scalar cost, penalizing deviations from $z_{\text{target}}$ and excessive energy expenditure. The nonlinear differential operator $\mathcal{F}: \mathcal{Z} \to \mathcal{Z}$ governs the intrinsic PDE dynamics. The actuation map $\mathcal{B}: \Omega \times \mathbb{R} \times \Omega \to \mathcal{Z}$ projects the scalar control input $u_i(t)$ into the spatial domain around the agent's current location $\bm{\xi}_i(t)$. For static agents, the kinematic constraint simply becomes $\dot{\bm{\xi}}_i(t) = \bm{0}$. Finally, the vector-valued function $\bm{c}$ enforces constraints such as maximum control amplitudes, kinematic limits, and spatial obstacle avoidance.

Throughout our analysis and empirical evaluations, we operate under the assumption that the underlying PDE systems are locally observable and controllable given the chosen agent densities, spatial distributions, and sensor radii. A dedicated discussion detailing how these physical constraints inform our framework's applicability and deployment is provided in Appendix \ref{app:deployment_guidelines}.

This formulation describes the fundamental continuous control problem independent of the chosen solution architecture. In Section~\ref{sec:meth}, we introduce our specific methodology, which tackles this problem by parameterizing the decentralized control actions $[u_i(t), \bm{v}_i(t)]^\top$ as the output of a shared neural operator policy $\mathcal{G}_{\bm{\theta}}$ acting strictly on local field observations.

\section{Methodology}
\label{sec:meth}
We propose a scalable, self-supervised learning framework for PDE control that seamlessly integrates neural operators with differentiable physics. To enable cardinality invariance, the ability to control swarms of varying size $M$ with a single policy, we recast control synthesis as a shared operator learning problem in which all agents query a shared neural operator policy. As illustrated in Figure~\ref{fig:dpc_schematic} and detailed in Algorithm~\ref{algo:unified}, our method consists of four coupled differentiable components arranged in a closed loop: (i)~an observation operator that samples and processes the PDE state within the field of view of each actuator; (ii)~a neural operator policy that maps local observations and agent coordinates to control actions; (iii)~a differentiable PDE solver that evolves the system state; and (iv)~a learning objective that evaluates control performance and backpropagates gradients through the physics to update the policy parameters.


\begin{figure}
    \centering
    \includegraphics[width=1.0\linewidth]{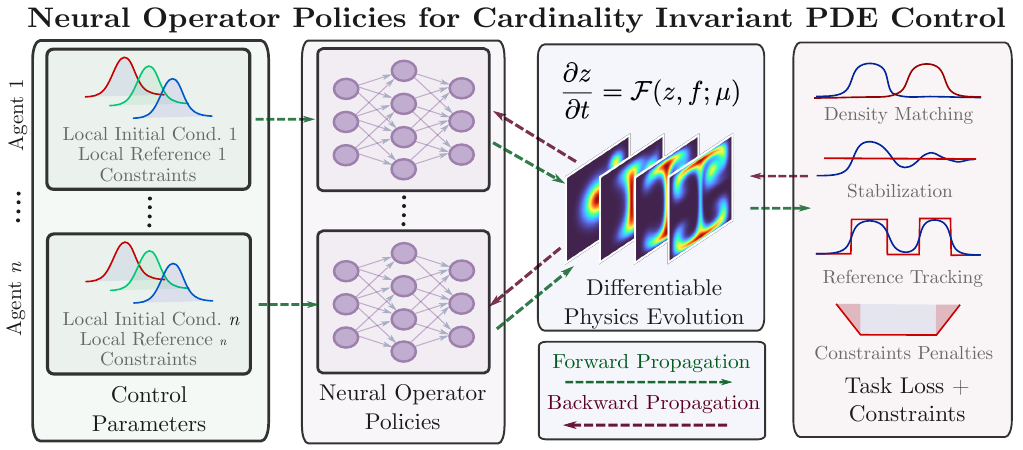}
    \caption{Schematic of the multi-agent operator learning for PDE control. The PDE solver acts as a differentiable layer. The operator policy $\mathcal{G}_{\bm{\theta}}$ (shared among the agents) takes error observations and outputs control actions. Gradients propagate through the solver to update $\bm{\theta}$.} 
    \label{fig:dpc_schematic}
\end{figure}

\textbf{Observation operator.}
Agents in large scale systems typically lack global visibility; therefore, we employ an observation operator $\mathcal{O}bs: \mathcal{Z} \times \Omega \to \mathcal{Z}_{\text{local}}$ that subtracts $z_{\text{target}}$ from $z(\bm{x},t)$ to form a task-relevant error representation, and extracts a local field of view $z^\text{patch}_{i}(t) \in \mathcal{Z}_{\text{local}}$ relative to the $i$-th agent's position $\bm{\xi}_i$, aligned with the PDE mesh:
\begin{equation}
    z^\text{patch}_{i}(t) = \mathcal{O}bs\!\big(z(\bm{x}, t), \bm{\xi}_i(t), \bm{\psi}\big).
\end{equation}
Processing the local field includes evaluating the pointwise error of the current and target states, as well as spatial derivatives computed on the mesh.

\textbf{Neural Operator Policy.}
We define the policy $\mathcal{G}_{\bm{\theta}}$ as a mapping from the local observation space $\mathcal{Z}_{\text{local}}$ and local agent coordinates $\Omega$ to the control space $\mathcal{U}$,
$\mathcal{G}_{\bm{\theta}}: \mathcal{Z}_{\text{local}} \times \Omega \to \mathcal{U}$.


Each agent $i$ queries this shared operator using its local observation $z^\text{patch}_{i}$ and coordinate $\bm{\xi}_i$.
We parameterize $\mathcal{G}_{\bm{\theta}}$ using DeepONet~\citep{lu2021learning}, where a branch net $\mathrm{Br}_{\bm{\theta}_2}$ encodes the local field and a trunk net $\mathrm{Tr}_{\bm{\theta}_3}$ encodes the spatial coordinate. A final feed-forward net $\chi_{\bm{\theta}_1}$ fuses these representations to the control action:
\begin{equation}
    {u}_i(t) = \mathcal{G}_{\bm{\theta}}(z^\text{patch}_{i}, \bm{\xi}_i)
    = \chi_{\bm{\theta}_1} \left( \mathrm{Br}_{\bm{\theta}_2}(z^\text{patch}_{i}) \cdot \mathrm{Tr}_{\bm{\theta}_3}(\bm{\xi}_i) \right).
\end{equation}
Here $\cdot$ denotes the dot product. For \textcolor{blue}{kinematic} actuators, the same operator can synthesize a composite action $[{u}_i(t), \bm{v}_i(t)]^\top$.
Crucially, the parameters $\bm{\theta}=\{\bm{\theta}_1,\bm{\theta}_2,\bm{\theta}_3\}$ are shared across all $i=1,\dots,M$ actuators, ensuring permutation invariance and constant input dimensionality regardless of $M$.
Unlike standard MARL frameworks where agents often hold distinct roles, our actuators form a homogeneous swarm deployed to cooperatively solve a unified task. Parameter sharing is therefore a logical reflection of this physical homogeneity.

\textbf{Differentiable Physics and Actuation.}
\label{sec:phys_diff}
To train $\bm{\theta}$ without expert supervision, we embed the physical dynamics directly into the optimization loop using a Discretize-then-Optimize (DtO) approach. We define the solver $\Psi$ as the transition operator that explicitly solves the discretization form of the PDE dynamics $\mathcal F$ from Eq.~\ref{eq:optimization_problem}:
\begin{equation}
    z_{t+1}, \{\bm{\xi}_{i, t+1}\}_{i=1}^M = \Psi(z_t, \{\bm{\xi}_{i, t}\}_{i=1}^M, \bm{u}^M(t); \Delta t).
\end{equation}
The solver $\Psi$ aggregates the individual control actions $\bm{u}^M(t)$ and advances the state field $z_{t+1}$. Simultaneously, it updates the actuator positions based on the agent type. For static actuators positions remain fixed, $\bm{\xi}_{i, t+1} = \bm{\xi}_{i, t}$. The policy outputs strictly forcing intensity ${u}_i$, while for kinetic actuators the policy outputs a composite vector $[{u}_i, \bm{v}_i]^\top$. The solver integrates the velocity $\bm{v}_i$ to update positions, $\bm{\xi}_{i, t+1} = \bm{\xi}_{i, t} + \bm{v}_i \Delta t$.
Because $\Psi$ is implemented via automatic differentiation, it provides exact gradients $\partial z_{t+1} / \partial \bm{\theta}$, linking the system state directly to the policy actions. Notably, our formulation remains compatible with the optimize-then-discretize (OtD) framework \citep{ricky_rubanova_bettencourt_duvenaud_2018}.

\textbf{Learning Objectives.}
The training paradigm minimizes the expected cumulative cost over sampled problem instances $\bm{\psi}$. The training loss function is explicitly parameterized by the swarm size $M$ to balance performance and total control effort. For trajectory tracking tasks, we minimize:
\begin{equation}
\label{loss1}
    \mathcal{J}(\bm{\theta}) = \sum_{t=0}^T \left( \| z_t - z_{\text{target}} \|^2_{\mathcal{Z}} + \lambda \| \bm{u}^M(t) \|^2 \right). 
\end{equation}
Here, the weight $\lambda$ penalizes the aggregate control energy. For density matching problems, the tracking error norm is replaced by a Moment Matching loss $\mathcal{W}(z_t, z_{\text{target}})$:
\begin{equation}
    \label{loss2}
    \mathcal{J}(\bm{\theta}) = \sum_{t=0}^T \left( \mathcal{W}(z_t, z_{\text{target}})  + \lambda \| \bm{u}^M(t) \|^2 \right).
\end{equation} The parameters $\bm{\theta}$ are updated via stochastic gradient descent, with gradients computed by backpropagating the loss through the physics solver $\Psi$ (BPTT)~\cite{werbos1990bptt}.

\begin{algorithm}
    \caption{Self-Supervised Operator Policy Training}
    \label{algo:unified} 
    \begin{algorithmic}[1]
    \STATE \textbf{Input:} distribution $\mathcal{D}_{\text{prob}}$, horizon $T$, number of agents $M$
    \STATE \textbf{Initialize:} shared policy parameters $\bm{\theta}$
    \WHILE{not converged}
    \STATE \COMMENT{\textbf{1. Data Sampling}}
    \STATE Sample instances $\bm{\psi} = (z_0, z_{\text{target}}, \dots) \sim \mathcal{D}_{\text{prob}}$
    \STATE Initialize state $z_0$, actuator positions $\{\bm{\xi}_{i,t=0}\}_{i=1}^M$
    \FOR{$t = 0$ to $T$}
    \STATE \COMMENT{\textbf{2. Shared Policy Inference}}
    \FOR{$i = 1$ to $M$}
    \STATE $z^\text{patch}_{i} \leftarrow \mathcal{O}bs(z_t, \bm{\xi}_{i,t}, \bm{\psi})$
    \STATE $[{u}_i(t), \bm{v}_i]^\top \leftarrow \mathcal{G}_{\bm{\theta}}(z^\text{patch}_{i}, \bm{\xi}_{i,t})$
    \ENDFOR
    \STATE \COMMENT{\textbf{3. Differentiable Physics Step}}
    \STATE $z_{t+1}, \{\bm{\xi}_{t+1}\} \leftarrow \Psi(z_t, \{\bm{\xi}_t\}, \bm{u}^M_t)$
    \STATE \COMMENT{\textbf{4. Loss}}
    \STATE $\mathcal{J} \leftarrow \mathcal{J} + \ell(z_{t+1}, \bm{u}^M_t, \bm{\psi})$
    \ENDFOR
    \STATE \textbf{Update:} $\bm{\theta} \leftarrow \bm{\theta} - \eta \nabla_{\bm{\theta}} \mathcal{J}$ \quad \COMMENT{Backprop via Physics}
    \ENDWHILE \end{algorithmic}
\end{algorithm}

\section{Theoretical Analysis}
\label{sec:theory}

In this section, we provide a theoretical foundation for the convergence and scalability of our multi-agent framework. We focus on two key questions: (1) Does training on a finite swarm of size $M$ optimize the underlying continuous control problem? and (2) What mechanism allows a policy trained on one swarm size to transfer zero-shot to another?

To establish convergence and scalability, we analyze the system in the large-population limit ($M \rightarrow \infty$). For readers less familiar with the functional analytic setting (e.g., Gelfand triples, Aubin-Lions lemma \cite{simon1986compact}) and the mean-field measure convergence concepts utilized in these proofs, we provide a detailed primer in Appendix \ref{app:preliminaries}.

\subsection{Mean-Field Gradient Consistency}
To ensure CINOC is not simply overfitting to a specific swarm configuration, we analyze the system behavior in the large-population limit ($M \to \infty$). Intuitively, as the number of agents increases, the discrete collection of actuators converges to a continuous actuator density measure. By employing mean-field theory, we define a limiting mean-field forcing operator $\mathcal{B}_\infty(z)$ that acts on this density rather than on discrete agent positions.

Our primary theoretical result establishes that the gradients computed during our discrete multi-agent training are consistent estimators of the gradients for this continuous limit.

\begin{theorem}[Consistency of Discrete Policy Gradients (Informal)]
    \label{thm:gradient_consistency_short}
    As the number of agents $M \to \infty$, the discrete policy gradient $\nabla_{\bm{\theta}} \mathcal{J}_M$ converges to the mean-field gradient $\mathcal{D}_{\bm{\theta}} \mathcal{J}_\infty$. (Formal statement and proof in Appendix~\ref{sec:appendix_proof}).
\end{theorem}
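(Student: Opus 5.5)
The plan is to view the $M$-agent closed loop as a PDE forced by the empirical actuator measure $\mu^M_t = \frac{1}{M}\sum_{i=1}^M \delta_{(\bm{\xi}_i(t),\,u_i(t))}$. The problem then becomes a mean-field limit for a coupled PDE--particle system, and gradient consistency follows from stability of the forward and adjoint equations with respect to the forcing measure.

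First I would fix the scaling. The actuation term is written as $\frac{1}{M}\sum_i \mathcal{B}(\bm{x},u_i,\bm{\xi}_i)$, with the $1/M$ absorbed into $\mathcal{B}$ or $\lambda$; this mirrors the self-normalization discussed in the paper. It can be written as $\int \mathcal{B}(\bm{x},u,\bm{\xi})\,d\mu^M_t$. I would assume the following:
\begin{itemize}
\item $\mathcal{F}$ is locally Lipschitz and monotone-type on a Gelfand triple $V\subset H\subset V^*$.
\item $\mathcal{B}$ is Lipschitz in $(u,\bm{\xi})$ with smooth spatial kernel.
\item The observation operator $\mathcal{O}bs$ and the policy $\mathcal{G}_{\bm{\theta}}$ are Lipschitz, and $C^1$ in $\bm{\theta}$ with bounded derivatives.
\item The initial agent positions are i.i.d.\ (or deterministic) with $\mu^M_0 \to \mu_0$ weakly in $W_1$.
\end{itemize}
The mean-field system is the PDE driven by $\mathcal{B}_\infty(z)=\int \mathcal{B}(\cdot,\mathcal{G}_{\bm{\theta}}(\mathcal{O}bs(z,\bm{\xi}),\bm{\xi}),\bm{\xi})\,d\rho_t(\bm{\xi})$. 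It is coupled to a continuity equation $\partial_t\rho_t+\nabla\cdot(\rho_t \bm{v}_{\bm{\theta}}(z,\bm{\xi}))=0$. The limiting cost $\mathcal{J}_\infty$ is defined with the control penalty $\int |u|^2 d\rho_t$.

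\textbf{Step 1 (forward convergence).} Well-posedness of both systems comes from Galerkin approximation, energy estimates and Aubin--Lions compactness. I would then prove a Dobrushin-type stability estimate
\[
\sup_t \|z^M_t - z_t\|_H + \sup_t W_1(\mu^M_t,\rho_t) \le C_T\, W_1(\mu^M_0,\rho_0).
\]
The argument compares the particle flow with the characteristic flow of the limit and applies Grönwall to $\|z^M-z\|_H^2$. The forcing difference is controlled by $\mathrm{Lip}(\mathcal{B}\circ\mathcal{G}_{\bm{\theta}})\,W_1$. This gives $\mathcal{J}_M\to\mathcal{J}_\infty$ uniformly on compact $\bm{\theta}$-sets.

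\textbf{Step 2 (gradient representation).} I would write $\nabla_{\bm{\theta}}\mathcal{J}_M$ through the adjoint system. This is the backward linearized PDE for $p^M$, together with particle co-states $q_i$, rescaled by $M$ so that they are $O(1)$. The gradient then becomes an integral over the empirical measure $\tilde\mu^M_t$ of positions, co-states and controls of a fixed Lipschitz integrand
\[
\nabla_{\bm{\theta}}\mathcal{J}_M=\int_0^T\!\!\int \big(\langle p^M, \partial_u\mathcal{B}\rangle + 2\lambda u + q\cdot\partial_{\bm{\theta}}\bm{v}\big)\,\partial_{\bm{\theta}}\mathcal{G}_{\bm{\theta}}\, d\tilde\mu^M_t\,dt .
\]
The mean-field derivative $\mathcal{D}_{\bm{\theta}}\mathcal{J}_\infty$ has the same form with the limit adjoint $(p,q)$ on the lifted measure. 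Alternatively, one can differentiate the forward stability estimate in $\bm{\theta}$ and use the Arzelà--Ascoli argument that $C^1$-bounded functions converging uniformly have converging derivatives, provided the derivatives are equicontinuous in $\bm{\theta}$.

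\textbf{Step 3 (adjoint convergence) and main obstacle.} This step repeats the Step 1 stability argument backward in time for the linear adjoint system. Its coefficients depend on $(z^M,\mu^M)$, so convergence of the coefficients from Step 1 yields $\|p^M-p\|\to0$ and $W_1$-convergence of the co-state-augmented measures. The conclusion then follows from weak convergence against a Lipschitz test function.

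I expect the obstacle to be uniform-in-$M$ bounds on the linearized and adjoint dynamics. For nonlinear or chaotic $\mathcal{F}$, such as Kuramoto--Sivashinsky, the Grönwall constants grow like $e^{LT}$. Controlling them requires a priori $V$-bounds on $z^M$ that are independent of $M$. I would obtain these from the energy estimate using the $1/M$ scaling, which keeps the total forcing bounded. I would also need $\mathcal{O}bs$ to be Lipschitz from $H$, which is why smooth local patches are needed.
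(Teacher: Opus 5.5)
Your proposal is sound in outline but takes a genuinely different route from the paper.

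\textbf{The paper's route.} The paper proves the formal version in a narrower setting: $\mathcal{A}$ linear and strictly dissipative, forcing normalized by $1/M$, a small-gain condition $L_{\mathcal{G}}L_bC_T<1$, and weak-$*$ convergence of the empirical measures \emph{assumed} as hypothesis (H1). It argues qualitatively, with no rate. $M$-uniform energy bounds on $z_M$ and $\partial_t z_M$, together with Aubin--Lions, give a strongly convergent subsequence in $L^2(0,T;H)$. The forcing error is split into a state part, Lipschitz in $z$, and a measure part. The measure part is handled by testing (H1) against the bounded continuous $H$-valued map $\bm{y}\mapsto b(\cdot,\bm{y})\mathcal{G}_{\bm{\theta}}(z_\infty,\bm{y})$. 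The limit is identified by uniqueness of the mild solution. Aubin--Lions is then applied again to an adjoint whose only source is $2(z_M-z_{\mathrm{ref}})$. Finally, the gradient error is split into an adjoint-error term and a $\nabla_{\bm{\theta}}\mathcal{B}_M$ term.

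\textbf{What your route buys.} Your Dobrushin/$W_1$ coupling replaces all of this compactness-and-identification machinery with a single Gr\"onwall loop for the field and the particle flow jointly.
\begin{itemize}
\item Measure convergence is derived from $W_1(\mu^M_0,\rho_0)\to 0$ rather than assumed.
\item Mobile agents are covered through the continuity equation.
\item The whole sequence converges with an explicit rate, so no subsequence-plus-uniqueness step is needed.
\item The small-gain condition is unnecessary, since Lipschitz forcing is handled by Gr\"onwall.
\end{itemize}
Your adjoint also does more. It keeps the closed-loop term coming from the $z$-dependence of $\mathcal{G}_{\bm{\theta}}(\mathcal{O}bs(z,\bm{\xi}),\bm{\xi})$, the $M$-rescaled particle co-states, and the effort term. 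It therefore targets the true derivative of $\mathcal{J}_\infty$. By contrast, the paper's adjoint has no $(D_z\mathcal{B}_M)^*p_M$ term, and its gradient formula treats $\mu_M$ as independent of $\bm{\theta}$.

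\textbf{What it costs.} The price is regularity. The paper needs only continuity in position, because weak-$*$ test functions suffice. You need Lipschitz dependence on position of $b$, $\mathcal{G}_{\bm{\theta}}$, $\mathcal{O}bs$ and $\nabla_{\bm{\theta}}\mathcal{G}_{\bm{\theta}}$. For the backward co-state comparison you also need Lipschitz \emph{derivatives} in $\bm{\xi}$. This is more than your hypothesis list states, so you should add it explicitly. Finally, the nonlinear-$\mathcal{F}$ obstacle you flag is exactly what the paper sidesteps by restricting to linear dissipative $\mathcal{A}$. In that setting your $M$-uniform a priori bounds are immediate.
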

\begin{proof}[Proof sketch]
    The proof relies on the discrete physics uniformly approximates the mean-field physics.
    
    \textbf{Uniform Bounds}: We first show that the PDE state and adjoint variables remain uniformly bounded in suitable function spaces regardless of $M$, preventing physical blow-ups as agents are added.

    \textbf{Compactness \& Convergence}: Using the Aubin-Lions lemma \cite{simon1986compact}, we extract strongly convergent subsequences of the state trajectories, showing that the discrete dynamics converge to a unique mean-field limit.

    \textbf{Gradient Identification}: Finally, we decompose the gradient error into a state-dependent term and a measure-dependent term. We show both vanish in the limit, proving that optimizing the swarm is asymptotically equivalent to optimizing the continuous operator.
\end{proof}

\textbf{Significance}: Theorem~\ref{thm:gradient_consistency_short} guarantees that our policies learn the fundamental operator dynamics of the PDE rather than becoming dependent on a specific cardinality. This consistency is the theoretical bedrock that allows the learned operator $\mathcal{G}_{\bm{\theta}}$ to function effectively across different swarm cardinalities.

\subsection{Conjecture: Emergent Self-Normalization}
\label{sec:conjecture}
While Theorem \ref{thm:gradient_consistency_short} assumes normalized forcing ($1/M$), our implementation uses unnormalized superposition ($\sum u_i$). We conjecture that the optimization process induces an implicit normalization via the effort regularization term $\lambda_u$, therefore extending the Theorem~\ref{thm:gradient_consistency_short} to the unnormalized case.

\begin{conjecture}[Self-Normalization via Effort Regularization]
    The optimization process induces an implicit scaling where the optimal policy outputs intensities $u_i \approx O(1/M)$. This ensures the total forcing remains bounded while individual effort vanishes.
\end{conjecture}
\textbf{Intuition}: This scaling represents a form of cooperative efficiency gain; the solver seeks an equilibrium between minimizing tracking error (which requires a certain total aggregate force) and minimizing control effort (which pushes $u_i \to 0$ for all agents). Because the DeepONet policy shares parameters between all agents and closes the loop through the global physical field $z$, this scaling is discovered sequentially; agents sense the magnitude of the aggregate field and adjust their output accordingly, without ever needing to communicate the total population count M. This mechanism is a key driver of the cardinality invariance observed in our experiments.


\section{Experiment Results and Discussion}
\label{sec:experiments}

To validate the multi-agent operator learning framework, we analyze the learned policies across four dimensions: (1) control efficacy across diverse set of 1D and 2D PDEs (linear, nonlinear, turbulent, and chaotic dynamics), (2) versatility across distinct control objectives, including stabilization, trajectory tracking, and density transportation, (3) comparative performance against standard single-agent and multi-agent reinforcement learning baselines \cite{papoudakis2020benchmarking, fujimoto2018addressing, yu2022surprising, schulman2017proximal}, and (4) invariance to swarm sizes cardinalities. Table \ref{tab:distilled_results} details the experimental configurations and reports the average performance of the learned policy alongside the evaluated baselines on all test cases, categorizing each PDE by its dimensionality, control objective, and actuator dynamics. 

Details on governing equations, simulation parameters, and further results are provided in Appendix~\ref{app:experimental_details}. The code to reproduce all experiments is available on Github\footnote{\url{https://github.com/SOLARIS-JHU/CINOC}, while complete architectural and algorithmic details for the RL baselines are deferred to Appendix~\ref{app:baselines}}. Furthermore, to contextualize these data-driven approaches against classical optimal control, we benchmarked a Nonlinear Model Predictive Control (NMPC) \cite{allgower2012nonlinear} baseline on the 1D KS stabilization task. While theoretically sound, NMPC suffers from severe computational bottlenecks (requiring over 590 ms per optimization step), making it computationally intractable for real-time deployment or scaling to high-dimensional 2D domains. Our neural operator framework bypasses this limitation by shifting the computational burden to the offline training phase (see Appendix~\ref{app:nmpc_baseline} for detailed NMPC timing and performance).

\begin{table}[h]
\caption{\textbf{Multi-Agent Control Performance.} A distilled comparison of CINOC against uncontrolled evolution, DPC, and the best-performing Reinforcement Learning (RL) baseline for each specific task. The ``Best RL" column reports the strongest result among PPO, TD3, MAPPO, and MATD3 (MAPPO for FKPP/Heat/KS tasks, MATD3 for Turbulence, PPO for Density). Comprehensive results for all baselines are provided in Appendix X. Values represent Tracking MSE, $L^2$ Energy/Enstrophy, and Moment Matching Loss (MML); lower is better. See Table~\ref{tab:fused_results} for comprehensive baselines.}
\label{tab:distilled_results}
\centering
\renewcommand{\arraystretch}{1.1}
\setlength{\tabcolsep}{4pt} 

\resizebox{\columnwidth}{!}{
\begin{tabular}{l c c c c} 
\toprule
\textbf{Environment} & \textbf{Unctrl.} & \textbf{DPC} & \textbf{Best RL} & \textbf{CINOC} \\
\midrule
\multicolumn{5}{l}{\textit{Tracking (MSE)}} \\ 
FKPP 1D    & $0.103 \pm 0.128$ & $1.4e\text{-}4 \pm 1.9e\text{-}4$ & $0.004 \pm 0.010$ & $\bm{4.6e\text{-}5 \pm 7.5e\text{-}5}$  \\
Heat 1D     & $0.390 \pm 1.144$ & $\bm{7.9e\text{-}5 \pm 1.1e\text{-}4}$ & $0.001 \pm 0.002$ & $2.9e\text{-}4 \pm 4.3e\text{-}4$  \\
Heat 2D      & $0.176 \pm 0.378$ & $2.1e\text{-}4 \pm 2.2e\text{-}4$ & $0.006 \pm 0.016$ & $\bm{1.5e\text{-}4 \pm 1.8e\text{-}4}$  \\
Heat 2D (Obs)  & $0.176 \pm 0.378$ & $2.7e\text{-}4 \pm 2.6e\text{-}4$ & $0.004 \pm 0.015$ & $\bm{1.2e\text{-}4 \pm 1.7e\text{-}4}$  \\

\midrule
\multicolumn{5}{l}{\textit{Stabilization (Energy/Enstrophy)}} \\ 
KS 1D & $1.45 \pm 0.69$ & $0.16 \pm 0.15$ & $\bm{0.13 \pm 0.13}$ & $0.13 \pm 0.12$  \\
KS 2D         & $13.91 \pm 4.61$ & $9.74 \pm 2.32$ & $0.50 \pm 0.24$ & $\bm{0.37 \pm 0.19}$  \\
Turbulence 2D & $50.49 \pm 13.29$ & $3.28 \pm 2.19$ & $3.71 \pm 2.46$ & $\bm{3.23 \pm 2.22}$  \\

\midrule
\multicolumn{5}{l}{\textit{Density Matching (MML)}} \\ 
Density Control & $0.027 \pm 0.026$ & $\bm{0.002 \pm 0.010}$ & $0.005 \pm 0.005$ & $0.002 \pm 0.002$  \\

\bottomrule
\end{tabular}
} 
\end{table}



\subsection{Stabilizing Spatiotemporal Chaos}
\label{sec:ks_stabilization}

\textbf{Task Setup.}
We first challenge the policy with the 2D Kuramoto-Sivashinsky (KS) equation, a canonical model for pattern formation that exhibits flame-front instabilities and negative diffusion. The system is governed by high-order spatial derivatives that drive energy production at large scales, which is then dissipated at small scales, resulting in fully developed spatiotemporal chaos. We employ a uniform grid of 100 fixed actuators to provide localized control inputs. The goal is to drive the system from a chaotic state to the zero equilibrium $u(\bm{x},t) = 0$. This equilibrium is linearly unstable; infinitesimal perturbations in the uncontrolled dynamics grow exponentially until the system returns to the chaotic attractor. This makes the task a challenging test of the policy's ability to manage non-linear amplification and precisely balance energy production with dissipative control.

\textbf{Result Evaluation.}
Figure~\ref{fig:ks_stabalize} illustrates the stabilization performance. The system is evolved to the chaotic attractor (phase $t < 0$) to ensure the policy confronts a realistic, complex state.
Upon activating the control at $t=0$, we observe a rapid suppression of the turbulent structures. While the uncontrolled baseline (Natural Evolution) continues to exhibit cellular fluctuations with magnitudes spanning approximately $[-9.3, 7.6]$, the learned policy induces an exponential decay. By $t=5.0s$, the policy has effectively eliminated the chaotic shedding, reducing the field to a near-zero state and maintaining this unstable equilibrium despite the system's inherent drive toward instability. Compared to uncontrolled evolution, resulting in an energy level of $12.16$, the controlled evolution is able to drive the energy to $0.20$. Furthermore, as detailed in Table~\ref{tab:distilled_results}, CINOC significantly outperforms both DPC and all RL baselines on the KS 2D task. While DPC struggles to stabilize the system (achieving an average energy of $9.74 \pm 2.32$) and the best RL baseline (MAPPO) reaches $0.50 \pm 0.24$, our policy achieves the lowest energy state of $0.37 \pm 0.19$.

\begin{figure}
    \centering
    \includegraphics[width=\linewidth]{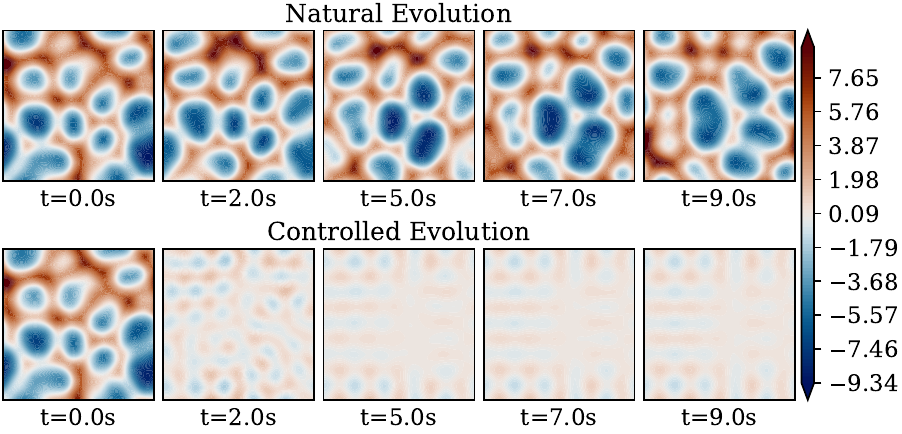}
    \caption{\textbf{Stabilization of 2D Kuramoto-Sivashinsky chaos.} Top: Natural (uncontrolled) evolution of $u(\bm{x},t)$ showing persistent chaotic dynamics. Bottom: Controlled evolution of $u(\bm{x},t)$ with policy applied at $t=0$. The system, initially in a developed chaotic state, rapidly converges to the zero equilibrium within approximately 5 seconds.}
    \label{fig:ks_stabalize}
\end{figure}

\subsection{Trajectory Tracking with Actuator Constraints}
\label{subsec:tracking}

\textbf{Task Setup.}
To evaluate the policy's ability to handle non-convex spatial constraints in the multi-agent setting with kinetic actuators, we consider a 2D diffusion control task with static ``keep-out'' zones and we introduce $K=3$ circular obstacles $\mathcal{O} \subset \Omega$. We employ 16 kinetic actuators so that the policy outputs, for each agent $i$, a velocity $\bm{v}_i = \dot{\boldsymbol{\xi}}_i$ governing agent motion and a scalar control intensity $u_i$ for local actuation. The task for the policy drives the system's spatial state along a time-parameterized path, ensuring the continuous evolution of the shape from its initial configuration to a desired terminal state. We also impose kinetic and actuator constraints to better reflect real-world application. The challenge for this example lies in the geometric constraints given by the ostacles: while the physical temperature field, governed by a 2d Heat Equation, diffuses freely through these regions, the actuators are strictly barred from entering them.
The policy must therefore sink heat from within the forbidden zones by optimally positioning agents along the permissible boundaries (the hull of $\mathcal{O}$), without having actuators getting too close to the obstacle itself:
\begin{equation}
    \min_{k} \|\boldsymbol{\xi}_i - \bm{c}_k\| \ge R_{\text{safe}} + r_k.
\end{equation}

\textbf{Results Evaluation.}
Figure~\ref{fig:heat_obstacles} visualizes the emergent coordination strategies under spatial constraints. At $t=30$ and $t=80$, rather than forming perimeters around the obstacles, agents cluster within the high-magnitude regions of the temperature field. This suggests the policy prioritizes direct intervention in areas of maximum deviation.

Despite the presence of three large exclusion zones (orange circles), the bottom-left panel shows the Mean Squared Error (MSE) converging rapidly, reaching a steady state of $\text{MSE} \approx 8 \times 10^{-4}$, significantly lower than the uncontrolled baseline ($MSE \approx 10^{-2}$). The control effort, ($\text{Avg } |u_i|$), spikes initially to move agents into position and then stabilizes, indicating that the policy maintains the desired thermal state with minimal energy expenditure once the spatial configuration is established. Quantitatively, as shown in Table~\ref{tab:distilled_results} for the obstacle-constrained Heat 2D environment, CINOC achieves an MSE of $1.2e\text{-}4$, outperforming the DPC baseline ($2.7e\text{-}4$) and drastically surpassing the best multi-agent RL algorithm, MAPPO ($0.004$). This highlights the advantage of CINOC over both traditional differentiable physics and trial-and-error learning in geometrically constrained tracking scenarios, a trend also reflected in the unconstrained 1D and 2D Heat tasks where CINOC consistently outperforms RL baselines by orders of magnitude. Crucially, while MARL baselines prove capable of managing the restorative forcing required for stabilization, the precise spatiotemporal coordination demanded by trajectory tracking exposes a massive performance gap where CINOC's operator-driven formulation demonstrates a clear superiority.

\begin{figure}
    \centering
    \includegraphics[width=\linewidth]{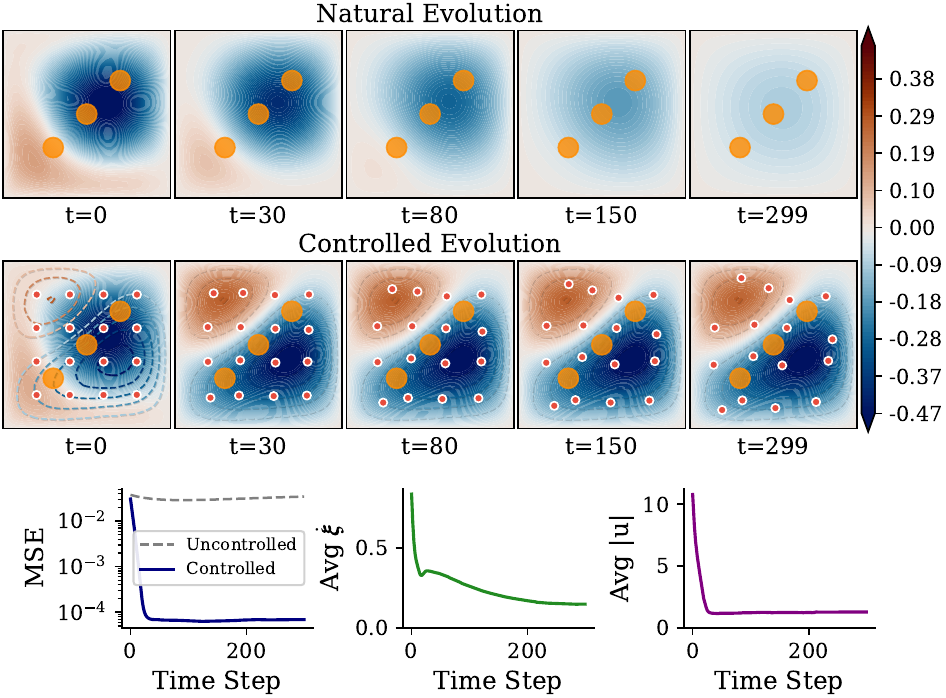}
    \caption{\textbf{Control under Actuator Constraints.} (Top) Uncontrolled thermal field evolution. (Middle) Controlled evolution where agents (dots) regulate the temperature field toward dashed reference contours while avoiding static obstacles (orange circles). (Bottom) Performance metrics: (left) MSE for controlled vs. uncontrolled convergence; (center) average velocity $\dot{\boldsymbol{\xi}}_i$ of the actuators; (right) average control intensity $|u_i|$.}
    \label{fig:heat_obstacles}
\end{figure}

\subsection{Density Transport via Flow Manipulation}
\label{exp:density_transport}

\textbf{Task Setup.}
While the previous tasks focused on suppressing instability and trajectory tracking, we now evaluate the policy's capacity for constructive pattern formation. The objective is to transport a passive scalar distribution to match a target configuration $\rho^*$ while strictly conserving mass (details in Appendix~\ref{ssec:density_matching}), while subject to the 2d Advection-Diffusion equation.
This task introduces a specific optimization challenge: when the support of the current density and the target are disjoint, standard pointwise metrics (e.g., MSE, KL-divergence) yield vanishing gradients. We address this by optimizing the Wasserstein distance inspired moment matching (1st and 2nd moment), which leverages the underlying metric geometry to provide informative transport gradients even in the absence of spatial overlap. We employ 9 kinetic agents for this task to influence the field by controlling the local velocity term $\textbf{v}_i$ rather than using direct mass forcing. Physically, this corresponds to advecting the density mass across the domain rather than locally creating or destroying it.

\textbf{Results Evaluation.}
Figure~\ref{fig:density_transport} showcases the policy controlling $M=9$ mobile advection agents. The results highlight a sophisticated two-phase strategy: in the initial phase ($t < 75$), agents do not merely push the density from behind. Instead, they organize into a semi-circular ``herding" formation (visible at $t=30, t=75$). This shape serves a dual purpose: it maximizes the advective velocity toward the target while simultaneously acting as a barrier to prevent the density from diffusing outward. Once the target is reached (see Tracking Error, bottom left), the policy undergoes a phase transition. At $t \approx 75$, the control input (purple line) drops from saturation levels to a low-energy maintenance mode. The agents effectively switch from transporting to station-keeping, applying only minimal actuation required to counteract natural diffusion and hold the shape in place. As reported in Table~\ref{tab:distilled_results}, our approach matches the DPC baseline with a loss of $0.002$, while significantly outperforming all standard and multi-agent RL baselines, which fail to achieve a loss lower than $0.005$ (PPO) and struggle up to $0.021$ (TD3 and MATD3).

\begin{figure}
    \centering
        \includegraphics[width=\linewidth]{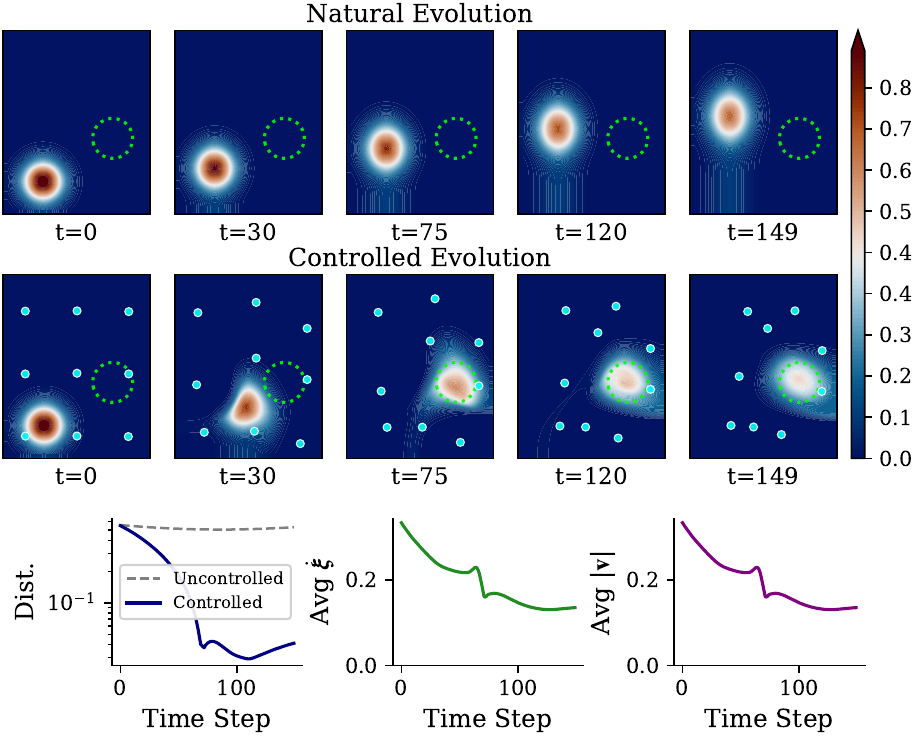}
    \caption{\textbf{Density Transport via Mobile Advection.} (Top) Uncontrolled evolution of density. (Middle) Controlled evolution where agents (dots)  coordinate to ``blow" the density blob into the target zone (green circle). (Bottom) Performance metrics: (left) Wasserstein distance showing controlled vs. uncontrolled convergence; (center) average velocity $\dot{\boldsymbol{\xi}}_i$ of the actuators; (right) average control action $|$\textbf{v}$_i$$|$.}
    \label{fig:density_transport}
\end{figure}

\subsection{Verification of Cardinality Invariance}
A central claim of CINOC is the ability to train on $M_{train}$ agents and deploy on $M_{test} \neq M_{train}$ without fine-tuning. We validate the cardinality invariance property (Sec.~\ref{subsec:self_norm}) by training a single policy for each of our numerical experiments and deploying it zero-shot on different swarm cardinalities. Figure~\ref{fig:cardinality_invariance} visualizes these results, while Table~\ref{tab:zs_scalability} reports the maximum and minimum parameter counts that keep the error bounded below an arbitrary 250\% threshold.

We observe a distinct dimensionality effect where 2D policies generally exhibit broader scaling margins than their 1D counterparts. This stems from the topology of actuator density: while inter-agent distance decays linearly ($M^{-1}$) in 1D, it decays as $M^{-1/2}$ in 2D. Consequently, 1D domains suffer from rapid kernel saturation, whereas 2D domains offer greater geometric capacity to accommodate additional agents without forcing conflicts. Notably, the Density 2D task achieves the highest scalability (up to $64\times$) because the policy learns a geometric ``containment'' strategy around the target zone, rather than applying continuous high-magnitude forcing. This station-keeping behavior allows excess agents to simply join the perimeter without disrupting the physics.

\begin{figure}[!htb]
    \centering
    \includegraphics[width=1\linewidth]{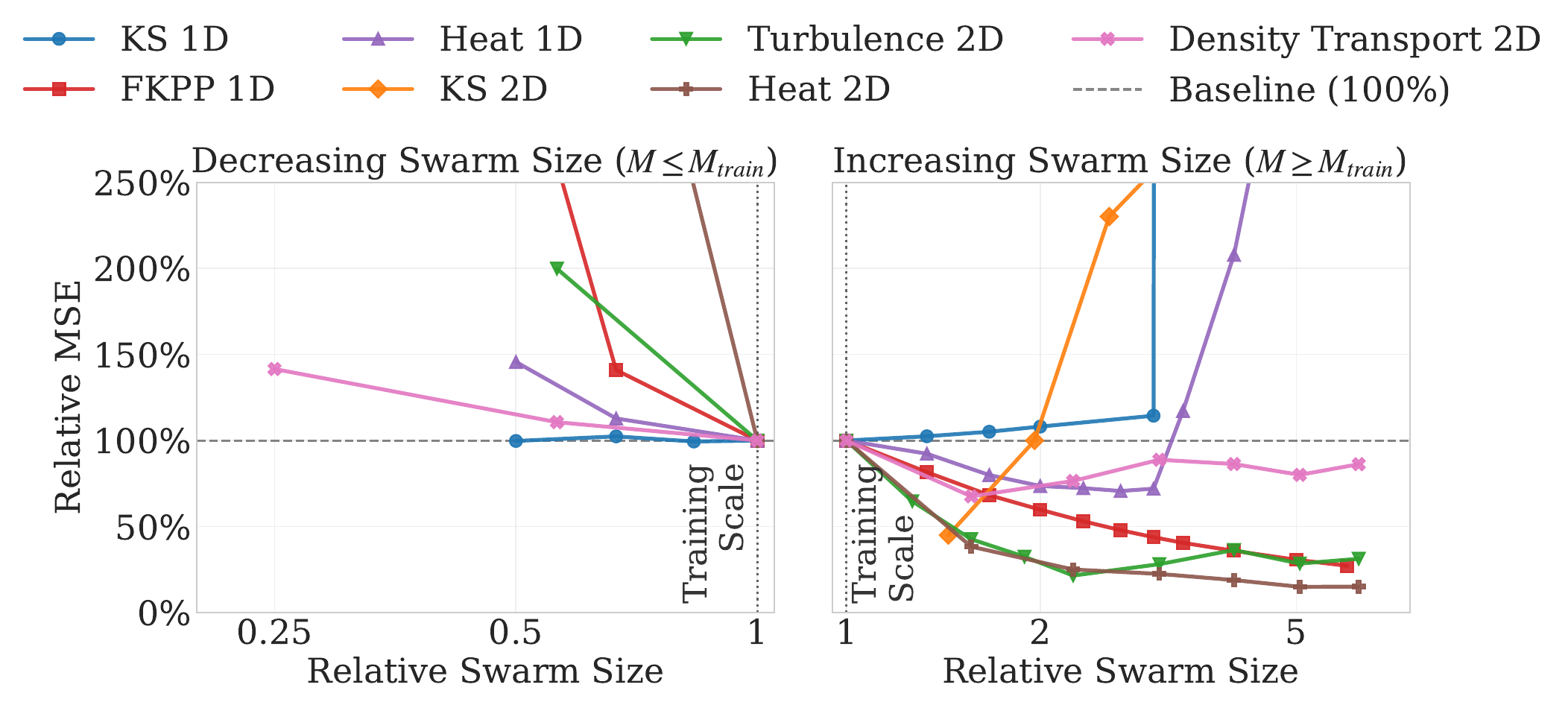}
    \caption{\textbf{Cardinality invariance across diverse PDE benchmarks.} The plots show the Relative MSE as a function of the swarm size relative to the original training scale. 
    \textbf{Left:} Performance regimes for decreasing swarm sizes ($M \le M_{train}$). 
    \textbf{Right:} Performance regimes for increasing swarm sizes ($M \ge M_{train}$). 
    The 100\% dashed line indicates the baseline performance at the training population. Policies deployed in 2D environments (e.g., Density 2D, Turbulence 2D) exhibit significantly broader scaling margins than 1D counterparts due to higher geometric capacity and slower inter-agent distance decay. Conversely, chaotic regimes like the Kuramoto-Sivashinsky (KS) equation display the tightest stability margins as minor deviations in the aggregate forcing field are rapidly amplified.}
    \label{fig:cardinality_invariance}
\end{figure}

Conversely, chaotic systems like the Kuramoto-Sivashinsky equation display the tightest stability margins (e.g., KS 2D transfers only up to $1.3\times$). Even minor deviations in the aggregate forcing field, introduced by the training-test density mismatch, are rapidly amplified by the non-linear dynamics. Finally, we note that trajectory-tracking tasks (Fisher-KPP, Heat) generally support lower maximum cardinalities than pure transport. The tracking objective requires precise, localized destructive interference to reshape the state dynamically, a complex balance that is more sensitive to overcrowding.

Last, to demonstrate that this cardinality invariance is an intrinsic property of the operator learning formulation rather than an artifact of the differentiable physics training, we parametrize our policies using DeepONet and train them via standard model-free MARL algorithms. Complete architectural and algorithmic details for these RL experiments are provided in Appendix~\ref{app:rl_details}.

\subsection{Ablation Studies}
\label{app:ablations}
We evaluate the robustness of CINOC through extensive ablations on solver fidelity, signal noise, sensor dimension and PDE parameters mismatch. Our results confirm that the proposed effort regularization is critical for emergent self-normalization, while restricting agents to local observation windows is necessary to prevent instability in large swarms. Furthermore, the learned policies exhibit strong robustness to changes in grid resolution and graceful degradation to changes in physical coefficients. We also highlight the role of noise injection during training serving as a key regularizer for cardinality invariance. More details about the ablations in Appendix~\ref{app:ablations}.

\section{Conclusion}
In this work, we presented a scalable framework for the control of partial differential equations (PDEs) that integrates neural operators with differentiable physics. By reformulating policy learning as an operator learning problem, we overcame the fixed-dimensional constraints of traditional reinforcement learning approaches, enabling cardinality-invariant policies.

Theoretically, we grounded this capability in a mean-field limit analysis, proving that discrete policy gradients yield consistent estimators of the underlying continuous operator gradients. Furthermore, we identified an emergent self-normalization phenomenon driven by effort regularization, where individual agents automatically scale their output to maintain bounded aggregate forcing as the population grow.

Empirically, the framework proved robust across a diverse suite of linear, nonlinear, and chaotic PDE regimes and control task, including the Kuramoto-Sivashinsky equation and turbulent flows. Crucially, our extensive ablation studies confirmed that these policies remain robust to partial agent failure, sensor noise, grid resolution changes, and small parametric shifts in the underlying physics.

While our framework demonstrates strong performance, it is not without limitations. First, our training pipeline implies a model-based setting, relying on access to a fully differentiable solver and explicit knowledge of the governing equations. Second, while empirically robust on nonlinear systems, our theoretical convergence guarantees are strictly established only for linear, dissipative regimes. Future work will aim to relax these constraints by extending the theoretical analysis to broader classes of nonlinear PDEs. Additionally, we plan to explore training parametric operators conditioned on physical coefficients, enabling zero-shot generalization across diverse dynamical systems. Finally, we aim to scale this framework to high-dimensional, industrially relevant environments, such as 3D turbulent channel flows, to further bridge the gap between data-driven learning and rigorous control theory.

\section*{Acknowledgments}
This research is partially supported by the U.S. DOE, Office of Science, ASCR program under the Scientific Discovery through Advanced Computing (SciDAC) Institute ``LEADS: LEarning-Accelerated Domain Science”, and also partially supported by the Ralph O’Connor Sustainable Energy Institute (ROSEI) at Johns Hopkins University. The authors would like to acknowledge computing support provided by the Advanced Research Computing at Hopkins (ARCH) core facility at Johns Hopkins University and the Rockfish cluster. ARCH core facility (rockfish.jhu.edu) is supported by the National Science Foundation (NSF) grant number OAC1920103. The research efforts of DRS and SG are supported by the National Science Foundation (NSF) under Grant No. 2436738.

\section*{Impact Statement}
This paper presents work whose goal is to advance the field of machine learning. There are many potential societal consequences of our work, none of which we feel must be specifically highlighted here.

\bibliography{ma-dpc}
\bibliographystyle{icml2026}

\newpage
\appendix
\onecolumn
\appendix

\section{Mathematical Preliminaries}
\label{app:preliminaries}

To formalize the multi-agent control problem and subsequently analyze its large-population limit in our theoretical proofs, we rely on foundational concepts from infinite-dimensional functional analysis and mean-field theory.

\subsection{Functional Spaces and PDE Dynamics} 
We consider the state of the PDE evolving in infinite-dimensional function spaces. Let $H = L^2(\Omega)$ denote the standard Hilbert space of square-integrable functions over the spatial domain $\Omega$, and $V = H_0^1(\Omega)$ denote the Sobolev space of $H^1$ functions whose trace vanishes on the boundary $\partial\Omega$. Physically, these spaces capture essential properties of the system: $H$ typically represents states with finite total energy, while $V$ ensures the field is sufficiently regular, meaning its gradient is square-integrable, and properly respects the boundary conditions. These spaces form a Gelfand triple $V \hookrightarrow H \hookrightarrow V^*$, where $V^*$ is the dual space of $V$, and the embeddings are continuous and dense.

The evolution of the system's state $z(\bm{x},t)$ over a time horizon $T$ is analyzed within Sobolev-Bochner spaces, such as $L^2(0,T; V)$. A crucial tool for our theoretical guarantees is the Aubin-Lions lemma. In finite-dimensional spaces, bounded sequences naturally have convergent subsequences, but this is not automatically true in infinite dimensions, and the Aubin-Lions lemma provides the necessary compactness criteria to bridge this gap. For our framework, it guarantees that as we scale the number of agents, the sequence of discrete PDE state trajectories remains well-behaved and does not exhibit infinitely fast, unbounded oscillations. This allows us to extract strongly convergent subsequences, which is a foundational step for analyzing nonlinear evolution equations and proving the existence of unique continuum limits of our discretized system \cite{alphonse2023function}.

\subsection{Mean-Field Theory and Measure Convergence}
In our multi-agent configurations, the aggregate spatial influence of $M$ discrete agents located at positions $\bm{\xi}_i(t)$ can be represented by the empirical measure $\mu_M(t) = \frac{1}{M} \sum_{i=1}^M \delta_{\xi_i(t)}$. 

To establish the cardinality invariance of our learned policies, we study the system in the mean-field limit as the number of agents $M \to \infty$. In this regime, discrete interacting particle systems can be shown to converge to continuous macroscopic limits \cite{golse2003}. This transition from discrete particle descriptions to continuous integral equations relies on the weak-* convergence of the empirical measures $\mu_M(t)$ to a limit probability measure $\mu(\cdot, t)$ \cite{carrillo2021}. We rely on weak-* convergence because discrete particles (modeled as Dirac deltas) cannot converge to a continuous density function under standard norms (such as strong $L^p$ norms or total variation distance). Instead of tracking individual particle paths, weak-* convergence ensures that the \textit{overall integrated influence} of the swarm on the PDE environment admits a well-defined continuum limit as the population scales. This convergence guarantees that the discrete forcing field applied by a finite swarm smoothly transitions into an integral operator defined over the continuous agent density. By establishing this, we can prove that gradients computed during discrete multi-agent training are consistent estimators of the gradients for the continuous control limit.

\section{Full Theoretical Proofs}
\label{sec:appendix_proof}

\subsection{Mean-Field Gradient Consistency}

To study the behavior of the system as $M \to \infty$, we establish the functional analytic setting. Let $H = L^2(\Omega)$ and $V = H^1_0(\Omega)$. While CINOC applies to general nonlinear operators $\mathcal{F}$, for the theoretical analysis we restrict our attention to linear, strictly dissipative operators $\mathcal{A}$ to establish rigorous convergence guarantees. We therefore consider the PDE dynamics governed by a linear operator $\mathcal{A}: V \to V^*$ that is strictly dissipative, i.e., $\langle \mathcal{A}z, z \rangle \leq -\alpha \|z\|_V^2$ for some $\alpha > 0$.
We define the parabolic regularity constant $C_T := \sup_{t \in [0,T]} \int_0^t \|S(t-s)\|_{\mathcal{L}(H)} \, ds < \infty$, where $\{S(t)\}_{t \geq 0}$ is the analytic $C_0$-semigroup generated by $\mathcal{A}$.

To simplify the following mean-field derivation without loss of generality, we treat the policy $\mathcal{G}_{\bm{\theta}}$ as a scalar mapping to the control intensities $u$, noting that the convergence of the velocity field $v$ follows an analogous analytic structure. To make the notation lighter, we use assume a continuous observation operator with a FOV of 100\% of the field, meaning $\mathcal{Z}_{\text{local}} \equiv H$.

\textbf{The Normalized Limit.} To analyze convergence, we define a theoretical \textit{normalized forcing term} operator $\mathcal{B}_M$ acting on the empirical measure $\mu_M(t) = \frac{1}{M} \sum_{i=1}^M \delta_{\bm \xi_i(t)}$:
\begin{align}
    \mathcal{B}_M(z)(\bm{x}, t) &= \frac{1}{M} \sum_{i=1}^{M} b(\bm{x}, \bm{\xi}_i(t)) \mathcal{G}_{\bm{\theta}}(z(\cdot, t), \bm{\xi}_i(t)) \nonumber \\
    &= \int_\Omega b(\bm{x}, \bm{y}) \mathcal{G}_{\bm{\theta}}(z, \bm y) \, d\mu_M(\bm y).
\end{align}
In the limit $M \to \infty$, we define the mean-field forcing operator:
\begin{equation}
    \mathcal{B}_\infty(z)(\bm{x}, t) = \int_\Omega b(\bm{x}, \bm y) \mathcal{G}_{\bm{\theta}}(z, \bm y) \, \mu(\bm y, t) \, d\bm y,
\end{equation}
where $\mu \in L^\infty(\Omega \times [0,T])$ is the limiting agent density. The associated mean-field gradient $\mathcal{D}_{\bm{\theta}} \mathcal{J}_\infty$ represents the sensitivity of the limit functional to the parameters $\bm{\theta}$:
\begin{equation}
\begin{split}
    \mathcal{D}_{\bm{\theta}} \mathcal{J}_\infty = \int_0^T \int_\Omega & p_\infty(\bm{x},t) \, \nabla_{\bm{\theta}} \mathcal{G}_{\bm{\theta}}(z_\infty, \bm y) \cdot b(\bm{x}, \bm y) \, \mu(\bm y,t) \, d\bm y \, d\bm{x} \, dt,
\end{split}
\end{equation}
where $p_\infty$ is the solution to the continuous adjoint equation.

\begin{theorem}[Consistency of Discrete Policy Gradients]
\label{thm:gradient_consistency}
Let the initial state $z_0 \in H$ be independent of $M$. Assume:
\begin{enumerate}
    \item[\textbf{(H1)}] \textbf{Measure Convergence:} $\mu_M(t) \rightharpoonup^* \mu(\cdot, t) \, dy$ weakly-* in $\mathcal{M}(\Omega)$ for a.e.\ $t \in [0,T]$, with $\mu \in L^\infty(\Omega \times [0,T])$.
    
    \item[\textbf{(H2)}] \textbf{Policy Regularity:} $\mathcal{G}_{\bm{\theta}}: H \times \Omega \to \mathbb{R}$ satisfies:
    \begin{enumerate}
        \item[(i)] Uniform boundedness: $|\mathcal{G}_{\bm{\theta}}(z,\bm{y})| \leq M$ for all $z \in H$, $\bm y \in \Omega$;
        \item[(ii)] Lipschitz continuity in state: $|\mathcal{G}_{\bm{\theta}}(z_1, \bm y) - \mathcal{G}_{\bm{\theta}}(z_2, \bm y)| \leq L_\mathcal{G}\|z_1 - z_2\|_H$;
        \item[(iii)] Continuity in position: $\bm y \mapsto \mathcal{G}_{\bm{\theta}}(z,\bm  y)$ is continuous for each fixed $z \in H$;
        \item[(iv)] Parameter regularity: $\nabla_{\bm{\theta}} \mathcal{G}_{\bm{\theta}}$ exists and satisfies (i)--(iii) with constants $M_{\bm{\theta}}$ and $L_{\mathcal{G},{\bm{\theta}}}$.
    \end{enumerate}
    
    \item[\textbf{(H3)}] \textbf{Kernel Regularity:} $b \in L^2(\Omega \times \Omega)$ with $\|b\|_{L^2} \leq L_b$, and $\bm y \mapsto b(\cdot, \bm y)$ is continuous as a map into $H$.
\end{enumerate}
If $L_\mathcal{G} L_b C_T < 1$, then as $M \to \infty$, the discrete policy gradient converges to the mean-field gradient:
\begin{equation}
    \nabla_{\bm{\theta}} \mathcal{J}_M \to \mathcal{D}_{\bm{\theta}} \mathcal{J}_\infty.
\end{equation}
\end{theorem}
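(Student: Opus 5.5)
We will follow the three-stage outline of the proof sketch of Theorem~\ref{thm:gradient_consistency_short}, working throughout with the mild (semigroup) formulation. The discrete and limit states solve
\[
z_M(t)=S(t)z_0+\int_0^t S(t-s)\,\mathcal{B}_M(z_M)(s)\,ds,\qquad z_\infty(t)=S(t)z_0+\int_0^t S(t-s)\,\mathcal{B}_\infty(z_\infty)(s)\,ds,
\]
and the adjoints $p_M,p_\infty$ solve the corresponding backward equations. Each adjoint has terminal/source data coming from the derivative of the running cost, together with the linearized feedback term $\partial_z\mathcal{B}(z)^*p$, which involves $\nabla_z\mathcal{G}_{\bm\theta}$ and is bounded by $L_\mathcal{G}L_b\|p\|_H$.

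\textbf{Step 1 (uniform bounds).} By (H2)(i) and (H3) we have $\|\mathcal{B}_M(z)\|_H\le L_b\sup|\mathcal{G}_{\bm\theta}|$ uniformly in $M$, since $\mu_M$ is a probability measure. Testing the equation with $z_M$ and using strict dissipativity yields an $M$-independent bound for $z_M$ in $L^\infty(0,T;H)\cap L^2(0,T;V)$. Reading $\partial_t z_M=\mathcal{A}z_M+\mathcal{B}_M(z_M)$ then bounds $\partial_t z_M$ in $L^2(0,T;V^*)$. The same energy argument, run backward, bounds $p_M$; the feedback term is absorbed by Gr\"onwall.

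\textbf{Step 2 (convergence of states).} The Aubin--Lions lemma gives a subsequence with $z_M\to \tilde z$ strongly in $L^2(0,T;H)$. To identify $\tilde z$, and in fact to get convergence of the whole sequence, we estimate directly:
\[
\|z_M(t)-z_\infty(t)\|_H\le \int_0^t\|S(t-s)\|\big(\|\mathcal{B}_M(z_M)-\mathcal{B}_M(z_\infty)\|_H+\|\mathcal{B}_M(z_\infty)-\mathcal{B}_\infty(z_\infty)\|_H\big)ds.
\]
The first difference is at most $L_bL_\mathcal{G}\|z_M-z_\infty\|_H$ by (H2)(ii). Taking the supremum in $t$ gives
\[
(1-L_\mathcal{G}L_bC_T)\sup_t\|z_M-z_\infty\|_H\le C_T\sup_t\epsilon_M(t),
\]
where $\epsilon_M(t)=\big\|\int_\Omega b(\cdot,y)\mathcal{G}_{\bm\theta}(z_\infty(t),y)\,d(\mu_M-\mu)(y)\big\|_H$. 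This is exactly where the smallness condition $L_\mathcal{G}L_bC_T<1$ enters.

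The key measure-dependent claim is $\epsilon_M\to0$. The map $y\mapsto b(\cdot,y)\mathcal{G}_{\bm\theta}(z_\infty(t),y)$ is continuous into $H$ by (H2)(iii) and (H3). Continuous $H$-valued functions on $\overline\Omega$ can be approximated uniformly by finite sums $\sum_k \phi_k(y)h_k$ with scalar continuous $\phi_k$, so weak-* convergence (H1) yields $\epsilon_M(t)\to0$ pointwise a.e. Since $\epsilon_M(t)\le 2L_b\sup|\mathcal{G}_{\bm\theta}|$, dominated convergence handles integrated versions. If a supremum in time is problematic, we will instead run the argument in $L^2(0,T;H)$ via a Young-type convolution estimate, or use the integrated form together with equicontinuity in $t$.

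\textbf{Step 3 (adjoint and gradient identification).} The same contraction-plus-perturbation argument, applied backward in time, gives $p_M\to p_\infty$. This uses $z_M\to z_\infty$ and continuity of the cost derivative, and once more the condition $L_\mathcal{G}L_bC_T<1$ for the linearized feedback. We then write
\[
\nabla_{\bm\theta}\mathcal{J}_M=\int_0^T\!\!\int_\Omega p_M\,\nabla_{\bm\theta}\mathcal{G}_{\bm\theta}(z_M,y)\,b(x,y)\,d\mu_M(y)\,dx\,dt
\]
and split $\nabla_{\bm\theta}\mathcal{J}_M-\mathcal{D}_{\bm\theta}\mathcal{J}_\infty$ into three pieces:
\begin{itemize}
\item a $(p_M-p_\infty)$ term, controlled by the uniform bounds;
\item a $(z_M-z_\infty)$ term, controlled by the Lipschitz part of (H2)(iv);
\item a measure term $\int_0^T\langle p_\infty, \int b\,\nabla_{\bm\theta}\mathcal{G}_{\bm\theta}(z_\infty,y)\,d(\mu_M-\mu)\rangle\,dt$, which vanishes by the Step 2 argument applied with $\nabla_{\bm\theta}\mathcal{G}_{\bm\theta}$ in place of $\mathcal{G}_{\bm\theta}$, followed by dominated convergence in $t$.
\end{itemize}

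\textbf{Main obstacle.} We expect the hardest part to be the measure term, that is, passing from scalar weak-* convergence in (H1) to convergence of $H$-valued integrals, uniformly enough in time to feed the Gr\"onwall/contraction step. A secondary point is that the discrete gradient from BPTT really is the adjoint formula above. We will take this as the definition in the continuous-time setting, and also use that the state-dependence of $\mathcal{G}_{\bm\theta}$ only enters the adjoint equation through the bounded linearization.
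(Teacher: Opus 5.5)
Your overall architecture is sound, and for the states it follows a genuinely different and cleaner route than the paper. One step, however, does not go through under the stated hypotheses.

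\textbf{Comparison.} The paper proceeds by compactness:
\begin{itemize}
\item it derives $M$-uniform energy bounds;
\item it extracts a subsequence $z_M\to z_\infty$ in $L^2(0,T;H)$ by Aubin--Lions;
\item it identifies the limit by showing $\mathcal{B}_M(z_M)\to\mathcal{B}_\infty(z_\infty)$ (state error plus measure error) and appealing to uniqueness of the mild solution;
\item it applies Aubin--Lions again to the adjoint and splits the gradient error essentially as you do.
\end{itemize}
Your direct mild-formulation estimate makes compactness unnecessary. It gives convergence of the whole sequence, with an explicit bound in terms of the measure discrepancy $\epsilon_M$.

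The supremum-in-time issue you raise resolves at once. Dissipativity gives $\|S(t)\|_{\mathcal{L}(H)}\le 1$, so $\sup_t\int_0^t\|S(t-s)\|\,\epsilon_M(s)\,ds\le\|\epsilon_M\|_{L^1(0,T)}$, which tends to $0$ by dominated convergence. Gr\"onwall would even remove the need for $L_\mathcal{G}L_bC_T<1$ in this estimate.

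Your finite-rank approximation of the $H$-valued test function fills a step the paper simply attributes to the Portmanteau theorem. In the gradient step you pair with $p_\infty$ before taking the limit, so only a scalar test function remains and (H1) applies directly.

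One point you share with the paper: integrating against Dirac masses needs a bound on $\sup_y\|b(\cdot,y)\|_H$. The $L^2(\Omega\times\Omega)$ bound in (H3) does not control this supremum. It is finite by the continuity part of (H3) on a compact $\overline\Omega$, so $L_b$ should denote that supremum.

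\textbf{The gap.} Your adjoint includes the closed-loop feedback $\partial_z\mathcal{B}(z)^*p$, but (H2) only makes $\mathcal{G}_{\bm{\theta}}$ Lipschitz in $z$, so $\nabla_z\mathcal{G}_{\bm{\theta}}$ need not exist. Even if it exists with norm at most $L_\mathcal{G}$, your backward contraction-plus-perturbation argument requires
\[
\bigl(\partial_z\mathcal{B}_M(z_M)-\partial_z\mathcal{B}_\infty(z_\infty)\bigr)^*p_\infty\to0 .
\]
This needs continuity of $\nabla_z\mathcal{G}_{\bm{\theta}}$ in $z$ and in $y$, and neither is assumed. For example, take a policy depending on $|\langle z,e\rangle|$, with $z_\infty$ sitting on the kink and $z_M$ approaching from alternating sides. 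The contraction controls only the $(p_M-p_\infty)$ part, not this term.

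The paper avoids the issue by writing the adjoint with no feedback term, $-\partial_tp_M+\mathcal{A}^*p_M=2(z_M-z_{\mathrm{ref}})$. Then $p_M$ depends on $M$ only through $z_M$ and converges by linear stability, so (H2) suffices. You have two ways to repair your argument:
\begin{itemize}
\item adopt that frozen-policy adjoint, which is not the exact closed-loop derivative but is the object the paper's proof actually treats and that (H2) supports; or
\item add the hypothesis that $\nabla_z\mathcal{G}_{\bm{\theta}}$ exists and satisfies the analogues of (H2)(i)--(iii), as (H2)(iv) does for $\nabla_{\bm{\theta}}\mathcal{G}_{\bm{\theta}}$.
\end{itemize}
In the second case, your Step~2 splitting closes the argument, using the $H$-valued test function $y\mapsto\langle b(\cdot,y),p_\infty\rangle_H\,\nabla_z\mathcal{G}_{\bm{\theta}}(z_\infty,y)$.
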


\begin{proof}
\textbf{Step 1: Uniform Bounds and Strong State Convergence.}

The discrete state $z_M$ satisfies the mild formulation:
\begin{equation}
    z_M(t) = S(t) z_0 + \int_0^t S(t-s) \mathcal{B}_M(z_M)(s) \, ds.
\end{equation}
By hypothesis (H2)(i) and (H3), the forcing term is uniformly bounded:
\begin{equation}
    \|\mathcal{B}_M(z)\|_H \leq L_b M.
\end{equation}
Under the smallness condition $L_\mathcal{G} L_b C_T < 1$, the Picard iteration map $\Phi: C([0,T]; H) \to C([0,T]; H)$ defined by the right-hand side is a strict contraction, guaranteeing a unique solution $z_M$ for each $M$.

Standard energy estimates yield: multiplying the PDE by $z_M$ and integrating, using the dissipativity of $\mathcal{A}$, we obtain
\begin{equation}
    \frac{1}{2} \frac{d}{dt} \|z_M\|_H^2 + \alpha \|z_M\|_V^2 \leq \|\mathcal{B}_M(z_M)\|_H \|z_M\|_H \leq L_b M \|z_M\|_H.
\end{equation}
Applying Grönwall's inequality yields uniform bounds:
\begin{equation}
    \sup_{M} \left( \|z_M\|_{L^\infty(0,T; H)}^2 + \|z_M\|_{L^2(0,T; V)}^2 \right) \leq C(z_0, T, L_b, M, \alpha).
\end{equation}

For the time derivative, from $\partial_t z_M = -\mathcal{A} z_M + \mathcal{B}_M(z_M)$, we have:
\begin{equation}
    \|\partial_t z_M\|_{L^2(0,T; V^*)} \leq \|\mathcal{A} z_M\|_{L^2(0,T; V^*)} + \|\mathcal{B}_M(z_M)\|_{L^2(0,T; H)} \leq C,
\end{equation}
where we used the continuous embedding $H \hookrightarrow V^*$ and the bound $\|\mathcal{A} z_M\|_{V^*} \leq C\|z_M\|_V$.

By the \textbf{Aubin--Lions Lemma} (with $V \hookrightarrow\hookrightarrow H \hookrightarrow V^*$), the sequence $\{z_M\}$ is relatively compact in $L^2(0,T; H)$. Hence there exists a subsequence (still denoted $z_M$) such that:
\begin{equation}
    z_M \to z_\infty \quad \text{strongly in } L^2(0,T; H).
\end{equation}

\textbf{Step 2: Forcing Convergence and Identification of the Limit.}

We show that $z_\infty$ solves the mean-field equation with forcing $\mathcal{B}_\infty$. Decompose the forcing error:
\begin{align}
    \|\mathcal{B}_M(z_M) - \mathcal{B}_\infty(z_\infty)\|_H 
    &\leq \underbrace{\|\mathcal{B}_M(z_M) - \mathcal{B}_M(z_\infty)\|_H}_{\text{(a) State error}} + \underbrace{\|\mathcal{B}_M(z_\infty) - \mathcal{B}_\infty(z_\infty)\|_H}_{\text{(b) Measure error}}.
\end{align}

\textit{Term (a):} Using the Lipschitz property (H2)(ii) and the kernel bound (H3):
\begin{align}
    \|\mathcal{B}_M(z_M) - \mathcal{B}_M(z_\infty)\|_H 
    &= \left\| \int_\Omega b(\cdot, \bm y) \left[ \mathcal{G}_{\bm{\theta}}(z_M, \bm y) - \mathcal{G}_{\bm{\theta}}(z_\infty, \bm y) \right] d\mu_M(\bm y) \right\|_H \nonumber \\
    &\leq L_b L_\mathcal{G} \|z_M - z_\infty\|_H \to 0,
\end{align}
since $z_M \to z_\infty$ strongly in $L^2(0,T; H)$.

\textit{Term (b):} For fixed $t$ and $z_\infty(\cdot, t) \in H$, define the test function $\varphi: \Omega \to H$ by $\varphi(\bm y) = b(\cdot, \bm y) \mathcal{G}_{\bm{\theta}}(z_\infty, \bm y)$. By hypotheses (H2)(i), (H2)(iii), and (H3), the map $\bm y \mapsto \varphi(\bm y)$ is continuous and bounded in $H$. Therefore, by the \textbf{Portmanteau Theorem} for weak-* convergence of measures:
\begin{equation}
    \int_\Omega \varphi(\bm y) \, d\mu_M(\bm y) \to \int_\Omega \varphi(\bm y) \, \mu(\bm y) \, d\bm y = \mathcal{B}_\infty(z_\infty).
\end{equation}

Combining terms (a) and (b), we conclude $\mathcal{B}_M(z_M) \to \mathcal{B}_\infty(z_\infty)$ in $L^2(0,T; H)$, and by uniqueness of limits in the mild formulation, $z_\infty$ is the unique solution to the mean-field state equation.

\textbf{Step 3: Adjoint Convergence.}

The discrete adjoint $p_M$ satisfies the backward parabolic equation:
\begin{equation}
    -\partial_t p_M + \mathcal{A}^* p_M = 2(z_M - z_{\mathrm{ref}}), \quad p_M(T) = 0.
\end{equation}
Since $z_M \to z_\infty$ strongly in $L^2(0,T; H)$, the source term $2(z_M - z_{\mathrm{ref}})$ converges strongly in $L^2(0,T; H)$. By standard parabolic regularity for the backward equation:
\begin{equation}
    \sup_M \left( \|p_M\|_{L^\infty(0,T; H)} + \|p_M\|_{L^2(0,T; V)} + \|\partial_t p_M\|_{L^2(0,T; V^*)} \right) \leq C.
\end{equation}
A second application of the \textbf{Aubin--Lions Lemma} yields:
\begin{equation}
    p_M \to p_\infty \quad \text{strongly in } L^2(0,T; H),
\end{equation}
where $p_\infty$ solves the mean-field adjoint equation with source $2(z_\infty - z_{\mathrm{ref}})$.

\textbf{Step 4: Gradient Consistency.}

The discrete gradient is:
\begin{equation}
    \nabla_{\bm{\theta}} \mathcal{J}_M = \int_0^T \left\langle p_M, \nabla_{\bm{\theta}} \mathcal{B}_M(z_M) \right\rangle_H dt,
\end{equation}
where $\nabla_{\bm{\theta}} \mathcal{B}_M(z) = \int_\Omega b(\cdot, \bm y) \nabla_{\bm{\theta}} \mathcal{G}_{\bm{\theta}}(z, \bm y) \, d\mu_M(\bm y)$.

We decompose the error:
\begin{align}
    \left| \nabla_{\bm{\theta}} \mathcal{J}_M - \mathcal{D}_{\bm{\theta}} \mathcal{J}_\infty \right| 
    &\leq \int_0^T \left| \left\langle p_M - p_\infty, \nabla_{\bm{\theta}} \mathcal{B}_M(z_M) \right\rangle_H \right| dt \nonumber \\
    &\quad + \int_0^T \left| \left\langle p_\infty, \nabla_{\bm{\theta}} \mathcal{B}_M(z_M) - \nabla_{\bm{\theta}} \mathcal{B}_\infty(z_\infty) \right\rangle_H \right| dt.
\end{align}

\textit{First term:} By hypothesis (H2)(iv), $\|\nabla_{\bm{\theta}} \mathcal{B}_M(z_M)\|_H \leq L_b M_{\bm{\theta}}$ uniformly. Thus:
\begin{equation}
    \int_0^T \left| \left\langle p_M - p_\infty, \nabla_{\bm{\theta}} \mathcal{B}_M(z_M) \right\rangle_H \right| dt \leq L_b M_{\bm{\theta}} \|p_M - p_\infty\|_{L^2(0,T; H)} \cdot \sqrt{T} \to 0.
\end{equation}

\textit{Second term:} Applying the same decomposition as in Step 2 to $\nabla_{\bm{\theta}} \mathcal{G}_{\bm{\theta}}$ (using (H2)(iv)), we obtain $\nabla_{\bm{\theta}} \mathcal{B}_M(z_M) \to \nabla_{\bm{\theta}} \mathcal{B}_\infty(z_\infty)$ in $L^2(0,T; H)$. Since $p_\infty \in L^2(0,T; H)$:
\begin{equation}
    \int_0^T \left| \left\langle p_\infty, \nabla_{\bm{\theta}} \mathcal{B}_M(z_M) - \nabla_{\bm{\theta}} \mathcal{B}_\infty(z_\infty) \right\rangle_H \right| dt \to 0.
\end{equation}

Combining both terms, we conclude:
\begin{equation}
    \nabla_{\bm{\theta}} \mathcal{J}_M \to \mathcal{D}_{\bm{\theta}} \mathcal{J}_\infty \quad \text{as } M \to \infty.
\end{equation}
\end{proof}

\subsection{Conjecture: Emergent Self-Normalization}
\label{subsec:self_norm}

While Theorem~\ref{thm:gradient_consistency} relies on normalized forcing ($\frac{1}{M}\sum u_i$), our implementation (Eq.~\ref{eq:optimization_problem}) uses unnormalized superposition ($\sum u_i$). In our experiments in fact we operate under the assumption that no communication is allowed among the agents (not even the number of agents), to mimic scenarios where communication is impossible or limited. We conjecture that the optimization process induces an implicit normalization, enabling zero-shot transfer.

\begin{conjecture}[Self-Normalization via Effort Regularization]
\label{conj:self_norm}
Let $\bm{\theta}^*_M$ be the optimal policy minimizing $\mathcal{J}_M$ with effort penalty $\lambda_u > 0$ and unnormalized forcing. As $M \to \infty$:
\begin{enumerate}
    \item \textbf{Intensity Scaling:} The learned control intensities scale as $u_i^*(t) := \mathcal{G}_{\bm{\theta}^*_M}(z_M, \bm \xi_i) = O(1/M)$.
    \item \textbf{Forcing Consistency:} The total forcing $\mathcal{B}_M$ remains bounded, i.e., $\sup_{M} \|\mathcal{B}_M\|_{L^2(0,T; H)} < \infty$.
    \item \textbf{Effort Decay:} The total control effort vanishes as $\mathcal{L}_{\text{force}} = \sum_{i=1}^M |u_i^*(t)|^2 = O(1/M) \to 0$.
\end{enumerate}
\end{conjecture}

This scaling emerges from the equilibrium between the tracking cost (requiring $O(1)$ total force) and the effort penalty (pushing $u_i \to 0$).
If intensities remained $O(1)$, the total force would grow as $O(M)$, causing massive overshoot and high cost. Gradient descent drives the shared policy $\bm{\theta}$ to the unique scale $u_i \sim 1/M$ where the field $z$ is controlled effectively while minimizing $\lambda_u \|\bm{u}^M\|^2$.

This implies a cooperative efficiency gain: larger swarms achieve the same control objective with vanishingly small individual effort. Because the DeepONet policy shares parameters across all agents, this scaling is discovered \textit{stigmergically} via the global state field $z$, without explicit communication of $M$.

\begin{remark}[Zero-Shot Transfer]
    This conjecture provides the mechanism for zero-shot scalability. The policy learns a response function that intrinsically balances the local contribution against the global field magnitude. When deployed with a different swarm size $Q \neq M$, the feedback loop through the physics $z$ automatically adjusts the aggregate forcing to the correct $O(1)$ level.
\end{remark}

\section{Additional Numerical Experiments}

\begin{table*}
\caption{\textbf{Multi-Agent Control Performance Across Tasks.} Benchmark results comparing CINOC and DPC against uncontrolled evolution and standard single-agent/multi-agent RL baselines. Values represent Tracking MSE, $L^2$ Energy/Enstrophy, and Moment Matching Loss (MML) depending on the task category and Lower is better. Tracking density matching problems involve kinetic actuators, while stabilization tasks use fixed actuators.}
\label{tab:fused_results}
\centering
\renewcommand{\arraystretch}{1.1}
\setlength{\tabcolsep}{4pt} 

\resizebox{\textwidth}{!}{
\begin{tabular}{l c c c c c c c} 
\toprule
\textbf{Environment} & \textbf{Unctrl.} & \textbf{PPO} & \textbf{TD3} & \textbf{MAPPO} & \textbf{MATD3} & \textbf{DPC} & \textbf{CINOC} \\
\midrule
\multicolumn{8}{l}{\textit{Tracking (MSE)}} \\ 
FKPP 1D    & $0.103 \pm 0.128$ & $0.053 \pm 0.091$ & $0.010 \pm 0.016$ & $0.004 \pm 0.010$ & $0.022 \pm 0.050$ & $1.4e\text{-}4 \pm 1.9e\text{-}4$ & $\bm{4.6e\text{-}5 \pm 7.5e\text{-}5}$  \\
Heat 1D     & $0.390 \pm 1.144$ & $0.016 \pm 0.062$ & $0.006 \pm 0.029$ & $0.001 \pm 0.002$ & $0.016 \pm 0.032$ & $\bm{7.9e\text{-}5 \pm 1.1e\text{-}4}$ & $2.9e\text{-}4 \pm 4.3e\text{-}4$  \\
Heat 2D      & $0.176 \pm 0.378$ & $0.015 \pm 0.026$ & $0.009 \pm 0.016$ & $0.006 \pm 0.016$ & $0.022 \pm 0.016$ & $2.1e\text{-}4 \pm 2.2e\text{-}4$ & $\bm{1.5e\text{-}4 \pm 1.8e\text{-}4}$  \\
Heat 2D (Obs)  & $0.176 \pm 0.378$ & $0.016 \pm 0.023$ & $0.012 \pm 0.014$ & $0.004 \pm 0.015$ & $0.015 \pm 0.012$ & $2.7e\text{-}4 \pm 2.6e\text{-}4$ & $\bm{1.2e\text{-}4 \pm 1.7e\text{-}4}$  \\

\midrule
\multicolumn{8}{l}{\textit{Stabilization (Energy/Enstrophy)}} \\ 
KS 1D & $1.45 \pm 0.69$ & $0.13 \pm 0.13$ & $0.13 \pm 0.13$ & $\bm{0.13 \pm 0.13}$ & $0.18 \pm 0.18$ & $0.16 \pm 0.15$ & $0.13 \pm 0.12$  \\
KS 2D         & $13.91 \pm 4.61$ & $12.96 \pm 4.24$ & $1.66 \pm 0.58$ & $0.50 \pm 0.24$ & $0.91 \pm 0.27$ & $9.74 \pm 2.32$ & $\bm{0.37 \pm 0.19}$  \\
Turbulence 2D & $50.49 \pm 13.29$ & $34.52 \pm 12.35$ & $5.57 \pm 3.15$ & $4.48 \pm 2.84$ & $3.71 \pm 2.46$ & $3.28 \pm 2.19$ & $\bm{3.23 \pm 2.22}$  \\

\midrule
\multicolumn{8}{l}{\textit{Density Matching (MML)}} \\ 
Density Control & $0.027 \pm 0.026$ & $0.005 \pm 0.005$ & $0.021 \pm 0.017$ & $0.008 \pm 0.008$ & $0.021 \pm 0.017$ & $\bm{0.002 \pm 0.010}$ & $0.002 \pm 0.002$  \\

\bottomrule
\end{tabular}
} 
\end{table*}

This appendix provides extended qualitative results for the control problems referenced in the quantitative analysis (Table~\ref{tab:fused_results}). We include visualizations for the 1D Fisher-KPP tracking, 1D and 2D Heat equation tracking in unconstrained domains, and scale-variant stabilization of the 1D Kuramoto-Sivashinsky equation. Further experimental details are reported in Appendix~\ref{app:experimental_details}.

\subsection{Fisher-KPP 1D: Tracking of Nonlinear Traveling Waves}\label{ssec:fkpp_1d_exp}
We evaluate the learned policy on the Fisher-KPP equation, a canonical model characterized by the interplay between linear diffusion and non-linear reaction kinetics. This environment serves as a rigorous benchmark for assessing the coordination of 20 kinetic actuators. As illustrated in Figure~\ref{fig:fkpp_1d}, the emergent control strategy exhibits a distinct bi-phasic behavior. During the initial transient phase, the agents exert high-magnitude control input and maintain elevated velocities (see \textit{the Avg Velocity} panel) to intercept the propagating wavefront. Upon achieving the target distribution at $t \approx 50$, the system undergoes a regime shift to a ``maintenance mode." In this state, the agents transition to a quasi-stationary configuration, applying minimal restorative forcing to counteract the natural diffusive dissipation and preserve the desired spatial profile.

\begin{figure}[!htb]
    \centering
    \includegraphics[width=0.5\linewidth]{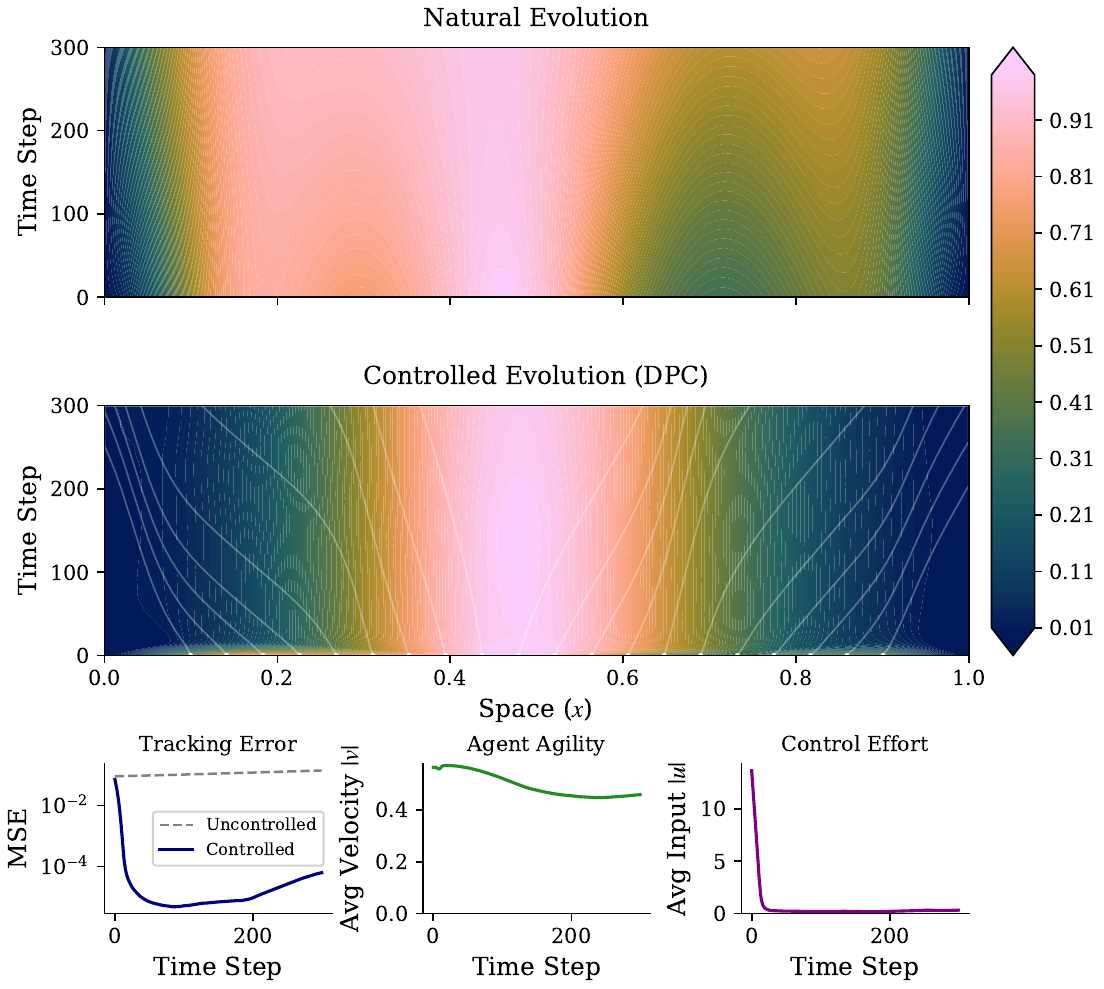}
    \caption{\textbf{Fisher-KPP 1D Tracking.} (Top) Uncontrolled evolution of the reaction-diffusion wavefront. (Bottom) Controlled trajectory wherein mobile actuators modulate the density field. The performance metrics demonstrate an efficient convergence: an initial high-velocity approach rapidly minimizes the tracking error, followed by a transition to a low-energy station-keeping configuration.} 
    \label{fig:fkpp_1d}
\end{figure}

\subsection{Heat Equation 1D: Kinetic Source Tracking in Linear Systems} 
\label{ssec:heat_1d_exp}
The 1D Heat equation tracking task isolates the control challenge of mitigating rapid dissipation within a linear parabolic system. Similar to the Fisher-KPP results, Figure~\ref{fig:heat_1d} reveals an efficient allocation of control effort. The 8 kinetic actuators leverage their mobility to dynamically relocate to high-residual regions, indicated by the transient spike in \textit{Avg Velocity}, where they inject thermal energy to match the target manifold. Subsequently, the agents adopt a stationary formation, providing the necessary forcing to balance the smoothing effects of the diffusion operator.

\begin{figure}[!htb] 
    \centering 
    \includegraphics[width=0.5\linewidth]{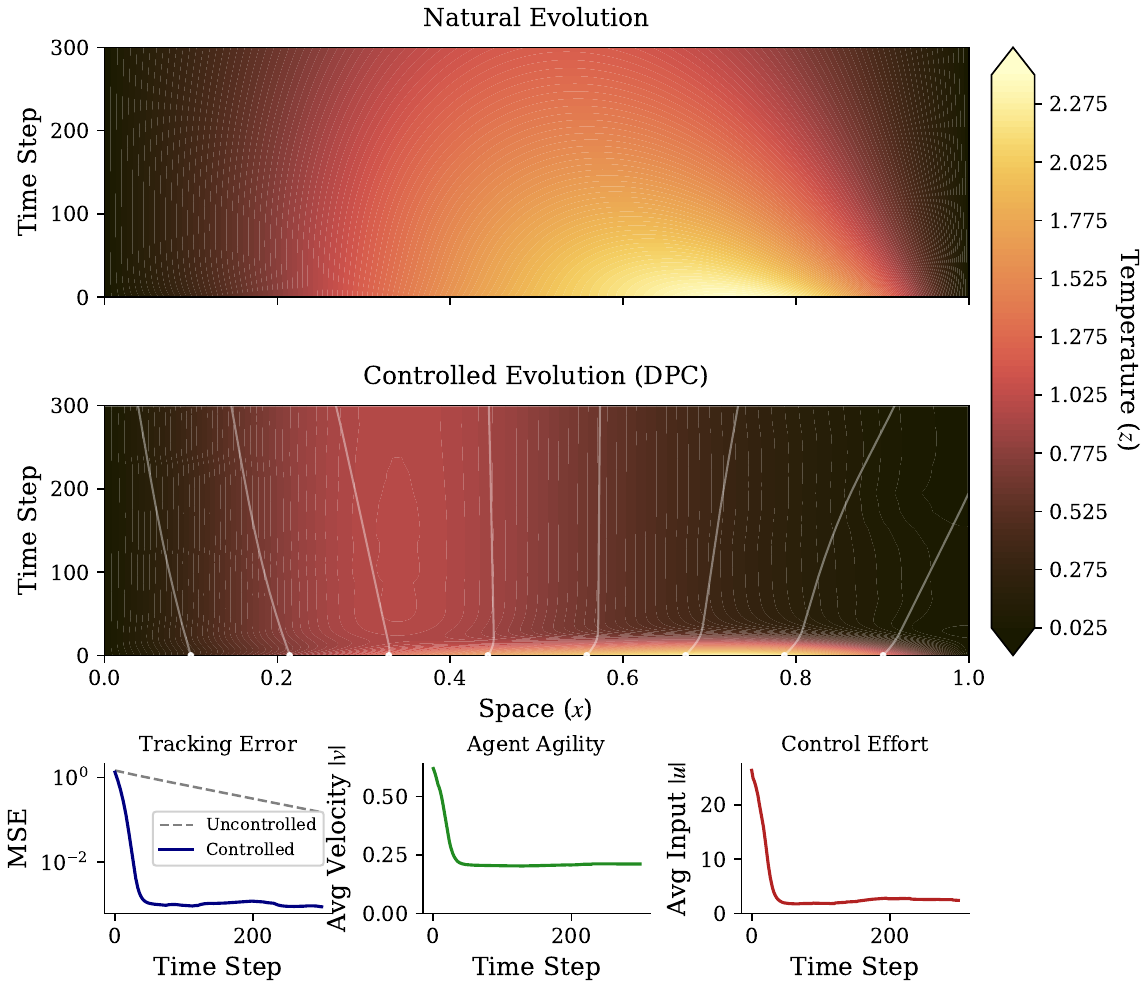} 
    \caption{\textbf{Heat 1D Tracking.} (Top) Natural evolution demonstrating characteristic thermal dissipation. (Bottom) Controlled evolution maintaining a multi-modal target distribution. The policy exhibits rapid convergence (MSE attenuation) followed by a decay in control effort as the system reaches a steady-state maintenance phase.} 
    \label{fig:heat_1d} 
\end{figure}

\subsection{Heat 2D: Emergent Coordination in Unconstrained Domains}\label{ssec:heat_2d_unconstrained}
To decouple the effects of geometric constraints from the emergent multi-agent coordination, we examine a 2D diffusion tracking task in an unconstrained domain. While the physical parameters and actuator count ($M=16$) are identical to the obstacle-avoidance scenario in Section~\ref{subsec:tracking}, the absence of ``keep-out" zones permits a direct coverage strategy. 

As visualized in Figure~\ref{fig:heat_2d_no_obs}, the 16 kinetic agents adopt a dynamic tessellation of the domain, positioning themselves at the centroids of the local error distribution to maximize control authority. This stands in contrast to the ``perimeter defense" behavior observed in the presence of obstacles, where agents were forced to act along boundary manifolds. This comparison confirms that the agent configurations are not a fixed artifact of the policy architecture, but rather an adaptive response to the environmental topology and the spatial requirements of the diffusion operator.

\begin{figure}[!htb]
    \centering
    \includegraphics[width=0.80\linewidth]{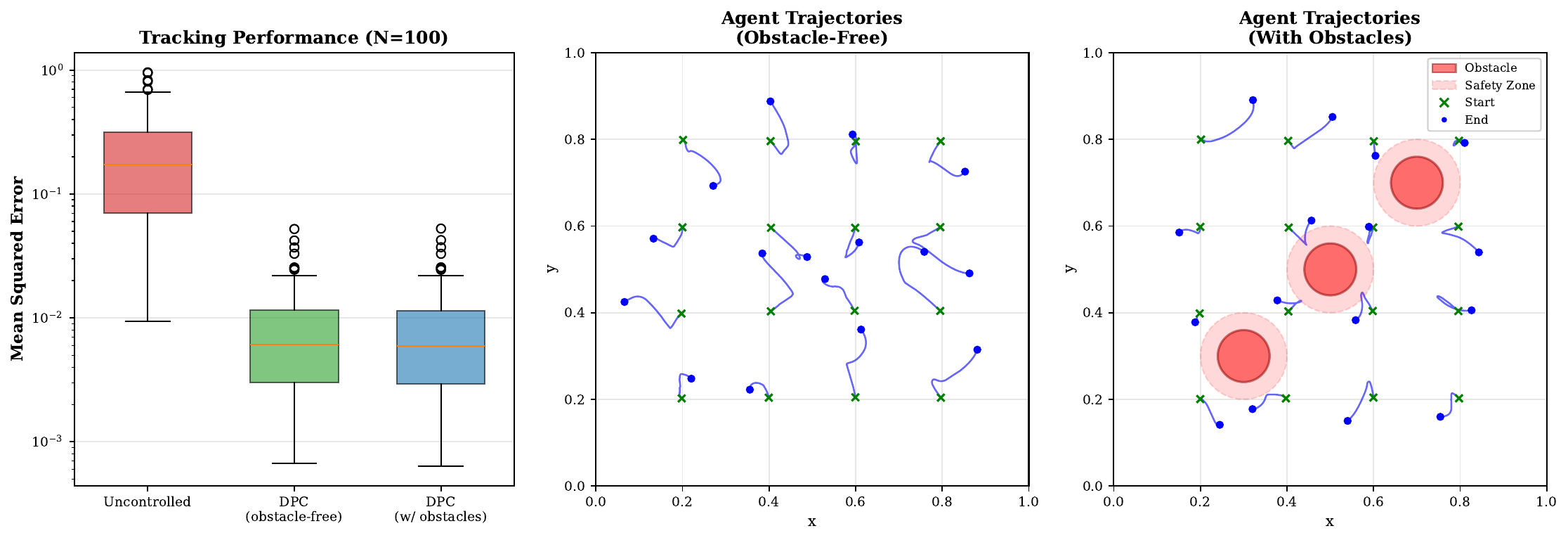}
    \caption{\textbf{Unconstrained Heat 2D Tracking.} Decentralized DPC performance on 2D heat equation control.
  (Left) Tracking MSE distribution shows significant error reduction for
  controlled scenarios regardless of obstacle presence (N=100). (Middle)
   Obstacle-free agents exhibit more aggressive motion toward high-error
  regions. (Right) Obstacle-aware agents execute prompt collision
  avoidance near safety boundaries (dashed circles) while maintaining
  tracking accuracy. Obstacles are shown as solid red circles.}
    \label{fig:heat_2d_no_obs}
\end{figure}

\subsection{Kuramoto-Sivashinsky 1D: Multi-Scale Stabilization of Chaotic Dynamics}\label{ssec:ks_1d_exp}
Finally, we investigate the stabilization of the 1D Kuramoto-Sivashinsky (KS) equation across varying spatial scales. To assess the scalability of the training pipeline, we evaluated three separate policies on domains of increasing length ($L \in \{64, 200, 500\}$) with a proportional increase in fixed actuators ($M \in \{30, 80, 200\}$). 

Increased domain length in the KS equation corresponds to a higher-dimensional chaotic attractor and a denser spectrum of unstable modes (Figure~\ref{fig:ks_1d}, Left). Despite the increased complexity of the spatiotemporal chaos, the operator-learning framework consistently identifies robust control laws. Energy evolution metrics (Figure~\ref{fig:ks_1d}, Right) show that the policies successfully suppress chaotic fluctuations, driving system energy below $10^{-4}$ within $t=5$s. These results suggest that the learned local operators are able to effectively stabilize highly-chaotic systems.

\begin{figure}[!htb]
    \centering
    \includegraphics[width=0.8\linewidth]{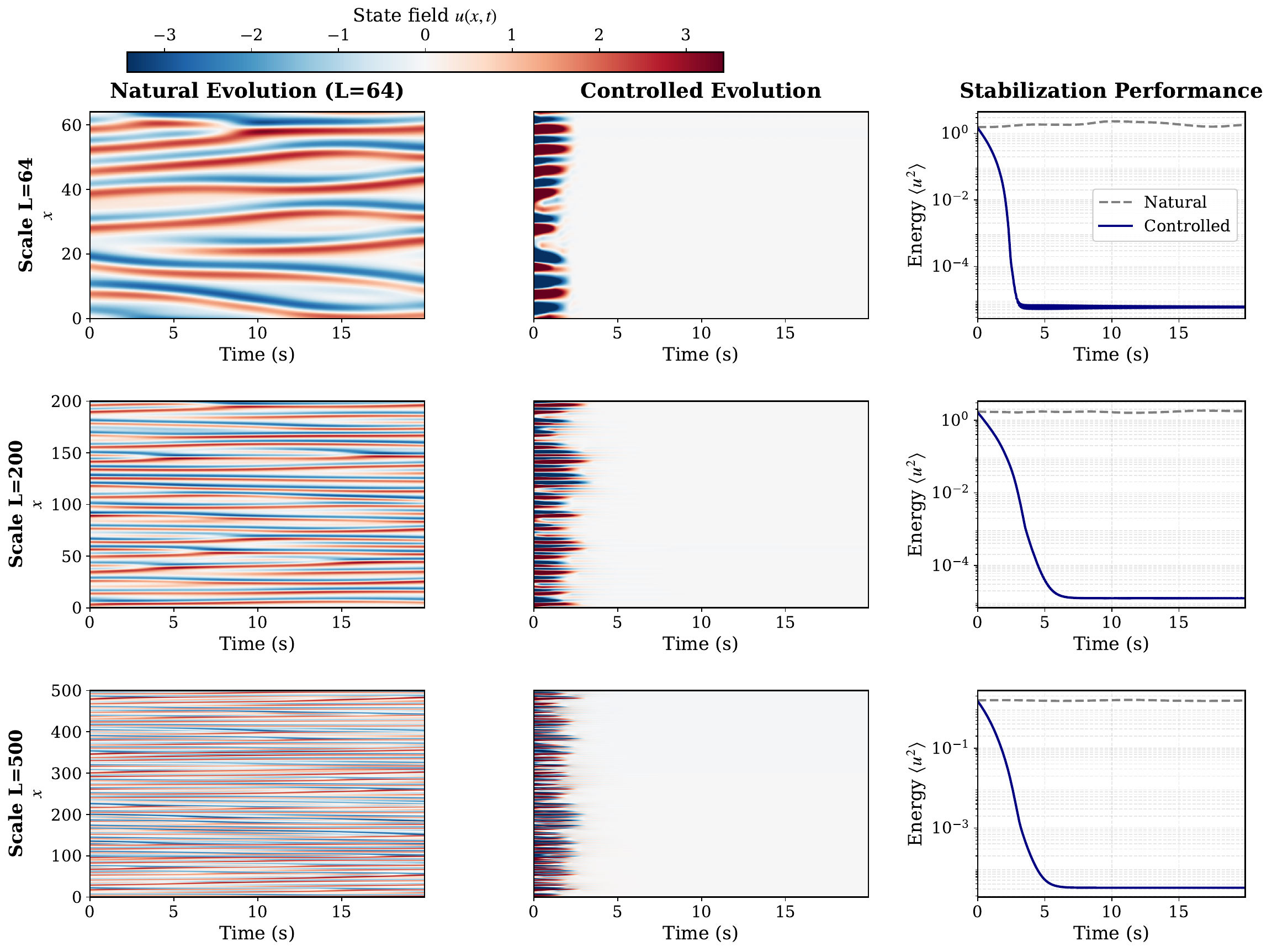}
    \caption{\textbf{Scale-Invariant Stabilization (KS 1D).} (Left) Natural evolution heatmaps illustrating the increase in chaotic complexity with domain size $L$. (Right) Energy evolution for natural (dashed) vs. controlled (solid) trajectories. The policy consistently achieves stabilization regardless of domain scale, demonstrating effective control of the underlying chaotic manifold.}
    \label{fig:ks_1d}
\end{figure}

\section{Ablation studies}

\subsection{Empirical Verification of Emergent Scaling}
A direct prediction of our Self-Normalization Conjecture (Section~\ref{subsec:self_norm}) is the inverse scaling of individual control effort with population size. Specifically, for the total aggregate forcing to remain bounded as $M \to \infty$, individual agent intensities must scale approximately as $u_i \sim O(1/M)$. Consequently, the total squared effort metric should decay according to $\sum_{i=1}^M u_i^2 \sim M \cdot (1/M)^2 = O(1/M)$.

We verify this emergent scaling empirically on the FKPP task. Figure~\ref{fig:steady_state_effort} plots the mean total squared effort against the number of agents $M$ on a log-log scale. Crucially, this metric is computed only over the last $70\%$ of the control horizon. This averaging window is necessary to capture the steady-state behavior, allowing initial transient dynamics to settle and the stigmergic coordination loop via the PDE field to fully establish itself.

The results highlight the critical role of the effort penalty $\lambda_u$ in driving this cooperative efficiency. For sufficiently large regularization (e.g., $\lambda_u \geq 0.05$ in the provided plot), we observe the predicted monotonic decrease in total effort as $M$ increases, providing evidence for the conjectured $O(1/M)$ scaling. Conversely, when the penalty is negligible (e.g., $\lambda_u = 0.001$), the emergent scaling breaks down: agents fail to regulate down their individual contributions, leading to massive over-actuation and a sharp increase in total effort at larger swarm sizes. This confirms that the effort term provides the necessary optimization pressure to discover scalable, cooperative solutions and the relevance of the penalty effort as a hyperparameter in our setup.

\begin{figure}[h]
    \centering
    \includegraphics[width=0.5\linewidth]{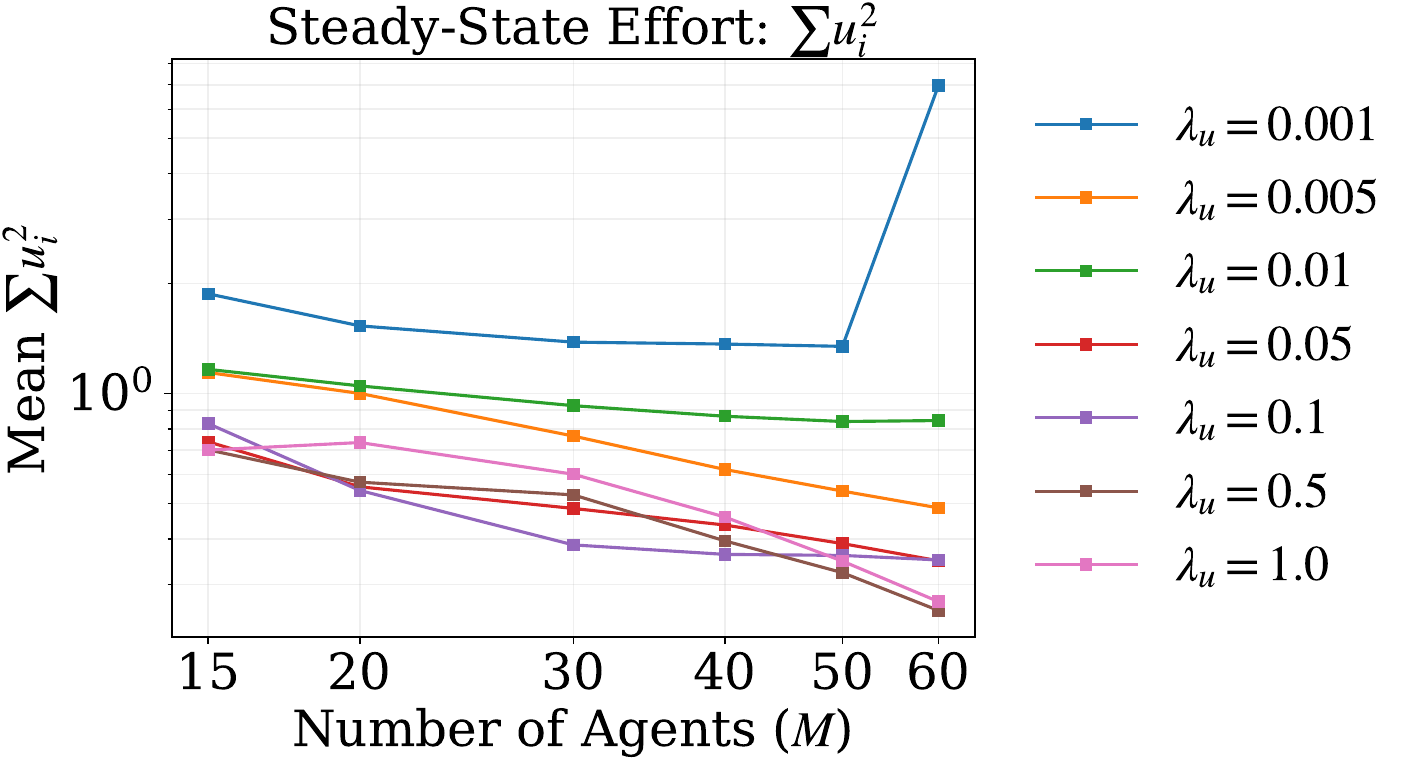}
    \caption{\textbf{Scaling of Steady-State Total Effort (Fisher-KPP).} The plot shows the mean $\sum u_i^2$ computed over the final 70\% of trajectories against the number of agents $M$ on a log-log scale. For sufficient effort penalties ($\lambda_u \geq 0.05$), the total effort decreases with a slope indicating $O(1/M)$ scaling, confirming cooperative self-normalization. Insufficient penalties ($\lambda_u \leq 0.005$) on control inputs lead to a breakdown of this scaling at larger $M$.}
    \label{fig:steady_state_effort}
\end{figure}

\begin{table}[!htb]
\centering
\caption{\textbf{Cardinality Invariance.} The valid range of inference-time agent counts ($M_{test}$) where the policy maintains stability, defined as a loss $\le 250\%$ of the training baseline ($M_{test} = M_{train}$). Values in parentheses indicate the scaling factor relative to the training size ($M_{train}$). Gray values denote the specific relative loss percentage at each boundary.}
\label{tab:zs_scalability}
\renewcommand{\arraystretch}{1.4}
\begin{tabular}{l c c c}
\toprule
\textbf{PDE} & \makecell{\textbf{Train} \\ ($M_{train}$)} & \makecell{\textbf{Min Stable} \\ ($M_{test}$)} & \makecell{\textbf{Max Stable} \\ ($M_{test}$)} \\
\midrule
FKPP 1D & \makecell{30 \\ \color{gray}\scriptsize 100.00} & \makecell{20 (0.7$\times$) \\ \color{gray}\scriptsize 114.97} & \makecell{150 (5.0$\times$) \\ \color{gray}\scriptsize 128.73} \\
Heat 1D & \makecell{30 \\ \color{gray}\scriptsize 100.00} & \makecell{15 (0.5$\times$) \\ \color{gray}\scriptsize 145.65} & \makecell{120 (4.0$\times$) \\ \color{gray}\scriptsize 207.86} \\
Heat 2D & \makecell{16 \\ \color{gray}\scriptsize 100.00} & \makecell{16 (1.0$\times$) \\ \color{gray}\scriptsize 100.00} & \makecell{256 (16.0$\times$) \\ \color{gray}\scriptsize 53.50} \\
KS 2D & \makecell{196 \\ \color{gray}\scriptsize 100.00} & \makecell{144 (0.7$\times$) \\ \color{gray}\scriptsize 45.02} & \makecell{256 (1.3$\times$) \\ \color{gray}\scriptsize 230.10} \\
KS 1D & \makecell{30 \\ \color{gray}\scriptsize 100.00} & \makecell{15 (0.5$\times$) \\ \color{gray}\scriptsize 99.75} & \makecell{90 (3.0$\times$) \\ \color{gray}\scriptsize 114.42} \\
Turbulence 2D & \makecell{64 \\ \color{gray}\scriptsize 100.00} & \makecell{36 (0.6$\times$) \\ \color{gray}\scriptsize 199.80} & \makecell{1024 (16.0$\times$) \\ \color{gray}\scriptsize 75.89} \\
Density 2D & \makecell{16 \\ \color{gray}\scriptsize 100.00} & \makecell{4 (0.2$\times$) \\ \color{gray}\scriptsize 141.53} & \makecell{1024 (64.0$\times$) \\ \color{gray}\scriptsize 83.71} \\
\bottomrule
\end{tabular}
\end{table}

\subsection{Is CINOC Robust to Solver Fidelity?}
We evaluate the robustness of our learned policy to the fidelity of the differentiable solver. A critical requirement for practical deployment is that a policy trained on coarse, computationally efficient simulations must retain its efficacy when transferred to high-fidelity environments. To verify this resilience to discretization, we take a policy trained on the KS-2D stabilization task using a coarse solver with $64\times64$ grid points and evaluate it zero-shot against a high-fidelity simulation resolving $512\times512$ grid points.

Figure~\ref{fig:solver_robustness} (Left) illustrates the performance consistency across solver fidelities. Visually, the spatiotemporal evolution of the field in the top row ($64\times64$) and bottom row ($512\times512$) exhibits a high degree of coherence, indicating that the policy does not rely on grid-scale artifacts of the coarse solver.

Quantitatively, this robustness is confirmed by the system energy trajectories shown in Figure~\ref{fig:solver_robustness} (Right). The energy decay profile of the high-fidelity simulation (dashed red line) closely tracks that of the low-fidelity training environment (solid blue line). Both solvers are driven to the zero-state equilibrium with comparable rates (the low-fidelity simulation case achieves an energy of $0.83$, while the high-fidelity results in an energy of $0.15$, while the uncontrolled dynamics energy is $11.43$), demonstrating that the neural operator has successfully captured the underlying continuous control physics. This confirms the policy's robustness across discretization levels without retraining.

\begin{figure}[!htb]
    \centering
    \includegraphics[width=1.0\linewidth]{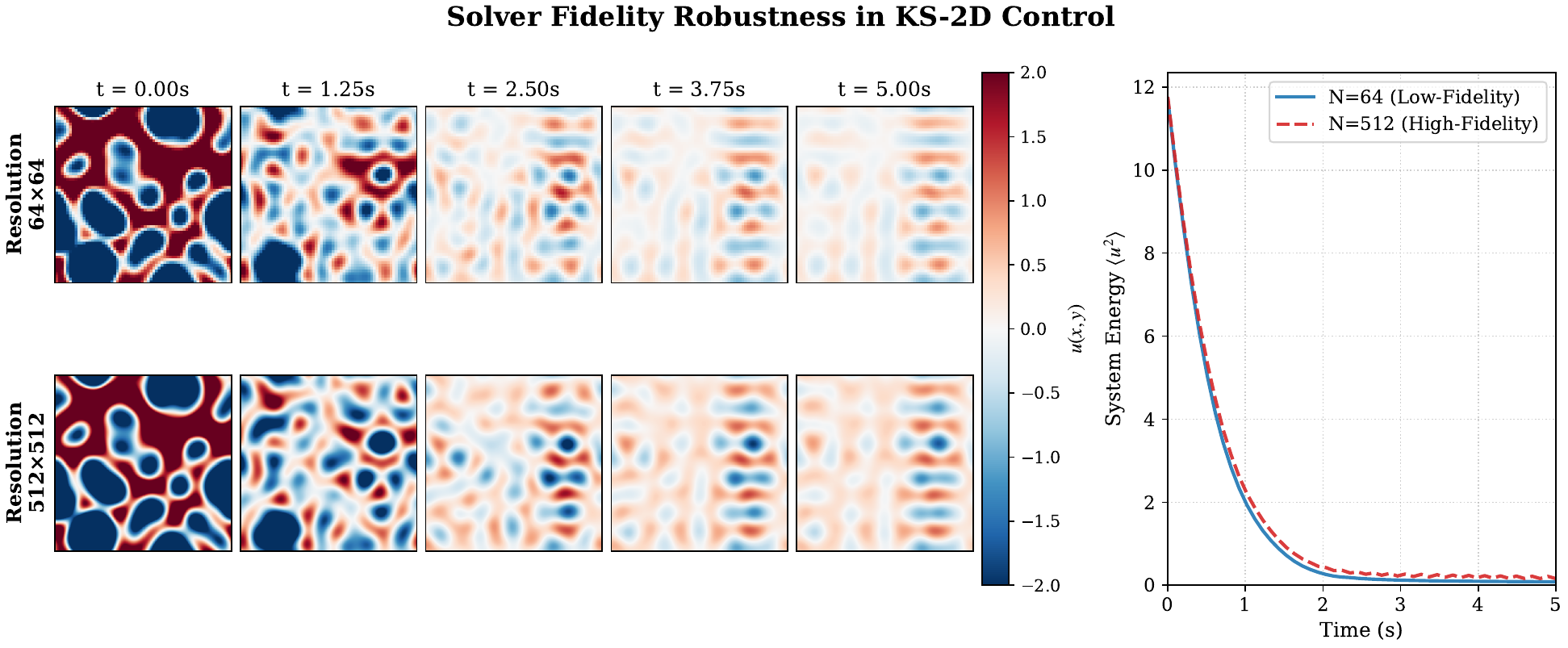}
    \caption{\textbf{Solver Fidelity Robustness in KS-2D Control.} Left: Comparison of the spatiotemporal evolution of the field $u(x,y)$. The policy, trained on a coarse discretization of $64\times64$ (Top), is deployed zero-shot on a high-fidelity solver with $512\times512$ grid points (Bottom). Right: Temporal evolution of the System Energy $\langle u^2 \rangle$. The high-fidelity trajectory ($N=512$, dashed) closely matches the training performance ($N=64$, solid), confirming the policy is robust to changes in solver discretization.}
    \label{fig:solver_robustness}
\end{figure}

\subsection{Is the Learned Policy Robust to Signal Corruption?}
\label{ssec:ablation_noise}

To assess the robustness of the learned policies, we evaluate performance subject to stochastic signal corruption on the FKPP trajectory tracking task. We distinguish between \textbf{actuator uncertainty} ($\sigma_u$), modeled as additive Gaussian noise on the control output, and \textbf{observation uncertainty} ($\sigma_z$), representing sensor measurement noise.

\textbf{Robustness via Regularization.}
Figure~\ref{fig:robustness_plots} illustrates the performance of policies trained in noisy environments and subsequently deployed in low- to mid-noise regimes (for both actuators and sensors). Results indicate that the \textit{Baseline} policy, trained with zero standard deviation for sensor noise ($\sigma_u=0$) and actuator noise ($\sigma_z=0$), degrades severely as the swarm scales to $M=100$. In contrast, policies trained with moderate state noise demonstrate superior performance across the entire population spectrum, even in noise-free scenarios as shown in Figure~\ref{fig:ablation_no_noise}. The injected noise in each example is Gaussian with zero mean.

\begin{figure}
    \centering
    \includegraphics[width=0.6\linewidth]{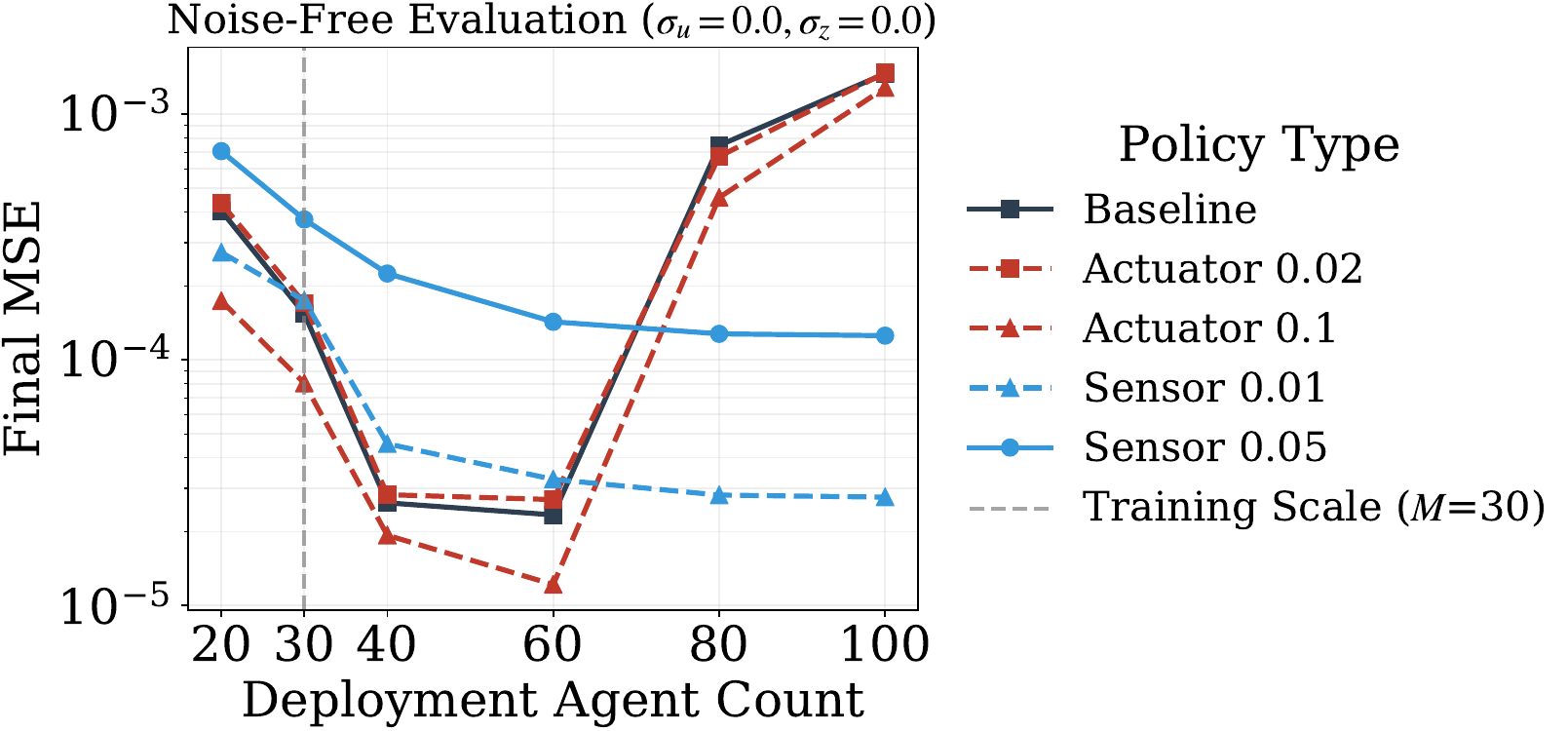}
    \caption{\textbf{Robustness Analysis.} Performance comparison between models trained with different noise levels. Inference is performed in a noise-less environment. Policies trained with marginal sensor noise exhibit better cardinality invariance.}
    \label{fig:ablation_no_noise}
\end{figure}

\textbf{Robustness and Stability Limits.}
We identify noise injection during training as a technique to enhance the cardinality invariance properties of our policies. 
As illustrated in Figure~\ref{fig:effort_scaling_log_steady} where a policy trained with noise-free actuators and sensors (Baseline) and a policy trained with $\sigma_z=0.01$ and $\sigma_u=0.02$ are compared in terms of their MSE, the latter policy preserve the self-normalization property for larger swarm sizes compared to the baseline policy, which fails to maintain the necessary inverse effort scaling ($O(1/M)$) as the population grows. This suggests that, in the absence of noise regularization, the policy devolves into competitive behaviors, where agents over-actuate to correct for the perturbations introduced by their neighbors, leading to destructive interference and high-frequency oscillations. In contrast, the noise-conditioned policy maintains the correct scaling slope, effectively decoupling the agents and ensuring stronger cardinality invariance properties, confirming noise-injection as a mean to enhance the robustness of our method.

\begin{figure}[!htb]
    \centering
    \includegraphics[width=0.6\linewidth]{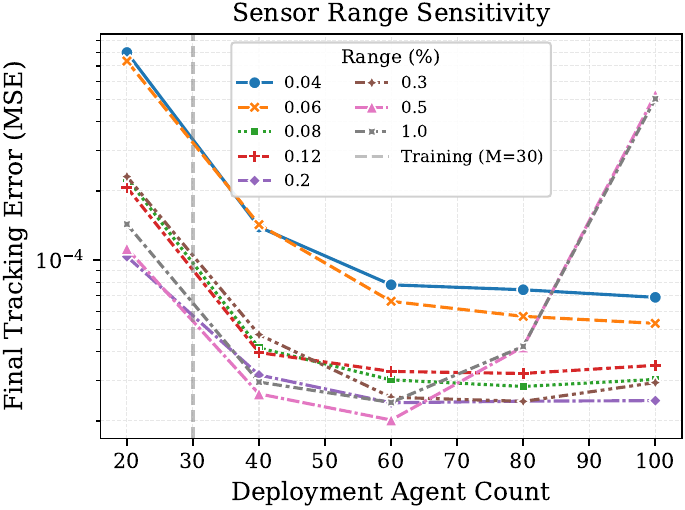}
    \caption{\textbf{Sensitivity to Field of View (FOV).} Comparison of control error (MSE) against swarm size for different observation radii on the FKPP task. While Global FOV (100\%) performs well for small swarms, it creates oscillatory instability at large $M$ due to spurious coupling. Local FOV (10\% - 20\%) acts as a necessary regularizer, maintaining stability as population scales.}
    \label{fig:sensor_dim_senitivity}
\end{figure}

\begin{figure}[t]
    \centering
    \begin{subfigure}{0.48\textwidth}
        \centering
        \includegraphics[width=\linewidth]{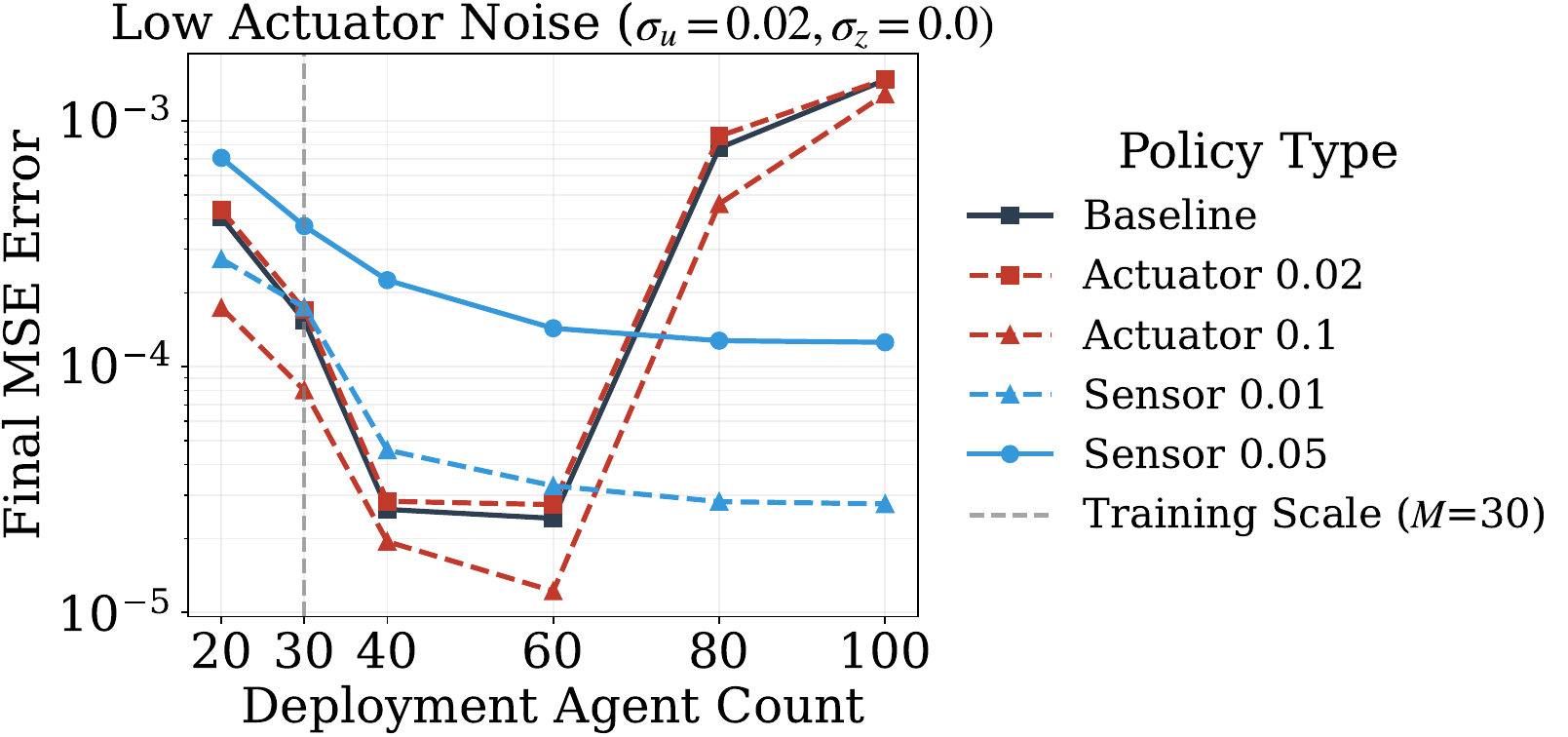}
        \caption{Actuator Noise Resilience}
    \end{subfigure}
    \begin{subfigure}{0.48\textwidth}
        \centering
        \includegraphics[width=\linewidth]{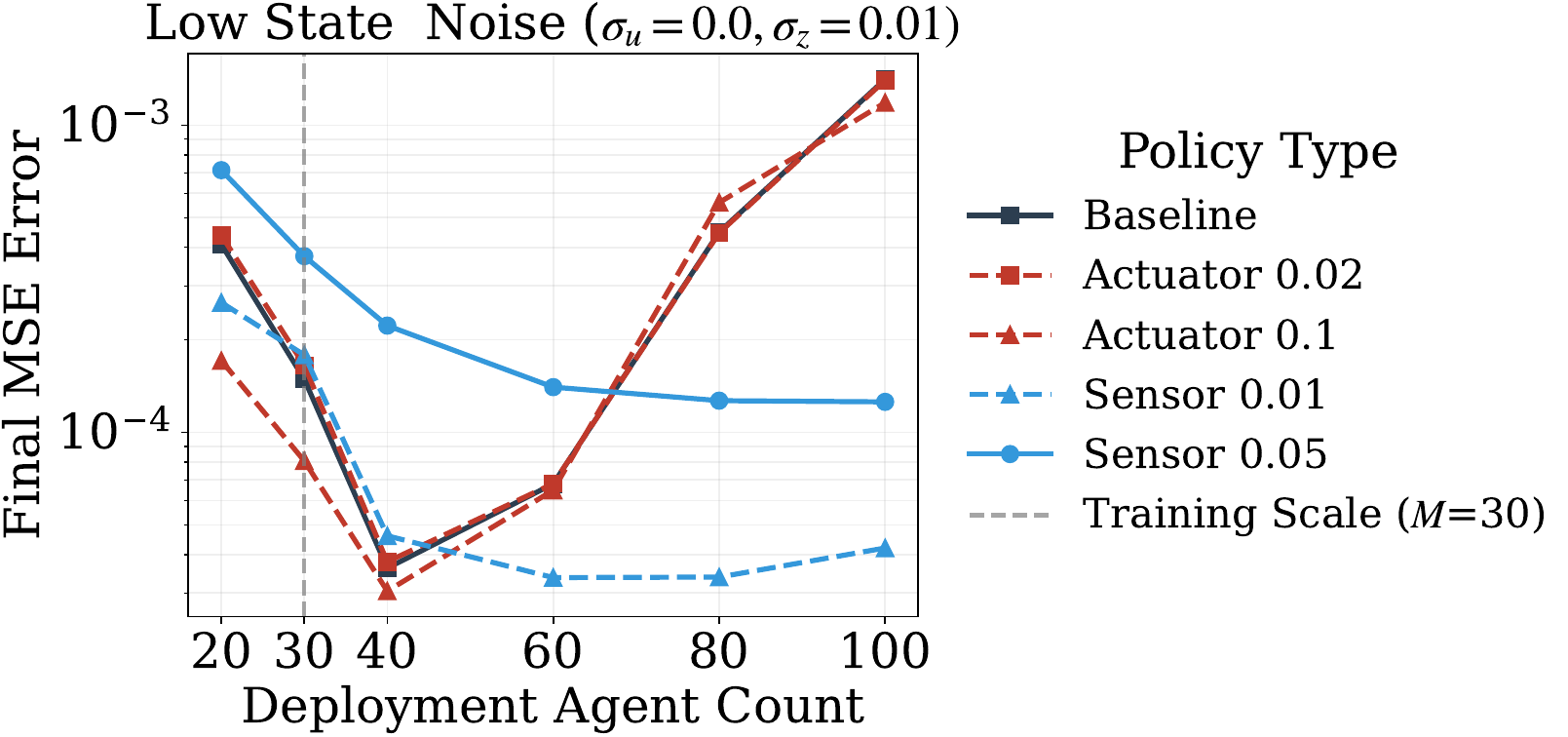}
        \caption{Actuator Noise Resilience}
    \end{subfigure}

    \begin{subfigure}{0.48\textwidth}
        \centering
        \includegraphics[width=\linewidth]{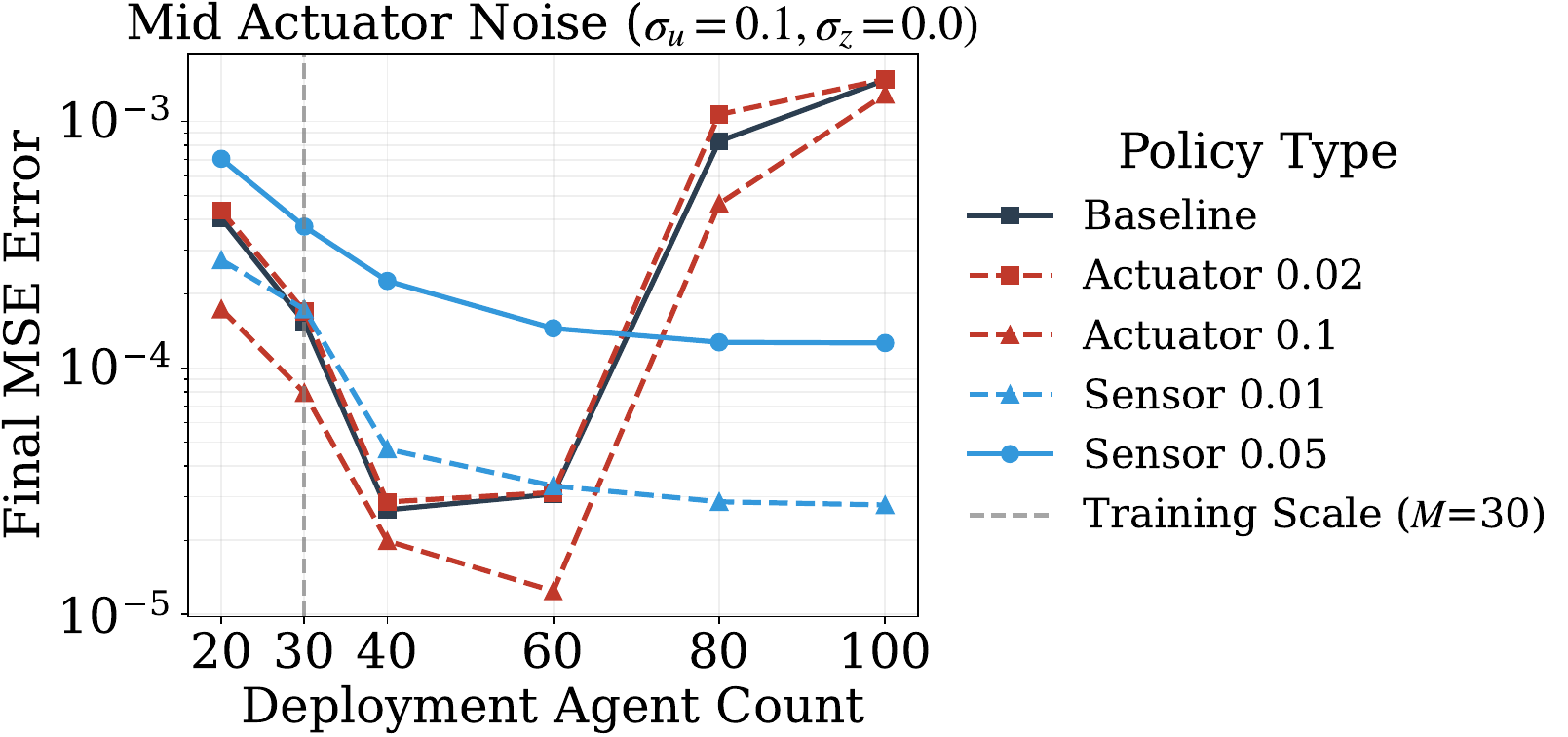}
        \caption{Actuator Noise Resilience}
    \end{subfigure}
    \hfill
    \begin{subfigure}{0.48\textwidth}
        \centering
        \includegraphics[width=\linewidth]{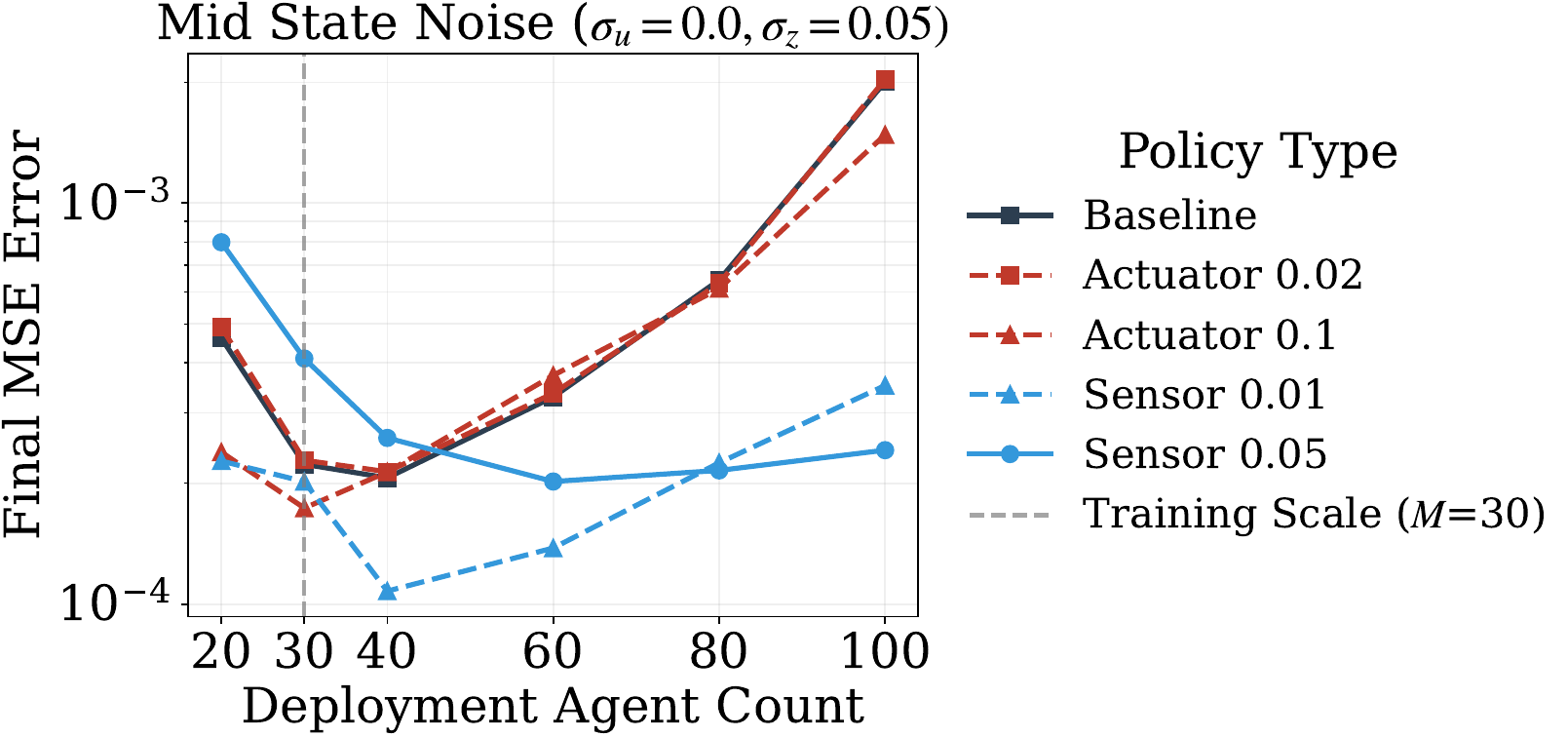}
        \caption{State Noise Resilience}
    \end{subfigure}
    \caption{\textbf{Robustness Analysis.} Comparative evaluation of Mean Squared Error (MSE) against deployment population size ($M$) in low- and mid-noise environments. The \textit{Baseline} policy, trained in a noise-free environment, (solid black) degrades significantly as $M$ increases, indicating overfitting to the training physics. Policies trained with sensor-noise injection (dashed blue) exhibit robust scalability, reducing error by up to an order of magnitude at $M=100$.}
    \label{fig:robustness_plots}
    \label{fig:rob_actuator}
\end{figure}

\begin{figure}[!htb]
    \centering
    \includegraphics[width=0.6\linewidth]{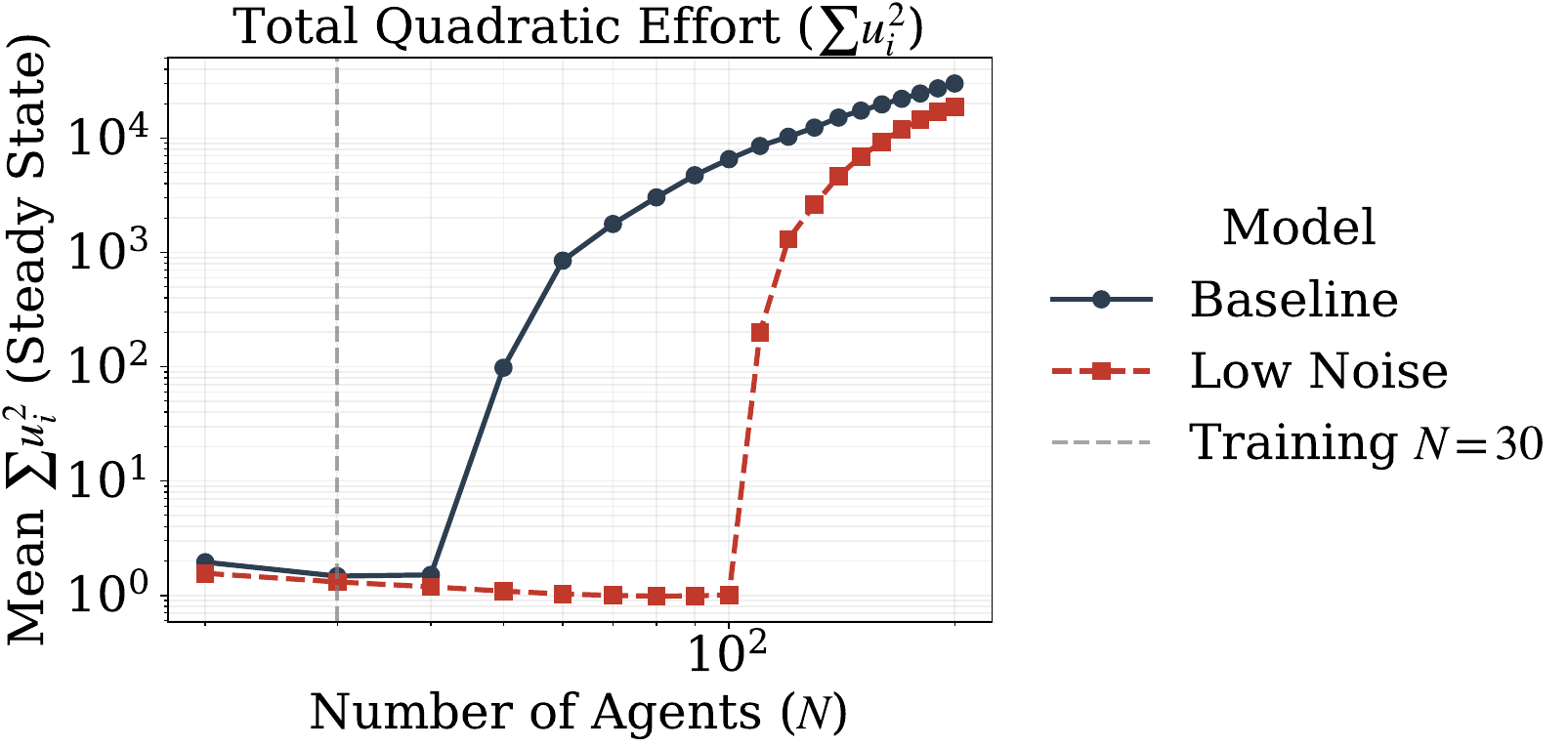}
    \caption{\textbf{The Role of Noise for more Robust Self-Normalization.} Log-log plot of total control effort vs. agent count. The Baseline policy (black), trained on a zero-noise environment, diverges from the ideal $O(1/M)$ scaling slope at high $M$, indicating a loss of coordination and the onset of competitive over-actuation. The policy trained in a low noise-scenario ($\sigma_z=0.01$ and $\sigma_u=0.02$) maintains the correct power law, preserving self normalization at higher cardinalities and ensuring better cardinality-invariant properties.}
    \label{fig:effort_scaling_log_steady}
\end{figure}

\subsection{What is the Role of Field of View Dimension?}
\label{app:fov_ablation}
We analyze the impact of the policy's observational range on performance by varying the Field of View (FOV) on the Fisher-KPP trajectory tracking task and comparing the performance of the different models across different swarm cardinalities.

Figure~\ref{fig:sensor_dim_senitivity} reveals a counter-intuive phenomenon: while access to global information theoretically bounds performance from above, the global policy creates instability and oscillations as the swarm size $M$ increases. Specifically, with a fixed, small network architecture the global policy degrades significantly for $M > 50$, whereas the local policy remains stable. We attribute this to the fact that the Branch network compresses the input field into a low-dimensional latent embedding and when the input is the global state, the network must encode the entire domain's complexity into this fixed bottleneck and high-frequency local error features, critical for immediate stabilization, are dilute by the global signal, preventing the agent from reacting precisely to local disturbances.

Furthermore, global observation creates an all-to-all dependency graph. If Agent $i$ observes a perturbation caused by Agent $j$ at the opposite end of the domain, a small network may erroneously trigger a corrective action. Given the delay inherent in physical transport (diffusive time scales), this reaction arrives out of phase, inducing system-wide oscillations.

\subsection{Can the Policy Transfer to Unseen Physics?}
\label{ssec:ablation_physics}
To verify that the learned operator $\mathcal{G}_{\bm{\theta}}$ encodes a generalized control policy, we evaluate its zero-shot transfer performance across variations in the underlying PDE coefficients. This evaluation spans two distinct control objectives: trajectory tracking on reaction-diffusion dynamics and stabilization of 2D turbulent flow.

We first stress the policy on the Fisher-KPP \textit{trajectory tracking} task by perturbing the dynamics relative to the training configuration ($\nu_{train}=0.005, \rho_{train}=3.0$). As shown in Figure~\ref{fig:physics_robustness}, the policy maintains precise tracking even under extreme shifts, such as the complete removal of diffusion ($\nu \to 0$) or the doubling of reaction rates ($\rho \to 8.0$). This confirms that the agents do not rely solely on passive diffusion to distribute control influence, nor are they brittle to faster instability growth when tracking dynamic targets.

We further validate this on the \textit{stabilization} of a 2D turbulent flow trained at a viscosity of $\nu_{train} = 5 \times 10^{-4}$. We deploy this fixed policy into environments with progressively lower viscosities (down to $2 \times 10^{-4}$), effectively increasing the Reynolds number. Figure~\ref{fig:viscosity_robustness} demonstrates that the policy successfully suppresses the energy cascade across this spectrum. Additionally, Figure~\ref{fig:turbulence_scalability} reveals a scalable synergy: increasing the actuator count consistently reduces steady-state error, effectively compensating for the harder dynamics of high-Reynolds-number flows.

We emphasize that these results demonstrate a degree of inherent robustness to parameter mismatch coming from the control feedback loop, rather than full generalization to arbitrary physics. We do not claim that the current method solves the parametric control problem, as the policy has no access to the physical coefficients. Extending this framework to parametric policies, which explicitly condition on physical parameters to enable broad generalization, remains a promising direction for future work.

\begin{figure}[!htb]
    \centering
    \includegraphics[width=0.9\linewidth]{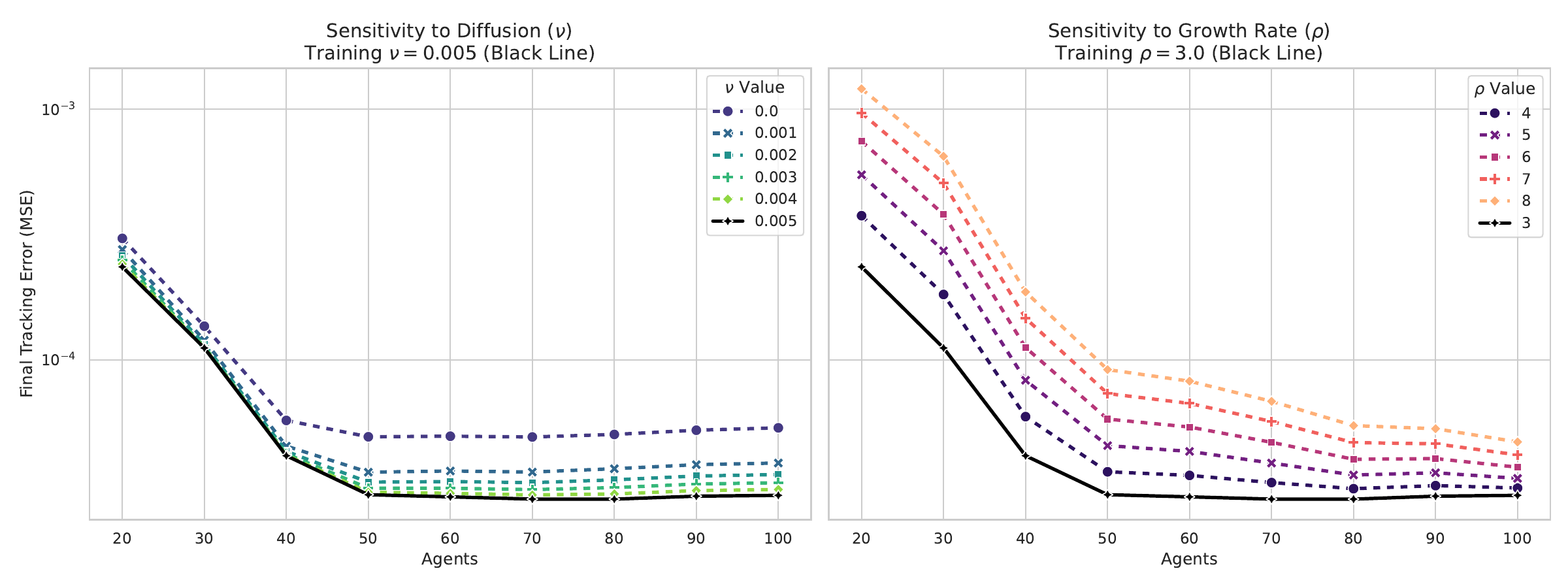}
    \caption{\textbf{Zero-Shot Robustness to Parametric Shifts (FKPP Tracking).} The policy, trained on fixed parameters ($\nu=0.005, \rho=3$), is deployed on mismatched dynamics to test generalization. (Left) The agent exhibits graceful degradation: tracking remains stable even when natural diffusion is eliminated ($\nu=0.0$), maintaining coherent behavior despite the expected increase in tracking error. (Right) Agents track targets in systems with large reaction rates, resulting in bounded performance deviation.}
    \label{fig:physics_robustness}
\end{figure}

\begin{figure}[!htb]
    \centering
    \includegraphics[width=0.8\linewidth]{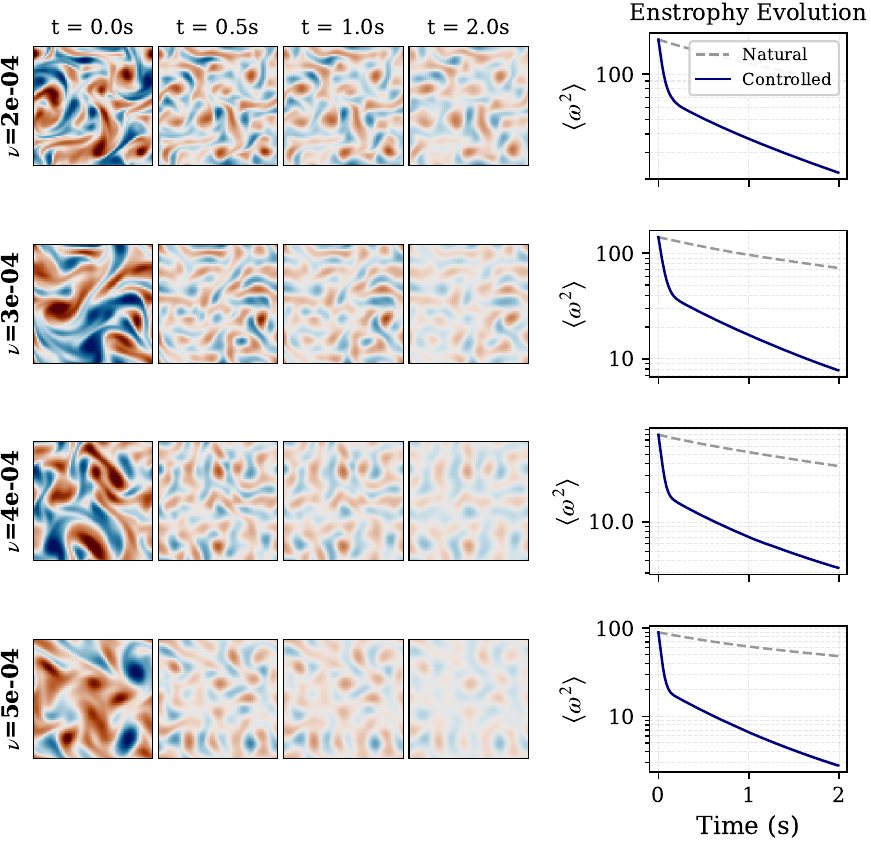}
    \caption{\textbf{Zero-Shot Transfer in 2D Turbulence (Stabilization).} The policy, trained at $\nu=5\times 10^{-4}$, is evaluated on more turbulent flows. (Left) Vorticity fields show suppression of chaotic structures even at $\nu=2\times 10^{-4}$. (Right) Enstrophy evolution confirms exponential decay across all tested regimes.}
    \label{fig:viscosity_robustness}
\end{figure}

\begin{figure}[!htb]
    \centering
    \includegraphics[width=0.5\linewidth]{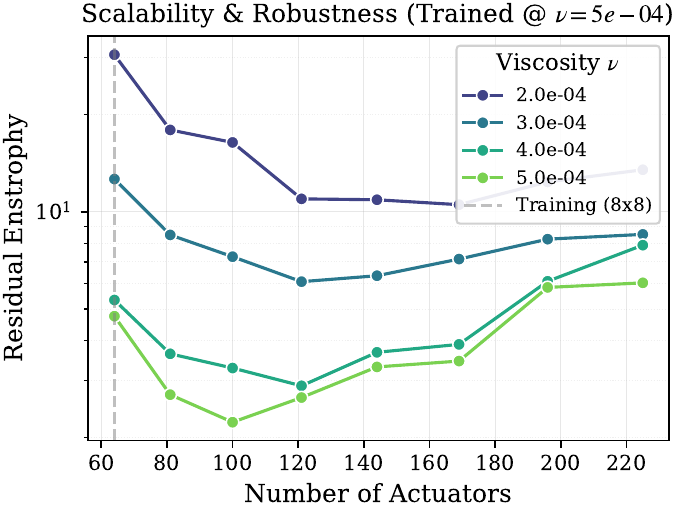}
    \caption{\textbf{Scalability under Physical Shift.} Residual enstrophy vs. agent count for different fluid viscosities. Performance improves monotonically with swarm size, showing that increased actuation density can compensate for the tougher dynamics of lower viscosity flows.}
    \label{fig:turbulence_scalability}
\end{figure}

\subsection{Model-Free MARL Validation of Cardinality Invariance}
\label{app:rl_details}
To further demonstrate that this cardinality invariance is an intrinsic property of the operator learning formulation rather than an artifact of the differentiable physics (model-based) training, we parameterized our policies using DeepONet and trained them via standard model-free multi-agent reinforcement learning (MARL) algorithms. Specifically, we evaluated Multi-Agent Proximal Policy Optimization (MAPPO) and Multi-Agent Twin Delayed DDPG (MATD3) on the 2D turbulent flow stabilization task.

\textbf{Network Architecture:} The DeepONet actor architecture for both MARL models utilizes a branch network consisting of a 2-layer Convolutional Neural Network (32 and 64 channels with $3\times3$ kernels and max pooling) followed by a dense layer with 128 units to process the local observation patches. The trunk network is composed of Fourier features (4 frequency bands) combined with a 2-layer MLP (64 and 128 units) to encode the spatial coordinates of the agents. These representations are merged via an element-wise product and passed through a final MLP to output the control actions. The centralized critic networks process the flattened global state (and joint actions for MATD3) using a 3-layer MLP with 512 and 256 hidden units.

\textbf{Training and Hyperparameters:} The MAPPO and MATD3 algorithms were trained for a total of 1,000,000 timesteps. For MAPPO, we utilized an actor and critic learning rate of $3\times10^{-4}$ (with linear decay), a clip ratio of 0.2, and 4 PPO optimization epochs per rollout with a minibatch size of 200. For MATD3, the models were trained using a batch size of 256, an actor learning rate of $1\times10^{-4}$, and a critic learning rate of $5\times10^{-4}$. The MATD3 target policy smoothing noise was set to 0.2 (clipped at 0.5) with a policy update delay frequency of 2 steps. 

As illustrated in Figure~\ref{fig:rl_cardinality_invariance}, the zero-shot scalability curve of the RL-trained policies closely mirrors that of CINOC. Both MAPPO and MATD3 exhibit robust cardinality invariance, maintaining low relative enstrophy when generalizing to swarms significantly larger than the training size ($N=64$). This confirms that the emergent scalability is fundamentally tied to mapping the multi-agent control problem to an operator learning framework, independent of the underlying optimization scheme.

\begin{figure}[!htb]
    \centering
    \includegraphics[width=1\linewidth]{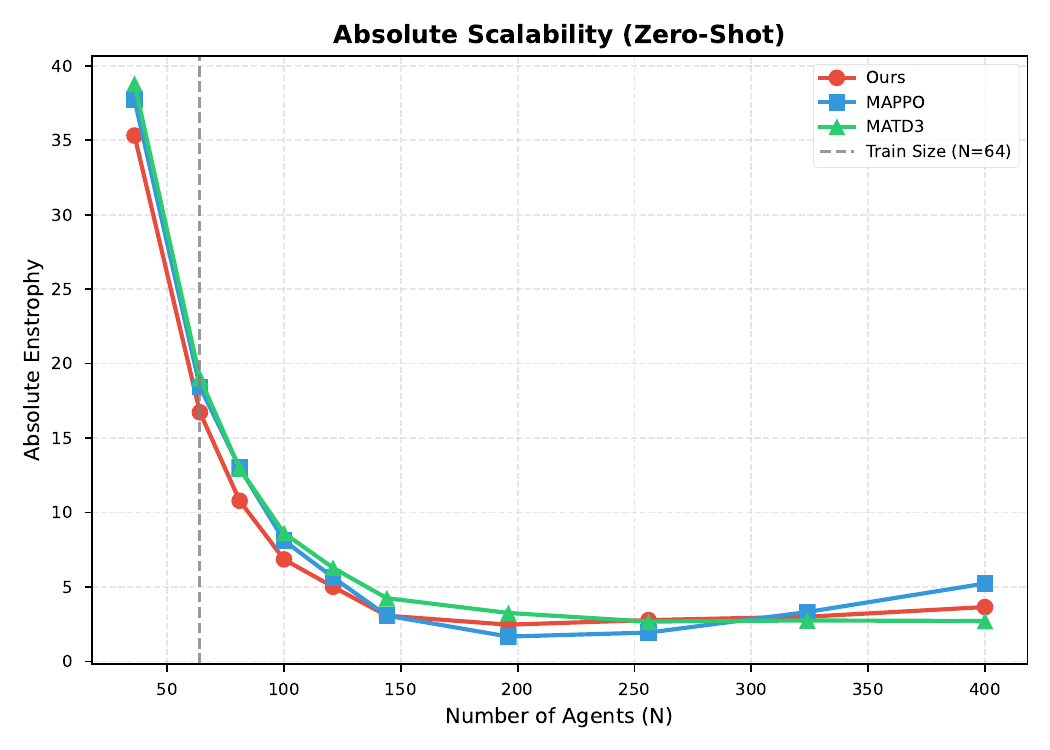}
    \caption{\textbf{Relative Scalability (Zero-Shot) of MARL-trained Operator Policies.} Comparison of zero-shot scalability on the 2D Turbulence stabilization task. The plot evaluates Relative Enstrophy (\%) as a function of the inference-time agent count ($M$), with all models trained at $M=64$. DeepONet policies trained with model-free algorithms (MAPPO and MATD3) exhibit almost identical cardinality invariance properties to CINOC, confirming that zero-shot scalability is a fundamental property of the operator parameterization rather than the training modality.}
    \label{fig:rl_cardinality_invariance}
\end{figure}

\subsection{Practical Guidelines and Rules of Thumb for Deployment}
\label{app:deployment_guidelines}

We appreciate that for practitioners, universal heuristics for deploying multi-agent PDE control would be highly desirable. However, the reality of deploying decentralized control across vastly different physical regimes, ranging from purely diffusive processes to chaotic, advective flows, is that optimal configurations are inherently dependent on the specific PDE and the downstream task. Because characteristic length scales, energy dissipation rates, and instability mechanisms vary so drastically between equations like Fisher-KPP and Navier-Stokes, a single, universal set of deployment rules is currently out of reach. 

While deploying this framework on a novel PDE currently requires task-specific empirical tuning, our experiments reveal several consistent heuristics to guide the selection of agent density, sensing radii, and actuation widths based on the domain's unique physical properties.

\subsubsection{Agent Population ($M$) and Controllability}
The minimum number of agents, $M_{\min}$, is fundamentally bounded by the theoretical controllability and observability of the underlying PDE. For instance, in the 1D Kuramoto-Sivashinsky (KS) equation, the number of linearly unstable spatial modes dictates a strict mathematical minimum for the number of actuators required to stabilize the system; deploying fewer agents makes stabilization physically impossible regardless of the policy's quality.

\textbf{Rule of Thumb:} Identify the highest-frequency spatial mode or characteristic length scale of the target instability (e.g., the Taylor microscale in turbulence or the wavelength of the most unstable mode). Ensure the baseline agent population $M_{\text{train}}$ provides sufficient spatial sampling to address these modes, conceptually similar to a spatial Nyquist-Shannon criterion.

\subsubsection{Actuation Range ($\sigma$)}
The width of the Gaussian control kernel, $\sigma$, must balance numerical stability with control precision. As noted in our scaling analysis, inter-agent distance decays linearly ($M^{-1}$) in 1D, but as $M^{-1/2}$ in 2D.

\textbf{Rule of Thumb:} To prevent grid-scale numerical artifacts and ensure smooth, differentiable gradients, $\sigma$ should be large enough to span at least $3\text{--}5$ spatial grid points of the discretized solver. However, to prevent destructive interference and saturation at higher cardinalities, $\sigma$ must remain strictly smaller than the average inter-agent distance ($\propto L/M_{\text{train}}$ in 1D, or $\propto \sqrt{A/M_{\text{train}}}$ in 2D).

\subsubsection{Field of View (FOV) and Sensing Radius}
While a global FOV theoretically maximizes available information, our ablation studies (Appendix \ref{app:fov_ablation}) demonstrate that it can induce oscillatory instability in large swarms due to out-of-phase reactions to distant perturbations.

\textbf{Rule of Thumb:} The FOV should be restricted to the local characteristic length scale of the PDE's dynamics (e.g., the typical width of a flame front or vortex) plus a buffer accounting for the advective/diffusive transport expected during the control horizon. For wave-like or highly chaotic systems, a strictly local FOV (e.g., $10\% \text{-} 20\%$ of the spatial domain) acts as a necessary regularizer to decouple agents, prevent spurious coupling, and preserve cardinality invariance.

\section{Experimental Details}
\label{app:experimental_details}

This section details the architectural configurations, training hyperparameters, and numerical discretization schemes used for our experiments. A quantitative performance comparison between the controlled evolution and the uncontrolled evolution is reported in Table~\ref{tab:distilled_results}.

\subsection{Trajectory Tracking}

For trajectory tracking tasks, we instantiate the general control formulation (Equation~\ref{eq:optimization_problem}) with kinetic (mobile) actuators and a tracking-specific objective. The optimization problem is defined as:

\begin{equation}
    \begin{aligned}
        \min_{\bm{\theta}} \quad & \mathcal{J}(\bm{\theta}) = \mathbb{E}_{z_0, \boldsymbol{\xi}_0} \left[ \int_0^T \left( \lambda_t \mathcal{L}_{\text{track}} + \lambda_u \mathcal{L}_{\text{effort}} + \lambda_s \mathcal{L}_{\text{safe}} \right) dt \right] \\
        \text{s.t.} \quad & \frac{\partial z}{\partial t} = \mathcal{F}(z) + \sum_{i=1}^{M} b(\bm{x}, \bm \xi_i(t)) u_i(t)  \\
        & \dot{\bm \xi}_i(t) = v_i(t) \\
        & |u_i(t)| \leq u_{\max}, \quad |\bm{v}_i(t)| \leq v_{\max}, \quad \bm \xi_i(t) \in \Omega \\
        & [\bm{u}(t), \bm{v}(t)] = \mathcal{G}_{\bm{\theta}}(z(\bm{x}, t), \boldsymbol{ \xi}(t))
    \end{aligned}
    \label{eq:tracking_control_problem}
\end{equation}
Here, $\mathcal{F}$ denotes the PDE-specific differential operator (e.g., diffusion, reaction-diffusion), while the forcing term instantiates the actuation operator $\mathcal{B}$ from Equation~\ref{eq:optimization_problem} as:

\begin{equation}
    \mathcal{B}(\bm x, \bm{u}(t)) = \sum_{i=1}^{M} b(\bm{x}, \bm \xi_i(t)) u_i(t)
    \label{eq:forcing_tracking}
\end{equation}

where $u_i(t)$ represents the control intensity at actuator $i$, $\bm \xi_i(t)$ is the actuator position, and $v_i(t)$ governs the actuator's velocity. The spatial influence of each actuator is modeled via a Gaussian kernel:

\begin{equation}
    b(\bm{x}, \bm \xi_i) = \frac{1}{(2\pi\sigma^2)^{d/2}} \exp\left(-\frac{|\bm{x} - \bm \xi_i|^2}{2\sigma^2}\right)
\end{equation}

where $\sigma$ is fixed to represent the physical actuation range, mimicking real-world actuator characteristics. This superposition structure allows the total forcing field to be a weighted sum of localized actuator contributions, naturally accommodating varying numbers of agents without architectural changes.

The composite loss function comprises three components:

\begin{equation}
    \begin{aligned}
        \mathcal{L}_{\text{track}} &= \int_{\Omega} (z(\bm{x},t) - z_{\text{target}}(\bm{x}))^2 \, d\bm{x}, \\
        \mathcal{L}_{\text{effort}} &= \frac{1}{M} \sum_{i=1}^M \left( u_i(t)^2 + \lambda_v v_i(t)^2 + \lambda_a \dot{v}_i(t)^2 \right), \\
        \mathcal{L}_{\text{safe}} &= \frac{1}{M} \sum_{i=1}^M \left[ \left( \sum_{j=1}^{M} b(\bm{x}, \bm \xi_j(t)) u_j(t) \right)^2 + \sum_{j \neq i} \text{ReLU}(R_{\text{safe}} - |\bm \xi_i - \bm \xi_j|)^2 \right].
    \end{aligned}
\end{equation}

The tracking term $\mathcal{L}_{\text{track}}$ measures the $L^2$ deviation from the target profile. The effort term $\mathcal{L}_{\text{effort}}$ penalizes control intensity, kinetic energy ($\lambda_v v_i^2$), and acceleration ($\lambda_a \dot{v}_i^2$) to promote energy-efficient policies. The safety term $\mathcal{L}_{\text{safe}}$ enforces collision safety by penalizing inter-agent distances below the safe radius $R_{\text{safe}}$ and constrains the total forcing magnitude to prevent actuator saturation.

The constraints ensure that control intensities remain bounded by $u_{\max}$, actuator velocities respect kinematic limits $v_{\max}$, and agents remain within the spatial domain $\Omega$ throughout the control horizon.

\textbf{Implementation.} Our differentiable PDE solvers are implemented in JAX \citep{jax2018github}, with density transport problems utilizing PhiFlow \citep{holl2024phiflow} for advection-diffusion dynamics. For large-scale or high-fidelity simulations, the Tesseract framework \citep{Haefner2025} provides scalable alternatives.

\paragraph{Heat 1D}
We instantiate the problem formulation for the one-dimensional heat equation with mobile source terms. The system is defined over the spatial domain $\Omega = [0, 1]$ subject to homogeneous Dirichlet boundary conditions, $z(0, t) = z(1, t) = 0$. The intrinsic dynamics are governed by linear diffusion:
\begin{equation}
    \frac{\partial z}{\partial t} = \nu \frac{\partial^2 z}{\partial {x}^2} + \sum_{i=1}^{M} u_i(t) b({x}, \xi_i(t))
\end{equation}
where $\nu = 0.2$ represents the diffusion coefficient. We simulate the environment with $M = 8$ controllable agents, each having an actuator width of $\sigma = 0.1$. The problem requires the policy to steer the field from an initial state $z_0(x)$ to a target state $z_{\text{target}}(x)$ within a fixed time horizon. To ensure robustness, both $z_0$ and $z_{\text{target}}$ are sampled from smooth Gaussian Random Fields (GRF) with Radial Basis Function (RBF) kernels. We utilize length scales of $l=0.2$ for initial conditions and $l=0.4$ for targets to generate diverse topological challenges. The physics are discretized on a spatial grid of $N_x=100$ points using a Crank-Nicolson scheme with a fixed timestep $\Delta t = 0.001$.

\paragraph{Fisher-KPP 1D.}
We apply the framework to the Fisher-Kolmogorov-Petrovsky-Piskunov (FKPP) equation, a fundamental model for nonlinear reaction-diffusion processes such as population dynamics, biological invasions, and epidemic spread. The system evolution on $\Omega = [0, 1]$ is governed by:
\begin{equation}
    \frac{\partial z}{\partial t} = \nu \frac{\partial^2 z}{\partial {x}^2} + \rho z(1 - z) + \sum_{i=1}^{M} u_i(t) b(x, \xi_i(t)).
\end{equation}
We select a diffusion coefficient $\nu = 0.005$ and a reaction rate $\rho = 3.0$, placing the system in a regime where both spatial transport and local logistic growth are significant. The control objective is achieved by $M = 20$ agents with an actuator influence width of $\sigma = 0.05$.

Due to the semi-linear nature of the equation, we employ an operator splitting numerical scheme. At each timestep $\Delta t = 0.001$, the non-linear reaction and control forcing are integrated via an explicit Euler step, followed by a fully implicit solution of the linear diffusion operator using a tridiagonal solver. To ensure physically meaningful population densities ($z \in [0, 1]$), the initialization and target data are not drawn from raw GRFs. Instead, we generate a latent GRF $g(x)$ and apply a transformation $z(x) \propto \exp(g(x)) \sin^2(\pi x)$, where the sine envelope strictly enforces Dirichlet boundary conditions while the exponential ensures positivity.

\paragraph{Heat 2D}
\label{par:heat2d}
We extend the experimental evaluation to the two-dimensional diffusion setting on the unit square $\Omega = [0, 1]^2$. The dynamics are governed by the 2D heat equation with isotropic diffusion:
\begin{equation}
    \label{eq:heat2d_dynamics}
    \frac{\partial z}{\partial t} = \nu \left( \frac{\partial^2 z}{\partial x^2} + \frac{\partial^2 z}{\partial y^2} \right) + \sum_{i=1}^{M} u_i(t) b(\bm{x}, \boldsymbol{\xi}_i(t))
\end{equation}
subject to homogeneous Dirichlet boundary conditions $z(\bm{x}, t) = 0$ for all $\bm{x} \in \partial \Omega$. We increase the swarm size to $M = 16$ agents to cover the larger domain area, with a corresponding increase in actuator width to $\sigma = 0.15$ and a collision safety radius of $R_{\text{safe}} = 0.08$.

Due to the increased computational cost of fully implicit schemes in 2D, the physics solver utilizes the Peaceman-Rachford Alternating Direction Implicit (ADI) method. This splits the time evolution into two half-steps, implicit in $x$ then implicit in $y$, allowing for the inversion of tridiagonal matrices rather than the full sparse Hessian. The domain is discretized on a $32 \times 32$ spatial grid. Initial and target configurations are sampled from 2D Gaussian Random Fields. To satisfy the boundary constraints without introducing discontinuities, we apply a 2D bridge correction that subtracts a bilinear interpolation of the field's boundary values.

\paragraph{Heat 2D with Obstacles}
To evaluate the collision avoidance capabilities of the policy in constrained environments, we introduce static forbidden regions $\mathcal{O} \subset \Omega$. The physical dynamics, solver configuration, and domain discretization follow the setup described in Section~\ref{par:heat2d}. 

The environment includes $K=3$ circular obstacles defined by centers $\bm{c}_k$ and radii $r_k=0.08$, positioned at $(0.15, 0.50)$, $(0.85, 0.50)$, and $(0.50, 0.15)$. The forbidden set is defined as $\mathcal{O} = \bigcup_{k=1}^K \{ \bm{x} : \| \bm{x} - \bm{c}_k \|_2 \le r_k \}$. Consequently, the optimization problem is modified to include a soft penalty term that discourages obstacle violation. The safety constraint for the $i$-th agent is updated to:
\begin{equation}
    \min_{j \neq i} \|\boldsymbol{\xi}_i - \boldsymbol{\xi}_j\| \ge R_{\text{safe}} \quad \text{and} \quad \min_{k} \|\boldsymbol{\xi}_i - \bm{c}_k\| \ge R_{\text{safe}} + r_k
\end{equation}
where $R_{\text{safe}}$ remains $0.08$. Target field generation and boundary corrections are identical to the unconstrained case.

\subsection{Stabilization}

The problem definition for our stabilization tasks is akin to Equation~\ref{eq:tracking_control_problem}, with the difference that $z_\text{target}$ is chosen to be 0. We also employ fixed actuators to showcase our paradigm in different settings.   

\paragraph{Kuramoto-Sivashinsky 1D (KS)}
\label{stab:ks1d}
We investigate the control of spatiotemporal chaos using the Kuramoto-Sivashinsky equation, a canonical model for pattern formation in flame fronts and thin-film flows. The system evolves on a periodic domain $\Omega = [0, L]$ according to:
\begin{equation}
    \frac{\partial z}{\partial t} + z \frac{\partial z}{\partial x} + \frac{\partial^2 z}{\partial x^2} + \frac{\partial^4 z}{\partial x^4} = \sum_{i=1}^{M} u_i(t) b({x}, \xi_i).
\end{equation}
We set the domain size to $L=22.0$, a regime exhibiting sustained chaotic behavior. The control interface consists of $M=8$ static actuators ($\dot{\xi}_i = 0$) equidistantly spaced across the domain, modulating their intensity $u_i(t)$ with a spatial influence width of $\sigma=1.0$. The objective is to suppress the intrinsic turbulence and stabilize the system to the zero state $z(x) \equiv 0$.

Unlike the diffusion-dominated examples, we utilize a pseudo-spectral numerical scheme to accurately resolve high-order derivatives. The linear operator $\mathcal{L} = -\partial_x^2 - \partial_x^4$ is computed diagonally in Fourier space ($k^2 - k^4$), while the non-linear advection term $u u_x$ is computed in real space and transformed via FFT. Time integration is performed using a semi-implicit Crank-Nicolson method with $\Delta t = 0.05$ on a grid of $N_x=128$ modes. To ensure robust stabilization capabilities, training initial conditions are not random noise but rather ``spun-up" states: random perturbations evolved autonomously for $T_{\text{warm}} = 5000$ time units to project the system onto its chaotic attractor before control begins.

\paragraph{Kuramoto-Sivashinsky 2D}
Extending the chaotic stabilization task described in Sec.~\ref{stab:ks1d}, we consider the two-dimensional variant on a periodic domain $\Omega = [0, L]^2$ with $L=32.0$. The dynamics are governed by the equation:
\begin{equation}
    \frac{\partial z}{\partial t} = -\nabla^2 z - \nabla^4 z - \frac{1}{2}|\nabla z|^2 + \sum_{i=1}^{M} u_i(t) b(\bm{x}, \boldsymbol{\xi}_i).
\end{equation}
The system is actuated by a dense grid of $M=100$ static agents arranged in a $10 \times 10$ lattice, with an actuator width of $\sigma = 1.2$.

The numerical challenges in 2D require a more sophisticated solver than the semi-implicit schemes used in 1D. We employ the Fourth-order Exponential Time Differencing Runge-Kutta (ETDRK4) method, which provides superior stability for stiff PDEs by treating the linear operator in Fourier space. Spatial derivatives are computed spectrally on a $64 \times 64$ grid with the $\frac{2}{3}$-rule applied for de-aliasing the non-linear term.
To align the control frequency with the characteristic timescales of the turbulence while maintaining numerical precision, we employ temporal action repetition: the physics integration proceeds at a fine timestep $\Delta t = 0.005$, while the control policy is queried every $k=20$ substeps (effective control $\Delta t_c = 0.1$). As in the 1D case, training is performed on fully developed chaotic states generated via a spin-up procedure.

\paragraph{Turbulence 2D}
We address the control of fully developed decaying isotropic turbulence. The system is modeled by the incompressible Navier-Stokes equations in their vorticity-streamfunction formulation, which casts the dynamics as a non-linear advection-diffusion problem for the vorticity $\omega$:
\begin{equation}
    \frac{\partial \omega}{\partial t} + (\bm{q} \cdot \nabla) \omega = \nu \nabla^2 \omega + \sum_{i=1}^{M} u_i(t) b(\bm{x}, \boldsymbol{\xi}_i).
\end{equation}
Unlike passive scalar advection, the velocity field $\bm{q}$ is coupled to the vorticity via the stream function Poisson equation $\nabla^2 \psi = -\omega$, where $\bm{q} = (\partial_y \psi, -\partial_x \psi)$. This non-linear self-advection drives the turbulent energy cascade. We simulate the system with viscosity $\nu = 5 \times 10^{-4}$ on a periodic domain $\Omega = [0, 1]^2$, using a Pseudo-Spectral solver with RK4 integration ($\Delta t = 0.01$) and $\frac{3}{2}$-rule de-aliasing to resolve the non-linear interactions. Similarly to the KS problems outlined above, we train the policy to drive the system to 0 after evolving the system to a turbulent state.

\subsection{Density Matching}
\label{ssec:density_matching}

For density matching tasks, we shift from controlling source terms to manipulating the advection field itself. The objective is to transport a passive scalar distribution $\rho(\bm{x},t)$ to match a static target configuration $\rho^*(\bm{x})$ while conserving total mass:

\begin{equation}
    \begin{aligned}
        \min_{\bm{\theta}} \quad & \mathcal{J}(\bm{\theta}) = \mathbb{E}_{\rho_0, \boldsymbol{\xi}_0} \left[ \lambda_w \mathcal{W}(\rho(T), \rho^*) + \int_0^T \left( \lambda_m \mathcal{L}_{\text{mass}} + \lambda_u \mathcal{L}_{\text{effort}} + \lambda_s \mathcal{L}_{\text{safe}} \right) dt \right] \\
        \text{s.t.} \quad & \frac{\partial \rho}{\partial t} + \nabla \cdot (\rho \bm{q}) = \nu \nabla^2 \rho \\
        & \bm{q}(\bm{x}, t) = \bm{q}_{\text{nat}} + \sum_{i=1}^{M} \bm{v}_i(t) b(\bm{x}, \bm \xi_i(t)) \\
        & \dot{\bm \xi}_i(t) = \alpha \bm{v}_i(t) \\
        & |\bm{v}_i(t)| \leq v_{\max}, \quad \bm \xi_i(t) \in \Omega
    \end{aligned}
    \label{eq:density_control_problem}
\end{equation}

This formulation differs fundamentally from trajectory tracking (Eq.~\ref{eq:tracking_control_problem}) in several key aspects. First, the velocity field $\bm{q}$ is constructed as a superposition of a background drift $\bm{q}_{\text{nat}}$ and local flow fields injected by mobile agents. Second, the primary objective $\mathcal{W}$ employs a moment-matching loss (computed via center-of-mass distance) rather than pointwise $L^2$ error. 

In the density transport task, the initial density $\rho(\bm{x}, t)$ and the target density $\rho^*(\bm{x})$ often have disjoint supports (i.e., they do not overlap). In this regime, pointwise metrics such as Mean Squared Error (MSE) or Cross-Entropy fail to provide informative gradients, as the derivative of the loss with respect to position is zero in the empty space between distributions.

To provide a dense gradient signal that guides the agents to transport mass across the domain, we minimize the discrepancy between the first and second statistical moments of the distributions. Let $\tilde{\rho} = \rho / \int_\Omega \rho \, d\bm{x}$ be the normalized probability measure of the density field. We compute the spatial centroid (first moment) $\boldsymbol{\mu} \in \mathbb{R}^2$:
\begin{equation}
    \boldsymbol{\mu}(\rho) = \int_\Omega \bm{x} \, \tilde{\rho}(\bm{x}) \, d\bm{x}.
\end{equation}
To ensure the transported density matches the shape of the target, we also compute the spatial spread (approximated by the standard deviation of the distribution relative to its centroid):
\begin{equation}
    \sigma(\rho) = \left( \int_\Omega \| \bm{x} - \boldsymbol{\mu}(\rho) \|^2 \, \tilde{\rho}(\bm{x}) \, d\bm{x} \right)^{1/2}.
\end{equation}
The total Moment Matching loss is defined as the weighted sum of the centroid distance (Wasserstein-1 approximation) and the spread difference:
\begin{equation}
    \mathcal{W}(\rho, \rho^*) = \| \boldsymbol{\mu}(\rho) - \boldsymbol{\mu}(\rho^*) \|_2 + \lambda_{\text{var}} | \sigma(\rho) - \sigma(\rho^*) |,
\end{equation}
where we set $\lambda_{\text{var}} = 0.5$ in our experiments. This formulation provides a convex gradient landscape that pulls the density toward the target location regardless of spatial overlap.

Last, the mass conservation penalty is defined as $\mathcal{L}_{\text{mass}} = \left(\int_\Omega \rho(\bm{x},t) \, d\bm{x} - m_0\right)^2$, ensuring the policy transports rather than creates or destroys density.

\paragraph{Advection-Diffusion 2D (AD)}
\label{stab:ad2d}

We instantiate this formulation on a rectangular domain $\Omega = [0, 1] \times [0, 1.25]$ with zero-flux (Neumann) boundary conditions. The control interface comprises $M=9$ mobile agents that inject velocity with Gaussian spatial influence (width $\sigma=0.2$) and maximum intensity $v_{\max}=0.8$. The diffusion coefficient $\nu=0.01$ creates an advection-dominated regime with mild diffusive smoothing.

To maintain stability under high-velocity transport, we employ a Semi-Lagrangian scheme. The advective term is resolved via backward trajectory tracing ($\bm{x}_{\text{src}} = \bm{x} - \bm{q}\Delta t$) with bilinear interpolation of the density field. This approach permits a relatively large timestep $\Delta t = 1.0$ on a $64 \times 80$ grid without violating CFL stability conditions. The simulation horizon is $T=150$ timesteps.

The loss function prioritizes terminal-state accuracy with weight $\lambda_w = 4.0$ on the moment-matching term, while strongly penalizing mass loss ($\lambda_m=6.0$) to prevent the policy from trivially minimizing error by artificially removing density. The effort and safety penalties ($\lambda_u, \lambda_s$) follow the same structure as in trajectory tracking.

\section{Training and Architecture Specifications}
\label{app:hyperparameters}

\subsection{Training Hyperparameters}
\label{app:training_hyperparams}

Table~\ref{tab:training_hyperparams} summarizes the optimization settings used for training the neural operator policies across all problem domains. All experiments employ the Adam optimizer with gradient clipping to ensure stable training dynamics. The choice of batch size reflects a trade-off between sample efficiency and GPU memory constraints, with smaller batches (4-16) used for high-resolution 2D problems and larger batches (32) for 1D cases. The turbulent flow problem requires fewer epochs (50) due to its higher per-epoch computational cost, while the density transport task benefits from extended training (1000 epochs) to learn the complex advection dynamics.

\begin{table}[h]
\centering
\caption{Training hyperparameters for all benchmark problems.}
\label{tab:training_hyperparams}
\begin{tabular}{lccccc}
\toprule
\textbf{Problem} & \textbf{Batch Size} & \textbf{Epochs} & \textbf{LR Schedule} & \textbf{Initial LR} & \textbf{Grad Clip} \\
\midrule
FKPP 1D & 32 & 500 & Exp. decay & 1e-3 & 1.0 \\
Heat 1D & 32 & 500 & Exp. decay & 1e-3 & 1.0 \\
Heat 2D & 16 & 500 & Exp. decay & 1e-3 & 1.0 \\
KS 1D & 32 & 500 & Exp. decay & 1e-3 & 1.0 \\
KS 2D & 4 & 500 & Warmup+Cosine & 5e-4 $\rightarrow$ 1e-3 $\rightarrow$ 1e-5 & 1.0 \\
Turb. 2D & 4 & 500 & Warmup+Cosine & 1e-4 $\rightarrow$ 5e-4 $\rightarrow$ 1e-6 & 1.0 \\
Density & 4 & 1000 & Exp. decay & 1e-3 & 1.0 \\
\bottomrule
\end{tabular}
\end{table}

\noindent\textbf{Learning Rate Schedules:}
\begin{itemize}
    \item \textbf{Exponential decay}: $\eta_t = \eta_0 \cdot \gamma^{t/T_{\text{decay}}}$ where $\gamma = 0.5$.
    \item \textbf{Warmup + Cosine}: Linear warmup to peak LR followed by cosine annealing to minimum LR.
\end{itemize}

\subsection{Loss Function Weights}
\label{app:loss_weights}

The total loss function balances multiple objectives: tracking accuracy, control effort minimization, constraint satisfaction, and problem-specific regularization. Table~\ref{tab:loss_weights} presents the weight coefficients for each loss component. The tracking weight ($\lambda_{\text{track}}$) is consistently the largest term to prioritize control performance. The effort penalty ($\lambda_{\text{effort}}$) is kept small to avoid over-penalizing control authority. Boundary violation penalties ($\lambda_{\text{bound}}$) are large (10-100) to enforce hard constraints. The collision avoidance weight ($\lambda_{\text{coll}}$) varies from 1.0 to 20.0 depending on agent density and obstacle presence. Problem-specific terms include enstrophy regularization for turbulence control and mass conservation penalties for density transport.

\begin{table}[h]
\centering
\caption{Loss function weight coefficients. Components: $\mathcal{L}_{\text{track}}$ (state tracking), $\mathcal{L}_{\text{effort}}$ (control regularization), $\mathcal{L}_{\text{bound}}$ (boundary violations), $\mathcal{L}_{\text{coll}}$ (collision avoidance), $\mathcal{L}_{\text{accel}}$ (velocity smoothness).}
\label{tab:loss_weights}
\begin{tabular}{lcccccl}
\toprule
\textbf{Problem} & $\lambda_{\text{track}}$ & $\lambda_{\text{effort}}$ & $\lambda_{\text{bound}}$ & $\lambda_{\text{coll}}$ & $\lambda_{\text{accel}}$ & \textbf{Other} \\
\midrule
FKPP 1D & 5.0 & 0.001 & 100.0 & 1.0 & 0.1 & --- \\
Heat 1D & 5.0 & 0.001 & 100.0 & 1.0 & 0.1 & --- \\
Heat 2D & 5.0 & 0.001 & 100.0 & 20.0 & 0.1 & --- \\
KS 1D & 10.0 & 0.001 & --- & --- & --- & --- \\
KS 2D & 100.0 & 1e-4 & --- & --- & --- & --- \\
Turb. 2D & --- & 1e-5 & --- & --- & --- & Enstrophy: 1.0 \\
Density & 4.0 (hold) & 0.001 & 10.0 & 14.0 & 0.1 & $\lambda_{\text{mass}}=6.0$, $\lambda_{\text{smooth}}=0.1$ \\
\bottomrule
\end{tabular}
\end{table}

\subsection{Policy Network Architecture}
\label{app:network_architecture}

All policies follow the DeepONet framework with problem-specific adaptations to the branch and trunk networks. Table~\ref{tab:network_arch} details the architectural choices for each benchmark. For 1D problems, we use purely MLP-based branch networks that process the local field observations. For 2D problems, we employ convolutional layers in the branch network to efficiently extract spatial features from the high-dimensional observation patches. The trunk network consistently uses Fourier features (4 frequency bands) combined with an MLP to encode agent positions, enabling the policy to learn position-dependent control strategies. The output dimensionality varies by problem: kinetic actuators produce both forcing ($u$) and velocity ($v$) outputs, while static actuators generate only forcing terms.

\begin{table}[h]
\centering
\small
\caption{Neural operator policy architectures. Fourier features use 4 frequency bands: $[\sin(2\pi k \xi), \cos(2\pi k \xi)]$ for $k \in \{1, 2, 4, 8\}$.}
\label{tab:network_arch}
\begin{tabular}{llll}
\toprule
\textbf{Problem} & \textbf{Branch Net} & \textbf{Trunk Net} & \textbf{Output} \\
\midrule
FKPP 1D  & MLP (64, 64) & Fourier (4 freq) + MLP (32, 32) & $u$, $v$ \\
Heat 1D  & MLP (64, 64) & Fourier (4 freq) + MLP (32, 32) & $u$, $v$ \\
Heat 2D & CNN (16, 32) + MLP & Fourier (4 freq) + MLP (32, 32) & $u$, $v$ \\
KS 1D & MLP (64, 64) & Fourier (4 freq) + MLP (32, 32) & $u$ only \\
KS 2D & CNN (16, 32) + MLP & Fourier (4 freq) + MLP (32, 32) & $u$ only \\
Turb. 2D & CNN (32, 64) + MLP & Fourier (4 freq) + MLP (32, 32) & $u$ only \\
Density& CNN (16, 32) + MLP & Fourier + Direction vectors & $v$ only \\
\bottomrule
\end{tabular}
\end{table}

\subsection{Local Observation Specifications}
\label{app:observation}

To enable scalable multi-agent control, each actuator observes only a local patch of the global state field, as detailed in Table~\ref{tab:observation}. The observation window size is chosen to balance information content with computational efficiency: 1D problems use sliding windows covering 8\% of the domain (20 points), while 2D problems employ square patches of size $12 \times 12$ or $20 \times 20$. All observations include the tracking error field and its spatial gradients, computed via finite differences. For periodic domains (KS and Turbulence), observations wrap around boundaries to maintain consistency. The turbulence problem uses vorticity ($\omega$) instead of the velocity field to provide a more compact representation of the flow dynamics.

\begin{table}[h]
\centering
\caption{Local observation specifications for each agent. Each observation includes the tracking error and its spatial gradients computed via finite differences.}
\label{tab:observation}
\begin{tabular}{llll}
\toprule
\textbf{Problem} & \textbf{Observation Type} & \textbf{Patch/Window Size} & \textbf{Channels} \\
\midrule
FKPP 1D & 1D sliding window & 8\% of domain $\rightarrow$ 20 points & error + $\nabla$error \\
Heat 1D & 1D sliding window & 8\% of domain $\rightarrow$ 20 points & error + $\nabla$error \\
Heat 2D & 2D local patch & $12 \times 12$ & error + $\nabla_x$ error + $\nabla_y$ error \\
KS 1D & 1D periodic window & 4 points $\rightarrow$ 20 & error + $\nabla$error \\
KS 2D & 2D periodic patch & $12 \times 12$ & error + $\nabla_x$ + $\nabla_y$ \\
Turb. 2D & 2D periodic patch & $20 \times 20$ & $\omega$ + $\nabla_x \omega$ + $\nabla_y \omega$ \\
Density & Full Field & -- & error + $\nabla$error \\
\bottomrule
\end{tabular}
\end{table}

\subsection{Control Saturation Limits}
\label{app:control_limits}

Control inputs are bounded to ensure physical feasibility and numerical stability, as specified in Table~\ref{tab:control_limits}. The bounds are enforced via $\tanh$ activation functions in the policy output layer. Forcing magnitude limits ($u_{\max}$) range from 1.0 for the sensitive KS 1D system to 75.0 for the high-Reynolds turbulent flow, reflecting the different energy scales of each problem. Velocity limits ($v_{\max}$) for kinetic actuators are set to allow meaningful displacement within each control timestep while preventing excessively rapid motion that could destabilize the coupled system. Static actuators (KS and Turbulence) generate only forcing terms and maintain fixed positions throughout the episode.

\begin{table}[h]
\centering
\caption{Control input saturation limits enforced via $\tanh(\cdot)$ activation. $u$: forcing term; $v$: actuator velocity.}
\label{tab:control_limits}
\begin{tabular}{lccl}
\toprule
\textbf{Problem} & $u_{\max}$ & $v_{\max}$ & \textbf{Control Type} \\
\midrule
FKPP 1D & 40.0 & 2.0 & Forcing + Velocity \\
Heat 1D & 40.0 & 2.0 & Forcing + Velocity \\
Heat 2D & 40.0 & 5.0 & Forcing + 2D Velocity \\
KS 1D & 1.0 & --- & Forcing only (static) \\
KS 2D & 10.0 & --- & Forcing only (static) \\
Turb. 2D & 75.0 & --- & Forcing only (static) \\
Density & --- & 0.8 & Velocity only (push) \\
\bottomrule
\end{tabular}
\end{table}

\subsection{Data Generation and Initial Conditions}
\label{app:data_generation}

Training data is generated by sampling diverse initial and target states to ensure policy generalization, as detailed in Table~\ref{tab:data_generation}. For linear and weakly nonlinear systems (Heat, FKPP), we use Gaussian Random Fields (GRF) with different length scales ($\ell$) for initial and target distributions to create meaningful tracking tasks. For chaotic systems (KS 1D, KS 2D), initial conditions are evolved from the chaotic attractor with extended warmup periods (200-5000 time units) to ensure the system resides on its strange attractor at training onset. The turbulent flow is initialized from a developed turbulent state and controlled toward the zero equilibrium. The density transport problem uses randomly placed Gaussian blobs to simulate diverse scalar distribution patterns.

\begin{table}[h]
\centering
\caption{Data generation protocols for training: initial condition (IC) types, target distributions, and warmup procedures. GRF denotes Gaussian Random Field.}
\label{tab:data_generation}
\begin{tabular}{llll}
\toprule
\textbf{Problem} & \textbf{IC Type} & \textbf{Target} & \textbf{Warmup} \\
\midrule
FKPP 1D & GRF ($\ell=0.2$) & GRF ($\ell=0.4$) & --- \\
Heat 1D & GRF ($\ell=0.2$) & GRF ($\ell=0.4$) & --- \\
Heat 2D & 2D GRF ($\ell=0.25$) & 2D GRF ($\ell=0.4$) & --- \\
KS 1D & Chaotic attractor & Zero state & 5000 time units \\
KS 2D & Chaotic attractor & Zero state & 200 time units \\
Turb. 2D & Turbulent state & Zero state & Warmup evolution \\
Density & Gaussian blobs & Gaussian blobs & --- \\
\bottomrule
\end{tabular}
\end{table}

\subsection{Hardware Configuration}
Experiments were executed on a local workstation equipped with an Intel(R) Core(TM) Ultra 9 275HX processor, featuring 24 physical cores and a maximum clock speed of 5.4 GHz. High-speed parallel computations for the PDE solvers and neural networks were offloaded to an NVIDIA GeForce RTX 5090 Laptop GPU with 24 GB of dedicated GDDR7 VRAM and a peak SM clock rate of 3090 MHz.

\subsection{Software Environment}
The computational environment was hosted on Windows Subsystem for Linux (WSL2) running Ubuntu 22.04 with Linux Kernel 6.6.87. We utilized the JAX framework for high-performance numerical computing and hardware acceleration. The software stack versioning is summarized in Table~\ref{tab:software_specs}.

\begin{table}[!htb]
    \centering
    \caption{Software stack and library versions used for evaluation.}
    \label{tab:software_specs}
    \begin{tabular}{lp{6cm}} 
        \toprule
        \textbf{Component} & \textbf{Local Workstation} \\ 
        \midrule
        OS                 & Ubuntu 22.04 (WSL2) \\
        Linux Kernel       & 6.6.87 \\
        NVIDIA Driver      & 581.57 \\
        JAX Backend        & CUDA (GPU) \\
        JAX Version        & 0.8.1 \\
        Optimization       & Optax (latest) \\
        Job Scheduler      & N/A \\ 
        \bottomrule
    \end{tabular}
\end{table}

\section{Baseline Reinforcement Learning Benchmarks}
\label{app:baselines}

To evaluate the proposed framework, we benchmarked it against standard continuous control reinforcement learning (RL) and multi-agent reinforcement learning (MARL) algorithms. This section details the shared architectures, algorithm implementations, and environment-specific configurations used across all baseline comparisons. For a complete list of training hyperparameters, please refer to Appendix \ref{app:hyperparameters}.

\subsection{Baseline Algorithms and Network Architectures}
We evaluated multiple algorithms, spanning on-policy, off-policy, centralized, decentralized, and differentiable predictive control paradigms to benchmark the 1D Heat Equation environment. Unless otherwise specified, all actor and critic networks share a common backbone: a Multi-Layer Perceptron (MLP) with two hidden layers of 256 units and ReLU activations. To maintain representational stability under large spatiotemporal state variations, the final hidden representation $x$ undergoes a soft $L_2$ normalization trick: $x = x / (||x||_2 + 1.0)$ prior to the output heads. Action outputs are bounded using $\tanh$ activations scaled by the environment's physical limits.

\begin{itemize}
    \item \textbf{Centralized RL (TD3 \& PPO):} These baselines treat the swarm as a single entity. The actor observes the flattened global state (current PDE state, target state, and all agent coordinates) and outputs the joint action space for all agents simultaneously. The critics evaluate the global state to predict a single joint $Q(s, a)$ for TD3 or a global state-value $V(s)$ for PPO. PPO actors additionally learn a state-independent $\log(\sigma)$ standard deviation parameter for exploration.
    
    \item \textbf{Decentralized MARL (MATD3 \& MAPPO):} These baselines follow the Centralized Training with Decentralized Execution (CTDE) paradigm. Decentralized actors share weights and process strictly localized observations (a concatenation of a localized spatial patch of the error field and a Positional Encoding of their coordinates) to output individual actions. The centralized critics observe the full global state to compute joint $Q$-values (MATD3) or individual, agent-specific state-values $V_i(s)$ (MAPPO) to solve spatial credit assignment.
    
    \item \textbf{Differentiable Predictive Control (DPC):} As a non-RL baseline, we utilize a DPC framework. Unlike the RL agents that learn via a learned value function and trial-and-error, the DPC policy exploits the exact gradients of the differentiable JAX-based PDE solver. The network is trained directly via Backpropagation Through Time (BPTT) by unrolling the dynamics over $T$ steps and minimizing the trajectory loss.
\end{itemize}

\subsection{Hyperparameters: Fisher-KPP 1D Environment}
\label{app:hyperparameters_fkpp}

Table \ref{tab:fkpp_hyperparams} details the specific network inputs, optimization schedules, and algorithm-specific hyperparameters utilized for the Fisher-KPP 1D benchmark. 

For spatial state dimensions, the global state representation consists of the current PDE state ($N_{grid}=100$), the target PDE state ($N_{grid}=100$), and all agent coordinates ($N_{agents}=20$), totaling 220 dimensions. The joint action space encompasses 40 dimensions (2 actions $\times$ 20 agents). Decentralized actors observe a local spatial patch combined with a Sinusoidal Positional Encoding (PE). 

\begin{table}[h]
\centering
\caption{Hyperparameters and Architectural Details for the FKPP 1D Benchmark}
\label{tab:fkpp_hyperparams}
\resizebox{\textwidth}{!}{%
\renewcommand{\arraystretch}{1.2}
\begin{tabular}{@{}llclll@{}}
\hline
\textbf{Algorithm} & \textbf{Actor Input} & \textbf{Critic Input} & \textbf{Learning Rate (Actor / Critic)} & \textbf{Batch Size} & \textbf{Alg-Specific Hyperparameters} \\ \hline

\textbf{DPC} & Global & N/A & Exp. Decay ($10^{-3} \to 0$, step 2k) & 32 & Updates: 500 Epochs, Grad Clip: 1.0 \\
 & & & & & Loss Weights: Track (5.0), Bound (100.0) \\ \hline

\textbf{Centralized TD3} & Global & Global + Joint Act & Constant ($10^{-4}$ / $5 \times 10^{-4}$) & 256 & Updates: 100k, Buffer: 125k, Grad Clip: 1.0 \\
 & & & & & $\tau=0.005$, Policy Delay: 2, Warmup: 500 \\ \hline

\textbf{Centralized PPO} & Global & Global & Linear Decay ($3 \times 10^{-4}$ / $10^{-3}$) & 1024 & Rollout: 128, PPO Epochs: 4, Grad Clip: 0.5 \\
 & & & & & Clip $\epsilon=0.2$, Ent. Coef: 0.01, VF Coef: 0.5 \\ \hline

\textbf{MATD3 (MARL)} & Local Patch + PE & Global + Joint Act & Linear Decay ($10^{-4}$ / $5 \times 10^{-4}$) & 1024 & Updates: 100k, Buffer: 125k, Grad Clip: 1.0 \\
 & & & & & $\tau=0.005$, Policy Delay: 2, Warmup: 500 \\ \hline

\textbf{MAPPO (MARL)} & Local Patch + PE & Global & Linear Decay ($3 \times 10^{-4}$ / $3 \times 10^{-4}$) & 1024 & Rollout: 300, PPO Epochs: 4, Grad Clip: 0.5 \\
 & & & & & Clip $\epsilon=0.2$, Ent. Coef: 0.01, VF Coef: 0.5 \\ \hline
\end{tabular}%
}
\end{table}

\subsection{Hyperparameters: Heat 1D Environment}
\label{app:hyperparameters_heat1d}

Table \ref{tab:heat1d_hyperparams} details the specific network inputs, optimization schedules, and algorithm-specific hyperparameters utilized for the Heat 1D benchmark. 

For the Heat 1D task, the swarm consists of fewer agents ($N_{agents}=8$). Consequently, the flattened global state representation encompasses 208 dimensions (current PDE state $N_{grid}=100$, target PDE state $N_{grid}=100$, and 8 agent coordinates). The centralized joint action space spans 16 dimensions (2 actions $\times$ 8 agents). As with the FKPP setup, decentralized MARL actors observe a localized spatial patch augmented with a PE.

\begin{table}[h]
\centering
\caption{Hyperparameters and Architectural Details for the Heat 1D Benchmark}
\label{tab:heat1d_hyperparams}
\resizebox{\textwidth}{!}{%
\renewcommand{\arraystretch}{1.2}
\begin{tabular}{@{}llclll@{}}
\hline
\textbf{Algorithm} & \textbf{Actor Input} & \textbf{Critic Input} & \textbf{Learning Rate (Actor / Critic)} & \textbf{Batch Size} & \textbf{Alg-Specific Hyperparameters} \\ \hline

\textbf{DPC} & Global & N/A & Exp. Decay ($10^{-3} \to 0$, step 2k) & 32 & Updates: 500 Epochs, Grad Clip: 1.0 \\
 & & & & & Loss Weights: Track (5.0), Bound (100.0), Accel (0.1) \\ \hline

\textbf{Centralized TD3} & Global & Global + Joint Act & Constant ($10^{-4}$ / $5 \times 10^{-4}$) & 256 & Updates: 100k, Buffer: 125k, Grad Clip: 1.0 \\
 & & & & & $\tau=0.005$, Policy Delay: 2, Warmup: 500 \\ \hline

\textbf{Centralized PPO} & Global & Global & Linear Decay ($3 \times 10^{-4}$ / $10^{-3}$) & 1024 & Rollout: 128, PPO Epochs: 4, Grad Clip: 1.0 \\
 & & & & & Clip $\epsilon=0.2$, Ent. Coef: 0.01, VF Coef: 0.5 \\ \hline

\textbf{MATD3 (MARL)} & Local Patch + PE & Joint Local Obs + Joint Act & Linear Decay ($10^{-4}$ / $5 \times 10^{-4}$) & 256 & Updates: 100k, Buffer: 125k, Grad Clip: 1.0 \\
 & & & & & $\tau=0.005$, Policy Delay: 2, Warmup: 500 \\ \hline

\textbf{MAPPO (MARL)} & Local Patch + PE & Global & Linear Decay ($3 \times 10^{-4}$ / $3 \times 10^{-4}$) & 1024 & Rollout: 300, PPO Epochs: 4, Grad Clip: 0.5 \\
 & & & & & Clip $\epsilon=0.2$, Ent. Coef: 0.01, VF Coef: 0.5 \\ \hline
\end{tabular}%
}
\end{table}

\subsection{Hyperparameters: Heat 2D Environment}
\label{app:hyperparameters_heat2d}

Table \ref{tab:heat2d_hyperparams} outlines the network architectures, optimization configurations, and algorithm-specific hyperparameters for the Heat 2D benchmark. 

Due to the increased spatial complexity of the 2D grid ($32 \times 32$), the flattened global state representation expands significantly to 2,080 dimensions (1,024 current state, 1,024 target state, and $16 \times 2$ agent coordinates). The centralized joint action space spans 48 dimensions ($[u, v_x, v_y]$ per agent). Decentralized MARL actors observe a localized 2D spatial patch combined with a 2D PE.

\begin{table}[h]
\centering
\caption{Hyperparameters and Architectural Details for the Heat 2D Benchmark}
\label{tab:heat2d_hyperparams}
\resizebox{\textwidth}{!}{%
\renewcommand{\arraystretch}{1.2}
\begin{tabular}{@{}llclll@{}}
\hline
\textbf{Algorithm} & \textbf{Actor Input} & \textbf{Critic Input} & \textbf{Learning Rate (Actor / Critic)} & \textbf{Batch Size} & \textbf{Alg-Specific Hyperparameters} \\ \hline

\textbf{DPC} & Global & N/A & Exp. Decay ($10^{-3} \to 0$, step 2k) & 16 & Updates: 500 Epochs, Grad Clip: 1.0 \\
 & & & & & Loss Weights: Track (5.0), Coll (20.0), Accel (0.1) \\ \hline

\textbf{Centralized TD3} & Global & Global + Joint Act & Constant ($10^{-4}$ / $5 \times 10^{-4}$) & 256 & Updates: 100k, Buffer: 125k, Grad Clip: 1.0 \\
 & & & & & $\tau=0.005$, Policy Delay: 2, Warmup: 500 \\ \hline

\textbf{Centralized PPO} & Global & Global & Constant ($3 \times 10^{-4}$ / $10^{-3}$) & 1024 & Rollout: 128, PPO Epochs: 4, Grad Clip: 0.5 \\
 & & & & & Clip $\epsilon=0.2$, Ent. Coef: 0.01, VF Coef: 0.5 \\ \hline

\textbf{MATD3 (MARL)} & Local Patch + PE & Joint Local Obs + Joint Act & Constant ($10^{-4}$ / $5 \times 10^{-4}$) & 256 & Updates: 100k, Buffer: 125k, Grad Clip: 1.0 \\
 & & & & & $\tau=0.005$, Policy Delay: 2, Warmup: 500 \\ \hline

\textbf{MAPPO (MARL)} & Local Patch + PE & Global & Linear Decay ($3 \times 10^{-4}$ / $3 \times 10^{-4}$) & 1024 & Rollout: 128, PPO Epochs: 4, Grad Clip: 0.5 \\
 & & & & & Clip $\epsilon=0.2$, Ent. Coef: 0.01, VF Coef: 0.5 \\ \hline
\end{tabular}%
}
\end{table}

\subsection{Hyperparameters: Heat 2D Environment (with Obstacles)}
\label{app:hyperparameters_heat2d_obstacles}

Table \ref{tab:heat2d_obs_hyperparams} outlines the network architectures, optimization configurations, and algorithm-specific hyperparameters for the Heat 2D benchmark with static obstacles. 

Operating on a $32 \times 32$ spatial grid, the flattened global state representation encompasses 2,080 dimensions (1,024 current state, 1,024 target state, and $16 \times 2$ agent coordinates). The centralized joint action space spans 48 dimensions ($[u, v_x, v_y]$ per agent). Decentralized MARL actors observe a localized 2D spatial patch combined with a 2D PE. All models incorporate steep penalties for breaching the safety radius ($R_{safe\_obstacle}=0.04$) of the three stationary obstacles.

\begin{table}[h]
\centering
\caption{Hyperparameters and Architectural Details for the Heat 2D (Obstacles) Benchmark}
\label{tab:heat2d_obs_hyperparams}
\resizebox{\textwidth}{!}{%
\renewcommand{\arraystretch}{1.2}
\begin{tabular}{@{}llclll@{}}
\hline
\textbf{Algorithm} & \textbf{Actor Input} & \textbf{Critic Input} & \textbf{Learning Rate (Actor / Critic)} & \textbf{Batch Size} & \textbf{Alg-Specific Hyperparameters} \\ \hline

\textbf{DPC} & Global & N/A & Exp. Decay ($10^{-3} \to 0$, step 2k) & 16 & Updates: 500 Epochs, Grad Clip: 1.0 \\
 & & & & & Loss Weights: Track (5.0), Coll-Agent (20.0), Coll-Obs (100.0) \\ \hline

\textbf{Centralized TD3} & Global & Global + Joint Act & Constant ($10^{-4}$ / $5 \times 10^{-4}$) & 256 & Updates: 100k, Buffer: 125k, Grad Clip: 1.0 \\
 & & & & & $\tau=0.005$, Policy Delay: 2, Warmup: 500 \\ \hline

\textbf{Centralized PPO} & Global & Global & Constant ($3 \times 10^{-4}$ / $10^{-3}$) & 1024 & Rollout: 128, PPO Epochs: 4, Grad Clip: 0.5 \\
 & & & & & Clip $\epsilon=0.2$, Ent. Coef: 0.01, VF Coef: 0.5 \\ \hline

\textbf{MATD3 (MARL)} & Local Patch + PE & Joint Local Obs + Joint Act & Constant ($10^{-4}$ / $5 \times 10^{-4}$) & 256 & Updates: 100k, Buffer: 125k, Grad Clip: 1.0 \\
 & & & & & $\tau=0.005$, Policy Delay: 2, Warmup: 500 \\ \hline

\textbf{MAPPO (MARL)} & Local Patch + PE & Global & Linear Decay ($3 \times 10^{-4} \to 0$ / $3 \times 10^{-4} \to 0$) & 1024 & Rollout: 128, PPO Epochs: 4, Grad Clip: 0.5 \\
 & & & & & Clip $\epsilon=0.2$, Ent. Coef: 0.01, VF Coef: 0.5 \\ \hline
\end{tabular}%
}
\end{table}

\subsection{Hyperparameters: Kuramoto-Sivashinsky (KS) 1D Environment}
\label{app:hyperparameters_ks1d}

Table \ref{tab:ks1d_hyperparams} outlines the network architectures, optimization configurations, and algorithm-specific hyperparameters utilized for the highly chaotic Kuramoto-Sivashinsky (KS) 1D benchmark. 

In this stabilization task, 8 static actuators are distributed across a spatial domain of $N_{grid}=128$. For the centralized RL baselines, the models observe only the current dimensional PDE state, with the actor outputting an 8-dimensional joint action vector. The DPC baseline observes an augmented 264-dimensional global state (current state, target state, and fixed actuator coordinates). Notably, to maintain stability against the rapidly diverging KS dynamics, the off-policy RL algorithms utilize significantly lower learning rates compared to other environments, and observation states are scaled by a normalization factor of 5.0.

\begin{table}[h]
\centering
\caption{Hyperparameters and Architectural Details for the KS 1D Benchmark}
\label{tab:ks1d_hyperparams}
\resizebox{\textwidth}{!}{%
\renewcommand{\arraystretch}{1.2}
\begin{tabular}{@{}llclll@{}}
\hline
\textbf{Algorithm} & \textbf{Actor Input} & \textbf{Critic Input} & \textbf{Learning Rate (Actor / Critic)} & \textbf{Batch Size} & \textbf{Alg-Specific Hyperparameters} \\ \hline

\textbf{DPC} & Global & N/A & Exp. Decay ($10^{-3} \to 0$, step 2k) & 32 & Updates: 500 Epochs, Grad Clip: 1.0 \\
 & & & & & Loss Weights: Track (10.0), Effort (0.001) \\ \hline

\textbf{Centralized TD3} & Global & Global + Joint Act & Constant ($10^{-6}$ / $5 \times 10^{-5}$) & 512 & Updates: 100k, Buffer: 200k, Grad Clip: 1.0 \\
 & & & & & $\tau=0.005$, Policy Delay: 2, Warmup: 500 \\ \hline

\textbf{Centralized PPO} & Global & Global & Constant ($3 \times 10^{-4}$ / $10^{-3}$) & 1024 & Rollout: 128, PPO Epochs: 4, Grad Clip: 0.5 \\
 & & & & & Clip $\epsilon=0.2$, Ent. Coef: 0.01, VF Coef: N/A \\ \hline

\textbf{Indep. TD3 (MARL)} & Local Patch + PE + $\mu$ & Local Obs + PE + Local Act & Constant ($10^{-6}$ / $5 \times 10^{-5}$) & 512 & Updates: 100k, Buffer: 125k, Grad Clip: 1.0 \\
 & & & & & $\tau=0.005$, Policy Delay: 2, Warmup: 500 \\ \hline

\textbf{MAPPO (MARL)} & Global + Pos (1) & Global & Linear Decay ($3 \times 10^{-4} \to 0$ / $3 \times 10^{-4} \to 0$) & 1024 & Rollout: 128, PPO Epochs: 4, Grad Clip: 0.5 \\
 & & & & & Clip $\epsilon=0.2$, Ent. Coef: 0.01 \\ \hline
\end{tabular}%
}
\end{table}

\subsection{Hyperparameters: Kuramoto-Sivashinsky (KS) 2D Environment}
\label{app:hyperparameters_ks2d}

Table \ref{tab:ks2d_hyperparams} details the specific network inputs, optimization schedules, and algorithm-specific hyperparameters utilized for the highly complex Kuramoto-Sivashinsky (KS) 2D benchmark. 

This environment scales the stabilization task to a $64 \times 64$ spatial grid with a dense array of $N_{agents}=100$ static actuators. Consequently, the flattened spatial state spans 4,096 dimensions, and the joint action space encompasses 100 dimensions. The DPC baseline observes an augmented 8,392-dimensional global state (current state, target state, and fixed 2D actuator coordinates). To maintain stability against the rapidly diverging 2D chaotic dynamics, the off-policy RL algorithms utilize significantly reduced learning rates, and all RL/MARL state observations are scaled down by a normalization factor of 5.0.

\begin{table}[h]
\centering
\caption{Hyperparameters and Architectural Details for the KS 2D Benchmark}
\label{tab:ks2d_hyperparams}
\resizebox{\textwidth}{!}{%
\renewcommand{\arraystretch}{1.2}
\begin{tabular}{@{}llclll@{}}
\hline
\textbf{Algorithm} & \textbf{Actor Input} & \textbf{Critic Input} & \textbf{Learning Rate (Actor / Critic)} & \textbf{Batch Size} & \textbf{Alg-Specific Hyperparameters} \\ \hline

\textbf{DPC} & Global & N/A & Cosine Decay ($10^{-3} \to 10^{-5}$, warmup 50) & 4 & Updates: 500 Epochs, Grad Clip: 1.0 \\
 & & & & & Loss Weights: Track (100.0), Effort ($10^{-4}$) \\ \hline

\textbf{Centralized TD3} & Global & Global + Joint Act & Constant ($10^{-6}$ / $5 \times 10^{-5}$) & 128 & Updates: 100k, Buffer: 12.5k, Grad Clip: 1.0 \\
 & & & & & $\tau=0.005$, Policy Delay: 2, Warmup: 500 \\ \hline

\textbf{Centralized PPO} & Global & Global & Linear Decay ($10^{-4} \to 0$ / $10^{-3} \to 0$) & 1600 & Rollout: 50, PPO Epochs: 4, Grad Clip: 0.5 \\
 & & & & & Clip $\epsilon=0.2$, Ent. Coef: 0.001, Target KL: 0.05 \\ \hline

\textbf{Indep. TD3 (MARL)} & Local Patch + PE + $\mu$ & Local Obs + PE + Local Act & Constant ($10^{-6}$ / $5 \times 10^{-5}$) & 128 & Updates: 100k, Buffer: 12.5k, Grad Clip: 1.0 \\
 & & & & & $\tau=0.005$, Policy Delay: 2, Warmup: 500 \\ \hline

\textbf{MAPPO (MARL)} & Local Patch + PE & Global & Linear Decay ($3 \times 10^{-4} \to 0$ / $3 \times 10^{-4} \to 0$) & 1600 & Rollout: 50, PPO Epochs: 4, Grad Clip: 0.5 \\
 & & & & & Clip $\epsilon=0.2$, Ent. Coef: 0.01 \\ \hline
\end{tabular}%
}
\end{table}

\subsection{Hyperparameters: Turbulence 2D Environment}
\label{app:hyperparameters_turb2d}

Table \ref{tab:turb2d_hyperparams} details the specific network inputs, optimization schedules, and algorithm-specific hyperparameters utilized for the decaying Turbulence 2D benchmark. 

To preserve the spatial alignment between the 2D fluid dynamics ($64 \times 64$ grid) and the 64 static actuators (arranged in an $8 \times 8$ grid), all centralized baselines (DPC, TD3, PPO) replace standard Multi-Layer Perceptrons with Fully Convolutional Networks (FCNs). The actors downsample the 4,096-dimensional global vorticity field directly into an $8 \times 8 \times 1$ spatial action grid. Centralized critics utilize Global Average Pooling (GAP) to compute global values. 

For the decentralized MARL baselines, the local observation is expanded into a $20 \times 20 \times 3$ tensor patch containing the local vorticity alongside its spatial gradients ($\nabla_x, \nabla_y$). This patch is flattened and concatenated with a PE before passing through the dense layers. Due to the extreme chaos of the system, state observations across all off-policy RL algorithms are divided by a normalization factor of 50.0.

\begin{table}[h]
\centering
\caption{Hyperparameters and Architectural Details for the Turbulence 2D Benchmark}
\label{tab:turb2d_hyperparams}
\resizebox{\textwidth}{!}{%
\renewcommand{\arraystretch}{1.2}
\begin{tabular}{@{}llclll@{}}
\hline
\textbf{Algorithm} & \textbf{Actor Input} & \textbf{Critic Input} & \textbf{Learning Rate (Actor / Critic)} & \textbf{Batch Size} & \textbf{Alg-Specific Hyperparameters} \\ \hline

\textbf{DPC} & Global (FCN) & N/A & Warmup Cosine ($10^{-4} \to 5 \times 10^{-4} \to 10^{-6}$) & 4 & Updates: 500 Epochs, Grad Clip: 1.0 \\
 & & & & & Loss Weights: Enstrophy (1.0), Effort ($10^{-5}$) \\ \hline

\textbf{Centralized TD3} & Global (FCN) & Global + Spatial Act (FCN) & Cosine Decay ($10^{-4} \to 10^{-5}$ / $10^{-4} \to 10^{-5}$) & 128 & Updates: 100k, Buffer: 100k, Grad Clip: 1.0 \\
 & & & & & $\tau=0.005$, Policy Delay: 2, Warmup: 500 \\ \hline

\textbf{Centralized PPO} & Global (FCN) & Global (FCN) & Cosine Decay ($10^{-5} \to 0$ / $10^{-3} \to 0$) & 1600 & Rollout: 150, PPO Epochs: 4, Grad Clip: 0.5 \\
 & & & & & Clip $\epsilon=0.1$, Ent. Coef: 0.001, Target KL: 0.05 \\ \hline

\textbf{Indep. TD3 (MARL)} & Local Patch + PE & Local Patch + PE + Act & Constant ($3 \times 10^{-5}$ / $3 \times 10^{-5}$) & 512 & Updates: 50k, Buffer: 100k, Grad Clip: 1.0 \\
 & & & & & $\tau=0.005$, Policy Delay: 2, Warmup: 500 \\ \hline

\textbf{Indep. PPO (MAPPO)} & Local Patch + PE & Local Patch + PE & Linear Decay ($3 \times 10^{-4} \to 0$ / $3 \times 10^{-4} \to 0$) & 400 & Rollout: 150, PPO Epochs: 4, Grad Clip: 0.5 \\
 & & & & & Clip $\epsilon=0.2$, Ent. Coef: 0.01 \\ \hline
\end{tabular}%
}
\end{table}

\subsection{Hyperparameters: Density 2D Environment (Navier-Stokes)}
\label{app:hyperparameters_density2d}

Table \ref{tab:density2d_hyperparams} outlines the network architectures, optimization configurations, and algorithm-specific hyperparameters for the Density 2D benchmark. 

In this environment, a swarm of $N_{agents}=9$ mobile injectors operates over a $64 \times 80$ fluid grid. The flattened global state encompasses 10,258 dimensions (5,120 current density, 5,120 target density, and 18 agent coordinates). The centralized joint action space spans 18 dimensions ($[v_x, v_y]$ per agent).

\begin{table}[h]
\centering
\caption{Hyperparameters and Architectural Details for the Density 2D Benchmark}
\label{tab:density2d_hyperparams}
\resizebox{\textwidth}{!}{%
\renewcommand{\arraystretch}{1.2}
\begin{tabular}{@{}llclll@{}}
\hline
\textbf{Algorithm} & \textbf{Actor Input} & \textbf{Critic Input} & \textbf{Learning Rate (Actor / Critic)} & \textbf{Batch Size} & \textbf{Alg-Specific Hyperparameters} \\ \hline

\textbf{DPC} & Global & N/A & Exp. Decay ($10^{-3} \to 0$, step 2k) & 4 & Updates: 1000 Epochs, Grad Clip: 1.0 \\
 & & & & & Loss Weights: Track (10.0), Coll (1000.0) \\ \hline

\textbf{Centralized TD3} & Global & Global + Joint Act& Constant ($10^{-4}$ / $5 \times 10^{-4}$) & 256 & Updates: 100k, Buffer: 50k, Grad Clip: 1.0 \\
 & & & & & $\tau=0.005$, Policy Delay: 2, Warmup: 500 \\ \hline

\textbf{Centralized PPO} & Global & Global & Linear Decay ($10^{-4} \to 0$ / $5 \times 10^{-4} \to 0$) & 1600 & Rollout: 50, PPO Epochs: 4, Grad Clip: 0.5 \\
 & & & & & Clip $\epsilon=0.2$, Ent. Coef: 0.001, Target KL: 0.05 \\ \hline

\textbf{MATD3 (MARL)} & Global + One-Hot ID (10,267) & Global + Joint Act& Constant ($10^{-4}$ / $5 \times 10^{-4}$) & 256 & Updates: 100k, Buffer: 50k, Grad Clip: 1.0 \\
 & & & & & $\tau=0.005$, Policy Delay: 2, Warmup: 500 \\ \hline

\textbf{MAPPO (MARL)} & Global Obs Stack & Global & Linear Decay ($3 \times 10^{-4} \to 0$ / $3 \times 10^{-4} \to 0$) & 1600 & Rollout: 50, PPO Epochs: 4, Grad Clip: 0.5 \\
 & & & & & Clip $\epsilon=0.2$, Ent. Coef: 0.001 \\ \hline
\end{tabular}%
}
\end{table}

This section provides detailed hyperparameter specifications for all experiments. These tables supplement the problem descriptions and results already presented in the main appendix sections.

\subsection{Nonlinear MPC (NMPC) Baseline}
\label{app:nmpc_baseline}

To contextualize the performance of our neural operator policies against classical optimal control, we also benchmarked a Nonlinear Model Predictive Control (NMPC) algorithm \cite{allgower2012nonlinear} on the 1D Kuramoto-Sivashinsky (KS) stabilization task. The NMPC solves a finite-horizon open-loop optimal control problem at each time step, explicitly utilizing the differentiable KS physics solver.

Table~\ref{tab:nmpc_ks1d_performance} details the tracking performance and computational timing of the NMPC baseline. While the NMPC achieves competitive stabilization performance (Final Window MSE of $1.37 \times 10^{-4}$), it highlights the fundamental limitation of online optimization for PDE control: computational latency. The solver requires an average of 592.09 ms per step (reaching up to 1.7 seconds during complex chaotic transients). In multi-agent or 2D settings, this online computational cost scales poorly, rendering real-time control intractable. In contrast, our learned neural operator policy evaluates in a fraction of a millisecond, demonstrating the critical advantage of amortizing the optimization cost during offline training.

\begin{table}[htbp]
    \centering
    \caption{NMPC Performance and Timing Statistics for the 1D KS Equation}
    \label{tab:nmpc_ks1d_performance}
    \begin{tabular}{@{}lcc@{}}
        \toprule
        \textbf{Parameter / Metric} & \textbf{Mean} & \textbf{Std. Dev.} \\
        \midrule
        \multicolumn{3}{@{}l}{\textit{System Setup}} \\
        Samples Evaluated & 10 & \\
        Rollout Steps & 400 & \\
        Prediction Horizon & 20 & \\
        \midrule
        \multicolumn{3}{@{}l}{\textit{Control Performance}} \\
        Final Window MSE & $1.3797 \times 10^{-4}$ & $4.0648 \times 10^{-5}$ \\
        \midrule
        \multicolumn{3}{@{}l}{\textit{Timing Statistics (ms)}} \\
        Per-step MPC Solve Time\textsuperscript{a} & 592.09 & 27.91 \\
        Full Rollout Time & 237105.59 & \\
        \bottomrule
    \end{tabular}
    
    \vspace{1ex}
    \raggedright \footnotesize 
    \textsuperscript{a} Optimization only. The absolute minimum solve time was 558.96 ms, and the maximum was 1730.05 ms.
\end{table}

\end{document}